\pdfoutput=1
\documentclass[a4paper,UKenglish,cleveref, autoref, thm-restate]{lipics-v2019}


\bibliographystyle{plainurl}

\title{Settling the relationship between Wilber's bounds for dynamic optimality} 


\author{Victor Lecomte}{Columbia University, USA}{vl2414@columbia.edu}{}{Research supported by a fellowship of the Belgian American Educational Foundation.}

\author{Omri Weinstein}{Columbia University, USA}{omri@cs.columbia.edu}{}{Research supported by NSF CAREER award CCF-1844887.}

\authorrunning{V. Lecomte and O. Weinstein} 

\Copyright{Victor Lecomte and Omri Weinstein} 

\ccsdesc[500]{Theory of computation~Data structures design and analysis}

\keywords{data structures, binary search trees, dynamic optimality, lower bounds}

\newif\ifarxiv
\arxivtrue

\ifarxiv
    \relatedversion{This is the full version of the paper. The version published in the proceedings of ESA 2020 doesn't contain Section~\ref{sec:speculation}. The versions are otherwise identical.}
\else
    \relatedversion{A full version of the paper is available on arXiv \cite{fullversion} at \url{https://arxiv.org/abs/1912.02858}. This conference version doesn't contain the last section, which relates the \FB{} to the Independent Rectangle bound. The versions are otherwise identical.}
\fi



\acknowledgements{We want to thank the anonymous reviewers for their enthusiastic feedback and the numerous typos they spotted.}

\nolinenumbers 

\ifarxiv
    \hideLIPIcs  
\fi

\EventEditors{Fabrizio Grandoni, Peter Sanders, and Grzegorz Herman}
\EventNoEds{3}
\EventLongTitle{28th Annual European Symposium on Algorithms (ESA 2020)}
\EventShortTitle{ESA 2020}
\EventAcronym{ESA}
\EventYear{2020}
\EventDate{September 7--9, 2020}
\EventLocation{Pisa, Italy (Virtual Conference)}
\EventLogo{}
\SeriesVolume{173}
\ArticleNo{72}

\usepackage{tikz}

\usetikzlibrary{angles,quotes,decorations.pathreplacing,calc,decorations.markings,shapes.misc,positioning,patterns}
\tikzstyle{dot} = [circle, fill, minimum size=.1cm, inner sep=0]
\tikzset{cross/.style={cross out, draw, 
         minimum size=2*(#1-\pgflinewidth), 
         inner sep=0pt, outer sep=0pt},
         cross/.default={.9mm}}
\tikzset{every picture/.style={thick, font=\small, scale=0.8}}


\usepackage{mathtools,tabu,booktabs,pifont}
\newtheorem{fact}[theorem]{Fact}
\newtheorem{algorithm}[theorem]{Algorithm}

\definecolor{DarkRed}{rgb}{0.5,0.1,0.1}
\definecolor{DarkBlue}{rgb}{0.1,0.1,0.5}
\definecolor{DarkGreen}{rgb}{0.1,0.5,0.1}
\newcommand{\cmark}{\textcolor{DarkGreen}{\ding{51}}}
\newcommand{\xmark}{\textcolor{DarkRed}{\ding{55}}}

\tikzstyle{point} = [circle, fill, minimum size=.08cm, inner sep=0]
\newcommand{\upRect}{
    \begin{tikzpicture}[scale=0.2]
    \draw (0,0) -- (1,0) -- (1,1) -- (0,1) -- cycle;
    \draw (0,0) -- (1,1);
    \node[point] at (0,0) {};
    \node[point] at (1,1) {};
    \end{tikzpicture}
}
\newcommand{\downRect}{
    \begin{tikzpicture}[scale=0.2]
    \draw (0,0) -- (1,0) -- (1,1) -- (0,1) -- cycle;
    \draw (0,1) -- (1,0);
    \node[point] at (0,1) {};
    \node[point] at (1,0) {};
    \end{tikzpicture}
}

\newcommand{\FB}{Funnel bound}
\newcommand{\IB}{Alternation bound}
\newcommand{\zr}{z-rectangle}

\renewcommand{\th}{^\text{th}}

\newcommand{\blocks}{\mathrm{blocks}}
\newcommand{\mix}{\mathrm{mix}}
\newcommand{\mixValue}{\mathrm{mixValue}}
\newcommand{\ttL}{\mathtt{L}}
\newcommand{\ttR}{\mathtt{R}}
\newcommand{\TL}{{\cT_\ttL}}
\newcommand{\TR}{{\cT_\ttR}}
\newcommand{\xL}{{\x_\ttL}}
\newcommand{\xR}{{\x_\ttR}}
\newcommand{\PL}{{P_\ttL}}
\newcommand{\PR}{{P_\ttR}}
\newcommand{\FL}{{F_\ttL(P,p)}}
\newcommand{\FR}{{F_\ttR(P,p)}}

\newcommand{\OPT}{\mathsf{OPT}}
\newcommand{\IRB}{\mathsf{IRB}}
\newcommand{\upIRB}{\mathsf{IRB}_{\upRect}}
\newcommand{\downIRB}{\mathsf{IRB}_{\downRect}}
\newcommand{\addUp}{\mathrm{add}_{\upRect}}
\newcommand{\addDown}{\mathrm{add}_{\downRect}}

\newcommand{\ALT}{\mathsf{Alt}}
\newcommand{\W}{\mathsf{Funnel}}
\newcommand{\Funnel}{\W}
\newcommand{\zRects}{\mathsf{zRects}}
\newcommand{\spl}{\mathrm{split}}
\newcommand{\rot}[1]{#1^{\perp}}
\newcommand{\rotrot}[1]{#1^{\perp\perp}}
\newcommand{\rect}[1]{\square #1}

\newcommand{\rev}[1]{\overline{#1}}

\newcommand{\R}{\mathbb{R}}
\DeclarePairedDelimiter\floor{\lfloor}{\rfloor}

\newcommand{\x}{X}
\newcommand{\xTilde}{\tilde{\x}}

\newcommand{\cT}{\mathcal{T}}

\begin{document}

\maketitle

\begin{abstract}
In FOCS 1986, Wilber proposed two combinatorial lower bounds  
on the operational cost of any binary search tree (BST) for a given access sequence $\x \in [n]^m$.  
Both bounds play a central role in the ongoing pursuit of the \emph{dynamic optimality conjecture} 
(Sleator and Tarjan, 1985), but their relationship remained unknown for more than three decades. We show that 
Wilber's \emph{\FB} dominates his \emph{\IB} for all $\x$, and give a tight 
$\Theta(\lg\lg n)$ separation for some $\x$, answering Wilber's conjecture and an open problem of Iacono, Demaine et. al. 
The main ingredient of the proof is a new \emph{symmetric} characterization of Wilber's \FB{}, which proves that 
it is invariant under \emph{rotations} of $\x$. 
We use this characterization to provide initial indication that the \FB{} matches the Independent Rectangle 
bound (Demaine et al., 2009), 
by proving that when the Funnel bound is constant, $\upIRB$ is linear.
To the best of our knowledge, our results provide the first progress on Wilber's conjecture 
that the \FB{} is dynamically optimal (1986).
\end{abstract}


\section{Introduction}

The \emph{dynamic optimality conjecture} of Sleator and Tarjan \cite{ST85} postulates the existence of 
an \emph{instance optimal} binary search tree algorithm (BST), namely, an online self-adjusting BST
whose running time\footnote{i.e. the number of pointer movements and tree rotations performed  by the BST} 
matches the best possible running time \emph{in hindsight} for any fixed sequence of queries.
More formally, letting $\cT(\x)$ denote the operational time of a BST algorithm $\cT$ on a sequence $\x= (x_1,\ldots,x_m) \in [n]^m$ 
of keys to be searched, the conjecture says that there is an online BST $\cT$ such that
$\forall\x$, $\cT(\x) \leq O(\OPT(\x))$, where $\OPT(\x) := \min_{\cT'} \cT'(\x)$ denotes the optimal 
offline cost for $X$.
Such instance optimal algorithms are generally impossible, as an offline algorithm that sees the input $\x$ in advance 
can simply ``store the answers'' and output them in $O(1)$ per operation, which is why worst-case analysis is the typical 
benchmark for online algorithms. Nevertheless, 
in the BST model, where the competing class of algorithms are self-adjusting binary search trees,  instance optimality is 
an intriguing possibility.
After 35 years of active research, two BST algorithms are still conjectured to be constant-competitive: 
The first one is the celebrated \emph{splay tree} of \cite{ST85},  
the second one is the more recent \emph{GreedyFuture} algorithm \cite{Luc88,DHIKP09,Munro00}.
However, optimality of both splay trees and GreedyFuture was proven only in special cases, 
 and they are not known to be $o(\lg n)$-competitive for general access sequences $\x$ 
(note that every balanced BST is trivially $O(\lg n)$-competitive).
The best provable result to date on the algorithmic side is an $O(\lg\lg n)$-competitive BST, 
the \emph{Tango Tree} (\cite{DHIKP09} and its subsequent variants \cite{Multisplay,BDDF10}).

The ongoing pursuit of dynamically-optimal BSTs motivated the development of lower bounds on the 
cost of the offline solution $\OPT(\x)$, attempting to capture the ``correct'' complexity measure of a fixed access 
sequence $\x$ in the BST model,  and thereby providing a concrete benchmark for competitive analysis.
Indeed, one defining feature of the dynamic optimality problem (and the reason why it is a viable possibility)
is the existence of nontrivial lower bounds on $\OPT(\x)$ for individual \emph{fixed} access sequences $\x$, as opposed 
to distributional lower bounds.
\footnote{For example, Wilber's \IB{} can be 
used to show that the ``bit-reversal'' access sequence obtained by reversing the binary representation of the 
monotone sequence $\{1,2,3,\ldots, n\}$ has cost $\Omega(\lg n)$ per operation \cite{Wil89}.}  
These lower bounds are all derived from a natural geometric interpretation of the access sequence $X = x_1,\ldots, x_m$
as a point set on the plane, mapping the $i\th$ access $x_i$ to point $(x_i,i)$ (\cite{DHIKP09, IaconoSurvey}, 
see Figure~\ref{fig:geom-view}). The earliest lower bounds on $\OPT(\x)$ were proposed in an influential paper 
of Wilber \cite{Wil89}, and are the main subject of this paper.

\begin{figure}[h]
\centering
\newcommand{\axes}[2]{
    \newcommand{\start}{-.5}
    \draw[->] (\start,0) -- (#1,0) node[pos=0.55, below] {keys};
    \draw[->] (0,\start) -- (0,#2) node[pos=0.55, left] {time};
}
\newcommand{\pointAndCoord}[2]{
\node[cross] (p1) at (#1,#2) {};
\node[gray, right=0mm of p1] {$({\color{black}{{#1}}},#2)$};
}
$
X = (4,1,3,5,4,2)
\quad \longrightarrow \quad
\vcenter{\hbox{
\begin{tikzpicture}[scale=0.5]
\axes{6}{7}
\pointAndCoord{4}{1}
\pointAndCoord{1}{2}
\pointAndCoord{3}{3}
\pointAndCoord{5}{4}
\pointAndCoord{4}{5}
\pointAndCoord{2}{6}
\draw [decorate,decoration={brace,amplitude=10pt},xshift=9mm,yshift=0pt]
(6,6.5) -- (6,0.5) node [midway,xshift=8mm] 
{$G_X$};
\end{tikzpicture}
}}
$
\caption{Transforming $X$ into its geometric view $G_X$}\label{fig:geom-view}
\end{figure}

\paragraph*{The \IB{}} Wilber's first lower bound, the \emph{\IB} $\ALT_\cT(\x)$, 
counts the total number of left/right alternations obtained by searching the keys $\x=(x_1,\ldots, x_m)$ 
on a \emph{fixed} (static) binary search tree $\cT$, where alternations are summed up over all nodes 
$v\in \cT$ of the ``reference tree'' $\cT$ (see Figure \ref{fig:alt} and the formal definition in Section~\ref{sec:prelims}). 
Thus, the \IB{} is actually a family of lower bounds, optimized by the choice of the 
reference tree $\cT$, and we henceforth define $\ALT(\x) \coloneqq \max_\cT \ALT_\cT(\x)$.
This lower bound played a key role in the design and analysis of 
Tango trees and their variants \cite{DHIP07, Multisplay}, whose operational cost is in fact shown to be 
$O(\lg\lg n)\cdot \ALT_\cT(\x) \leq O(\lg\lg n)\cdot \OPT(\x)$ (when setting the reference tree $\cT$ to be 
the canonical balanced BST on $[n]$). Unfortunately, this bound is not tight,  
as we show that there are access sequences $\xTilde$ for which
$\ALT_\cT(\xTilde) \leq O(\OPT(\xTilde)/\lg \lg n)$   
\emph{simultaneously} for all choices of reference trees $\cT$ 
(previously, this was known only for any \emph{fixed} $\cT$ \cite{IaconoSurvey}), 
and hence the combined bound $\ALT(\x)$ does not capture dynamic 
optimality in general. 
Nevertheless, the algorithmic interpretation of the \IB{} is an interesting proof-of-concept of how 
lower bounds can lead to new and interesting online BST algorithms.

\begin{figure}[h]
\centering
\begin{tabu}{c}
\begin{tikzpicture}
\node[circle,draw] (u) at (3.125,4) {$u$};
\node[circle,draw] (v) at (1.75,3) {$v$};
\node[circle,draw] (w) at (4.5,3) {$w$};
\node[circle,draw] (x) at (2.5,2) {$x$};
\node[circle,draw] (k1) at (1,2) {$1$};
\node[circle,draw] (k2) at (2,1) {$2$};
\node[circle,draw] (k3) at (3,1) {$3$};
\node[circle,draw] (k4) at (4,2) {$4$};
\node[circle,draw] (k5) at (5,2) {$5$};
\newcommand{\arrowL}[2]{\draw[->] (#1) -- (#2) node[pos=.5, above left=-1mm and -1mm] {$\ttL{}$};}
\newcommand{\arrowR}[2]{\draw[->] (#1) -- (#2) node[pos=.5, above right=-1mm and -1mm] {$\ttR{}$};}
\newcommand{\children}[3]{\arrowL{#1}{#2}\arrowR{#1}{#3}}
\children{u}{v}{w}
\children{v}{k1}{x}
\children{w}{k4}{k5}
\children{x}{k2}{k3}
\end{tikzpicture}\\
reference tree $\cT$\\
\end{tabu}
\quad
\begin{tabu}{cccc}
Node & Link used by each access & Group by letter& $\#$\\
\midrule
$u$ & $\mathtt{R,L,L,R,R,L}$ & $\mathtt{[R],[L,L],[R,R],[L]}$ & 4\\%
$v$ & $\mathtt{L,R,R}$ & $\mathtt{[L],[R,R]}$ & 2\\
$w$ & $\mathtt{L,R,L}$ & $\mathtt{[L],[R],[L]}$ & 3\\
$x$ & $\mathtt{R,L}$ & $\mathtt{[R],[L]}$ & 2\\
\midrule
Total & & & 11\\
\end{tabu}

\caption{For access sequence $X = (4,1,3,5,4,2)$ and reference tree $\cT$, $\ALT_\cT(X) = 11$.}\label{fig:alt}
\end{figure}

\paragraph*{The \FB{}} The definition of Wilber's second bound, the \emph{\FB{}}, is less intuitive (and as such, 
was much less understood prior to this work).  
Let $G_\x$ be the set of $m$ points in the plane given by the map $x_i \mapsto (x_i,i)$. 
The \emph{funnel} of a point $p \in G_\x$ is the set of ``orthogonally visible'' 
points below $p$, i.e. points $q$ such that the axis-aligned rectangle with corners at $p$ and $q$ contains no other points (see Figure~\ref{fig:fb}).
For each $p$, look at the points in the funnel of $p$ sorted by $y$ coordinate, and count the number 
of alternations from the left to the right of $P$ that occur. 
Call this $f(p)$; this is $p$'s contribution to the lower bound.
Summing this value for all $p \in G_\x$  gives the lower bound $\W(\x) := \sum_{p \in G_\x} f(p)$. 
An algorithmic view of this bound is as follows:
consider the algorithm that simply brings each $x_i$ to the root by a series of single rotations.
Then $f(p)$ for $p=(x_i,i)$
is exactly the number of \emph{turns} on the path from the root to $x_i$ right before it is accessed \cite{AM78, IaconoSurvey}.
This view emphasizes the amortized nature of the funnel bound: at any point, there could be 
linearly many keys in the tree that are only \emph{one} turn away from the root, so one can only 
hope to achieve this bound in some amortized fashion. This partially explains why Wilber's second
bound has been so elusive to analyze (more on this interpretation can be found in the recent work of \cite{LT19}). 

\begin{figure}[h]
\centering
\newcommand{\diskColor}{gray}
\newcommand{\shadowColor}{lightgray}
\newcommand{\bottom}{0}
\newcommand{\funnel}[3]{
    \fill[gray] (#1,#2) circle (3mm);
    \node at (#1,#2) (C) {};
    \node[above=0mm of C] {#3};
}
\newcommand{\funnelL}[2]{
    \fill[lightgray] (#1,#2) rectangle (-0.5,\bottom);
    \funnel{#1}{#2}{$\ttL$}
}
\newcommand{\funnelR}[2]{
    \fill[lightgray] (#1,#2) rectangle (8.5,\bottom);
    \funnel{#1}{#2}{$\ttR$}
}

\newcommand{\access}[2]{\node[cross] (p#2) at (#1,#2) {};}
\begin{tabu}{c}
\begin{tikzpicture}[scale=0.5]
\funnelL{3}{3}
\funnelL{2}{7}
\funnelL{1}{8}
\funnelR{5}{4}
\funnelR{7}{6}
\fill[\shadowColor] (-0.5,\bottom) rectangle (8.5,1);
\access{4}{1}
\access{6}{2}
\access{3}{3}
\access{5}{4}
\access{1}{5}
\access{7}{6}
\access{2}{7}
\access{1}{8}
\access{4}{9}
\access{6}{10}
\access{3}{11}
\node[above=0mm of p9] {$p$};
\end{tikzpicture}\\
the funnel of $p$ has 5 points (highlighted)
\end{tabu}
\quad
\begin{tabu}{l}
Sorted by increasing $y$-coordinate: $\mathtt{L,R,R,L,L}$.\\
This forms 3 groups $\mathtt{[L],[R,R],[L,L]}$, so $f(p) = 3$.
\end{tabu}

\caption{Computing $f(p)$ for $p=(4,9)$ in the geometric view of $X=(4,6,3,5,1,7,2,4,6,3)$. Notice how the funnel points form a staircase-like front on either side of $p$.}\label{fig:fb}
\end{figure}

Wilber conjectured that
$\W(\x)\geq \Omega(\ALT(\x))$ for every access sequence $\x$, and that 
the \FB{} is in fact \emph{dynamically optimal}, i.e., that $\W(\x) = \Theta(\OPT(\x)) \; \forall \x$.
These conjectures were echoed multiple times in the long line of research spanning dynamic optimality 
(see e.g., \cite{DHIKP09,IaconoSurvey,CGKMS15,KS18}). Very recently, Levy and Tarjan 
\cite{LT19} gave a compelling intuitive explanation for why $\W(\x)$ is related to the amortized analysis 
of splay trees (see Section 4). Despite all this, the \FB{} remained elusive and no progress 
was made on Wilber's conjectures for nearly 40 years (To the best of our knowledge, the only properties 
that were previously known about the Funnel bound is that it is optimal in the ``key-independent'' setting \cite{Iacono05} 
and ``approximately monotone'' \cite{LT19}, both are prerequisites for dynamic optimality.)

Our main contribution affirmatively answers Wilber's first question, and settles the 
relationship between the \IB{} and the \FB{}:
\begin{theorem}[$\W$ dominates $\ALT$] \label{thm:domination}
For every access sequence $\x$ without repeats\footnote{As explained at the beginning of 
Section~\ref{sec:prelims}, it is fine for our purposes to focus on access sequences where each value appears only once.} 
and for every tree $\cT$, $\ALT_\cT(\x) \leq O(\W(\x) + m)$.
\end{theorem}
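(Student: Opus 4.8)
First I would peel off the additive $m$. Recall that in our setting $\cT$ has $O(m)$ nodes (keys at the leaves). For a node $v$, write $\mathrm{sw}(v)$ for the number of places at which the $\ttL/\ttR$-string of $v$ changes letter; then the number of groups (blocks) at $v$ is $1+\mathrm{sw}(v)$ if $v$ is touched by some access and $0$ otherwise. Summing over nodes, and using that at most $O(m)$ of them are touched,
\[
\ALT_\cT(\x)\;=\;\sum_{v\text{ touched}}\bigl(1+\mathrm{sw}(v)\bigr)\;\le\;O(m)\;+\;\sum_v \mathrm{sw}(v),
\]
so it is enough to prove $\sum_v \mathrm{sw}(v)=O\bigl(m+\W(\x)\bigr)$.

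\paragraph*{Switches as pairs of accesses, with a funnel witness.}
The next step is to read each switch geometrically. I would identify a switch occurring at a node $v$ with the unique pair $(i,j)$, $i<j$, of consecutive accesses through $v$ that use different links; then $v$ is forced to be $v=\mathrm{lca}_\cT(x_i,x_j)$, and conversely a pair $(i,j)$ with $i<j$ and $x_i\ne x_j$ records a switch exactly when no access at a time strictly between $i$ and $j$ has its key in the subtree of $v=\mathrm{lca}_\cT(x_i,x_j)$. Since that subtree is a key-interval containing both $x_i$ and $x_j$, hence containing $[\min(x_i,x_j),\max(x_i,x_j)]$, this ``nothing in between'' condition in particular forbids any point whose $x$-coordinate lies in that interval between times $i$ and $j$; equivalently, $p_i$ lies in the funnel of $p_j$, and it sits on the side of $p_j$ \emph{opposite} to the child of $v$ that contains $x_j$. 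Thus each switch injects into the set of (funnel point, incident side) pairs, and for each fixed $j$ the switches ``ending at $j$'' inject into the points of the funnel of $p_j$.

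\paragraph*{Amortized charge to funnel alternations --- the main obstacle.}
What remains is to charge the switches to funnel \emph{alternations} (not merely points), $O(1)$ switches per alternation, up to an additive $O(m)$; equivalently, to show that $\sum_v\mathrm{sw}(v)$ and $\W(\x)=\sum_{p\in G_\x} f(p)$ agree up to $O(m)$. This is where the real work lies, and it cannot be done pointwise: a single point $p_j$ can be the upper endpoint of $\Omega(n)$ switches while $f(p_j)=1$ --- for instance with $\cT$ the left path and $\x=(n,2,3,\dots,n-1,1)$, every internal node switches at the last access, yet the funnel of the last point is monochromatic --- so the charge has to be spread across many different funnels. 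The route I would pursue is to exploit that consecutive switches \emph{at the same node} $v$ always have opposite new directions (forced by $v$'s $\ttL/\ttR$-string), pair them up there, and show that each such pair of same-node switches, together with the block of same-direction accesses lying between them, creates an alternation in the funnel of a single uniquely-determined point, arranged so that every funnel alternation is produced $O(1)$ times and all unpaired switches and boundary effects go into the $O(m)$ slack. I expect the cleanest way to make this bookkeeping go through is to first rewrite $\W(\x)$ via the new \emph{symmetric} characterization of the funnel bound established in this paper --- which recasts $\W$ as a count of unordered geometric configurations, and so is far more convenient to match against the ordered switch-pairs $(i,j)$ --- and only then perform the combinatorial accounting; getting this correspondence exactly right, rather than merely up to a non-constant factor, is the crux of the whole argument.
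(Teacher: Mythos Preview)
Your setup is sound: peeling off the additive $O(m)$ via the block/switch decomposition is exactly right, and your observation that every switch at $v=\mathrm{lca}_\cT(x_i,x_j)$ forces $p_i$ to lie in the funnel of $p_j$ is correct and is essentially the starting point of the paper's argument as well. You also correctly diagnose the obstacle --- your left-path example is a valid witness that a pointwise charge from switches to funnel alternations at the \emph{upper} endpoint cannot work.

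The gap is in the step you flag as the crux. You propose to pair two consecutive switches at the same node $v$ --- say $p_i\in\PL$, then a block in $\PR$, then $p_j\in\PL$ --- and claim this ``creates an alternation in the funnel of a single uniquely-determined point.'' If you mean a point of $P$ in the forward time direction, this fails in one of the two sub-cases. When $p_i.x<p_j.x$, yes: $p_i$ sits in the \emph{left} funnel of $p_j$, the last block element $p_{j-1}$ sits in the \emph{right} funnel of $p_j$, and since both are above every funnel point $p_j$ already had within $\PL$, you get a new alternation at $p_j$. But when $p_i.x>p_j.x$, both $p_i$ and $p_{j-1}$ land in the \emph{right} funnel of $p_j$, so no new alternation appears there; nor can it appear at $p_i$ (whose funnel lies entirely below time $i$) or at any block element in a way that is injective across nodes of $\cT$. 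There is no forward-direction point that absorbs this pair.

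The paper's resolution is precisely to break this asymmetry by time-reversal: in the bad sub-case $p_i.x>p_j.x$, flip vertically and observe that $\rev{p_j}$ and the first block element now play the roles of $p_i$ and $p_{j-1}$ for the point $\rev{p_i}$ in $\rev{P}$, yielding a new alternation in $f(\rev{P},\rev{p_i})$. Running this through an induction on $\cT$ gives $\ALT_\cT(P)\le \W(P)+\W(\rev{P})$ (this is Lemma~\ref{lemma:two-sided}). Only \emph{then} does the symmetric \zr{} characterization enter, and its role is not to receive the charge directly but to prove $\W(\rev{P})\le \W(P)+O(m)$ (Corollary~\ref{cor:funnel-reverse}), collapsing the two-sided bound to the one-sided statement. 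So your instinct to invoke the symmetric characterization is right, but it is used one level higher than you suggest: not as the target of a direct switch-to-\zr{} charge (a single pair of same-node switches does not in general produce a \zr{}), but to equate $\W(P)$ with $\W(\rev{P})$ after the two-sided inequality is in hand.
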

\begin{theorem}[Tight separation]\label{thm:separation}
There is an access sequence $\xTilde$ for which $\W(\xTilde) \geq \Omega(\lg\lg n) \cdot (\ALT_\cT(\xTilde)+m)$ 
simultaneously for all trees $\cT$.
\end{theorem}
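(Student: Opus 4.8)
The plan is to build $\xTilde$ by a recursive ``square-root'' construction with $\Theta(\lg\lg n)$ levels, arranged so that each level contributes an additive $\Omega(m/\lg\lg n)$ to the \FB{} while leaving all \IB s linear. Concretely (setting $n=2^{2^k}$ so $\lg\lg n\approx k$), partition $[n]$ into $\sqrt n$ consecutive blocks $B_1,\dots,B_{\sqrt n}$ of size $\sqrt n$, take a recursively-defined sequence $\xTilde'$ on $[\sqrt n]$, and let $\xTilde$ consist of $\sqrt n$ ``rounds'': in round $t$ we access one key from each block in increasing block order, the key used in $B_j$ being the image of $\xTilde'_t$ under the order-preserving map $[\sqrt n]\to B_j$. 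Then $m=n$, all accesses are distinct, the subsequence of $\xTilde$ landing inside any single $B_j$ is an exact copy of $\xTilde'$, and each individual round is an arithmetic progression of common difference $\sqrt n$ (hence monotone in block order). The claim, proved by induction on $k$, is that $\ALT_\cT(\xTilde)=O(m)$ for every $\cT$ while $\W(\xTilde)=\Omega(m\cdot k)$.

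For the lower bound I would first note that the \emph{within-block} funnels survive the interleaving: since the key-ranges of the blocks are pairwise disjoint, the rectangle spanned by two points of $G_\xTilde$ with keys in the same $B_j$ contains no point of another block, so the funnel of $p$ restricted to $B_j$-points is exactly the funnel of the corresponding point in $G_{\xTilde'}$; summing over all $p$ and all $j$ gives a contribution $\sqrt n\cdot\W(\xTilde')$. Then I would show that for all but $O(\sqrt n)$ of the points $p$ the full funnel contributes at least one \emph{extra} alternation group beyond this within-block count: the access immediately preceding $p$ (the $B_{j-1}$-slot of the same round) is visible from $p$ and lies to its left, while the tail $B_{j+1},\dots,B_{\sqrt n}$ of the previous round is visible and lies to its right, and both of these sit strictly above the entire within-block staircase in $y$-order. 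This yields $\W(\xTilde)\ge \sqrt n\cdot\W(\xTilde')+\Omega(m)$, i.e.\ $\W(\xTilde)/m\ge \W(\xTilde')/m'+\Omega(1)$, which unrolls over the $\Theta(\lg\lg n)$ levels to $\W(\xTilde)=\Omega(m\lg\lg n)$. The rotation-invariant characterization of $\W$ from the earlier sections can be used here to simplify the visibility bookkeeping by picking the more convenient of the two orientations.

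For the upper bound $\ALT_\cT(\xTilde)=O(m)$ \emph{for all} $\cT$, classify each $v\in\cT$ by how its key-interval $I_v$ meets the block partition. If $I_v$ lies inside one block $B_j$, then the accesses routed through $v$ form a subsequence of the copy of $\xTilde'$ in $B_j$, so the total \IB{} contribution of all such ``small'' nodes is at most $\sum_j\ALT_{\cT_j}(\xTilde')\le\sqrt n\cdot O(m')=O(m)$ by the inductive hypothesis (completing each intra-block forest to a spanning BST of $B_j$ only increases the count). For a ``large'' node $v$, whose interval spans several blocks, monotonicity of each round in block order gives at most a constant number of alternations per round, so $\mathrm{int}_v$ is, up to constants, the number of rounds in which both $I_v\cap(-\infty,c_v)$ and $I_v\cap(c_v,\infty)$ receive an access; since each key of $[n]$ is an AP-value in exactly one round, this is at most $\min(|\text{left}(v)|,|\text{right}(v)|,\sqrt n)$. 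It then remains to show $\sum_{v\ \text{large}}\min(|\text{left}(v)|,|\text{right}(v)|,\sqrt n)=O(m)$, which I would do by a charging argument: a node that forces an alternation in \emph{every} round must contain a full block strictly on each side of its key, distinct such nodes are LCAs of distinct, nestedly-structured pairs of blocks, and the key-mass they consume telescopes.

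The main obstacle is exactly this last step — bounding $\ALT_\cT$ over \emph{all} reference trees at once (the fixed-$\cT$ case was already known). An adversarial $\cT$ can have many nodes whose subtrees straddle block boundaries, and one must argue that such a node is either ``unbalanced'' (mostly on one side of its key, hence contributes few alternations) or consumes a proportional share of the $\sqrt n$-block budget, so that the contributions sum to $O(m)$ rather than, say, $O(m\lg n)$ or $O(m\sqrt n)$. Making the charging/telescoping close cleanly, without leaking an extra logarithmic or polynomial factor, is the delicate part; everything else is routine case analysis.
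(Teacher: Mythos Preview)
Your lower-bound half is fine: the within-block funnel points survive the interleaving (because block key-ranges are disjoint), and for most points the out-of-block accesses from round $t-1$ (all $\ttR$) followed by the immediate predecessor in round $t$ ($\ttL$) do sit strictly above the in-block staircase and force at least one extra alternation. The recursion $\W(\xTilde)/m\ge\W(\xTilde')/m'+\Omega(1)$ therefore goes through and yields $\W(\xTilde)=\Omega(m\lg\lg n)$.

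The upper bound, however, is not merely ``delicate'' --- your construction does not satisfy it. Take $\cT$ to be the perfectly balanced tree on $[n]$. At every large node $v$ (levels $0,\dots,\tfrac12\lg n-1$), both children contain at least one full block, so \emph{every} round hits both sides; since rounds are monotone, the pattern at $v$ is $\ttL^{a_1}\ttR^{b_1}\ttL^{a_2}\ttR^{b_2}\cdots$ with all $a_t,b_t>0$, giving $\mixValue\approx 2\sqrt n$. Summing over the $\approx\sqrt n$ large nodes gives a $\Theta(n)=\Theta(m)$ contribution from large nodes \emph{at this level alone}. Since the small-node contribution is exactly $\sqrt n$ copies of the same problem on $\xTilde'$, the recursion reads $\ALT_{\text{balanced}}(\xTilde)\approx 2m+\sqrt n\cdot\ALT_{\text{balanced}'}(\xTilde')$, hence $\ALT_{\text{balanced}}(\xTilde)=\Theta(m\lg\lg n)$ --- matching your Funnel lower bound and killing the separation. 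The leak is not an artifact of a loose charging argument; it is built into the sequence: an arithmetic progression of step $\sqrt n$ is perfectly \emph{uniform}, so any balanced split sees half of it on each side, and every round forces an alternation at every large node.

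This is exactly the failure mode the paper's construction is designed to avoid. Instead of arithmetic progressions, the paper uses \emph{geometrically} spaced key sets $V_i=\{i+1,i+2,i+4,\dots,i+2^{K-1}\}$ (with $K=\lg n$), and then averages over the offset $i$. The point of geometric spacing is that for any fixed split of an interval of size $s$ into pieces of sizes $s_\ttL,s_\ttR$, the smaller side receives on average only $O\bigl(\min(s_\ttL,s_\ttR)\cdot\lg\tfrac{s}{\min(s_\ttL,s_\ttR)}\bigr)$ keys of $V_i$ --- this is Claim~\ref{claim:min-li-ri}, and it is where all the work is. That concavity is precisely what makes the recursion close as $s\lg s$ rather than blow up by a constant per level. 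Your arithmetic progressions give $\Theta(\min(s_\ttL,s_\ttR)/\sqrt n)$ on the smaller side with no such concavity, so no charging argument over the tree can rescue the bound; the construction itself has to change.
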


The latter separation is tight up to constant factors, since Tango trees imply that $\OPT(\x) \leq O(\lg\lg n)\cdot \ALT(\x)$. 
An interesting corollary of Theorem \ref{thm:separation} is that the analysis of Tango trees cannot be improved by 
choosing \emph{any} reference tree, answering an open question of Iacono \cite{IaconoSurvey}. (One 
attractive idea is to choose a \emph{random} reference tree instead of the canonical balanced BST, but Theorem 
\ref{thm:separation} shows that this will not help in general.)

\paragraph*{A symmetric characterization of the Funnel bound} 

The geometric equivalence of dynamic optimality (through ``arborally satisfied'' rectangles \cite{DHIKP09}) makes it clear 
that $\OPT(X)$ is 
\emph{invariant} under geometric transformations of the access sequence $X$. Indeed, a fundamental barrier in understanding the 
\FB{} and its claim to optimality is that
it was unclear whether Wilber's  
bounds were invariant under \emph{rotations} of the access sequence $X$. Demaine et al. explicitly pointed out this challenge:

\begin{quote}
``It is also unclear how [Wilber's] bounds are affected by 90-degree rotations of the point set representing the access sequence and, for 
the Funnel bound, by \emph{flips}. Computer search reveals many examples where the bounds change slightly, and proving that they change 
by only a constant factor seems daunting.'' \cite{DHIKP09}
\end{quote}
 
This shows that \emph{exact} symmetry of $\W(\x)$ is hopeless, and  can only hold in some  `amortized' sense. 
Indeed, the heart of our paper, which is also a key ingredient in the proof of Theorem \ref{thm:domination}, 
is a new \emph{symmetric} characterization of the Funnel bound, which proves that, up to a $\pm O(m)$ additive term, 
it is indeed invariant to rotations. More formally, we show that for any access sequence $\x$, $\W(\x)$ is asymptotically 
equal to the number of occurrences in $G_X$ of a configuration of 4 points that we call 
a \textbf{\zr{}}\footnote{We thank an anonymous reviewer for informing us that \zr{}s have been discussed in the past under the name ``pinwheel configuration'', though (to the best of their knowledge) never in writing.} (see Figure~\ref{fig:zrect}).

\begin{figure}[h]
\centering

\newcommand{\access}[2]{\node[cross] (p#2) at (#1,#2) {};}
\begin{tabu}{c}
\begin{tikzpicture}[scale=0.5]
\draw[gray] (1,1) rectangle (4,4);
\access{3}{1}
\access{1}{2}
\access{4}{3}
\access{2}{4}
\end{tikzpicture}\\
\cmark\\
\zr{}\\
\end{tabu}
\quad
\begin{tabu}{c}
\begin{tikzpicture}[scale=0.5]
\draw[gray] (1,1) rectangle (4,4);
\access{3}{1}
\access{1}{2}
\access{4}{3}
\access{2}{4}
\access{2.5}{2.5}
\end{tikzpicture}\\
\xmark\\
wrong\\
\end{tabu}
\quad
\begin{tabu}{c}
\begin{tikzpicture}[scale=0.5]
\draw[gray] (1,1) rectangle (4,4);
\access{2}{1}
\access{1}{2}
\access{4}{3}
\access{3}{4}
\end{tikzpicture}\\
\xmark\\
wrong\\
\end{tabu}
\caption{A \zr{} is a configuration of 4 points. Its interior must be empty, and the relative order of the four points matters.}\label{fig:zrect}
\end{figure}

A crucial difference between \zr{}s and the notion of \emph{independent rectangles}  \cite{DHIKP09}
is that the latter have to satisfy additional \emph{independence} constaints across several rectangles,
whereas \zr{}s have no ``global'' constraints whatsoever. 
In other words, \zr{}s are a \emph{local} feature of the access sequence, in the
sense that their existence and contribution to the lower bound are unaffected by other \zr{}s and by points outside of it. 
We believe this key property will make the analysis of online BST algorithms
more tractable, as it gives a simpler competitive benchmark. We next describe an initial step in this direction. 

\paragraph*{Towards dynamic optimality of the Funnel Bound}
One consequence of the simplicity of the \zr{} characterization of the Funnel bound is that it makes it easier to compare it 
both to other BST lower bounds and to candidate algorithms for dynamic optimality.  As a proof of concept, we show that when 
there is only a constant number of \zr{} in $G_X$, then
$\upIRB(X)$ is linear, where $\upIRB$ is one of the terms in the \emph{Independent Rectangle} bound
$\IRB(X) := \upIRB(X) + \downIRB(X)$, which is known to dominate both of Wilber's bounds \cite{DHIKP09} (we define $\upIRB(X)$
\ifarxiv
    in Section~\ref{sec:speculation}).
\else
    in the last section of the full version~\cite{fullversion}).
\fi
More formally, 
\begin{theorem}\label{thm:speculation}
If $G_X$ contains $O(1)$ \zr{}s, then $\upIRB(X) \leq O(m)$.
\end{theorem}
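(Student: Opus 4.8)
The plan is to bound, for an arbitrary independent family $\cR$ of up-rectangles in $G_X$ (axis-aligned rectangles whose two diagonal corners lie in $G_X$ and are its south-west and north-east corners), the size $|\cR|$ by $O(m)$; since $\upIRB(X)$ is, up to an additive $O(m)$, the maximum such $|\cR|$, this gives the theorem. For orientation: by the symmetric characterization of $\W$ the hypothesis already implies $\W(X) = O(m)$, so Theorem~\ref{thm:speculation} is exactly what one expects if $\upIRB(X) = \Theta(\W(X) + m)$; but because $\upIRB$ also encodes \emph{global} independence constraints (absent from the ``local'' \zr{} count), we bound it directly, in two parts --- a structural description of \zr{}-sparse point sets, and a charging argument against that structure.

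\emph{Structure of \zr{}-sparse point sets.} Since $X$ has no repeats, $G_X$ is, after compressing coordinates, a permutation matrix and has a substitution (modular) decomposition tree whose internal nodes are direct sums $\oplus$, skew sums $\ominus$, or ``simple'' nodes of arity at least $4$, and whose leaves are points. I would first show that a \zr{}-free point set is \emph{separable}, i.e., uses only $\oplus$ and $\ominus$: the content here is that if a point set contains a ``pinwheel'' $4$-point occurrence (the permutation patterns $2413$ or $3142$) then it contains a \zr{} --- a shrinking argument in which a pinwheel whose bounding box contains a point $e$ is replaced, after a short case analysis on the location of $e$, by a pinwheel with strictly fewer interior points, so that a box-minimal one is empty. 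I would then prove the quantitative version: if $G_X$ has only $O(1)$ \zr{}s, every simple node of its decomposition has arity $O(1)$ and there are only $O(1)$ of them, because blowing up a simple permutation of arity $k$ by nontrivial blocks already produces $\Omega(k)$ \zr{}s (in each of four blocks forming a pinwheel in the skeleton, take the appropriate extreme corner; one checks block by block that the resulting box is empty). Thus $G_X$ is built from single points using $\oplus$, $\ominus$, and $O(1)$ constant-size ``simple gadgets''.

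\emph{Charging.} Charge each rectangle of $\cR$ to the internal node of the decomposition that first separates its two corners. A skew node receives nothing: a crossing rectangle would have its two corners in the north-west and south-east blocks, but the corners of an up-rectangle are ordered south-west to north-east, which is incompatible with either assignment. Each of the $O(1)$ constant-size simple gadgets is charged only $O(m)$ rectangles, since a crossing rectangle there is determined by an ordered pair among its $O(1)$ blocks together with two of its corners, and independence forces those corners into an antichain of size $\le m$. It remains to bound the total charge on the direct-sum nodes: at a $\oplus$ node with blocks $L$ (south-west) and $R$ (north-east), independence forces the family of crossing up-rectangles to be an antichain in a product order built from the ``staircase depths'' of $L$ and $R$, and I would argue that summing these per-node bounds over a separable decomposition telescopes to $O(m)$.

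\emph{Main obstacle.} The hard part is making the two ``independence-limits-crossings'' steps precise --- bounding the total charge on $\oplus$ nodes by $O(m)$, and showing that a simple node cannot be large without forcing many \zr{}s. These are exactly the places where the global independence of the IRB must be traded for a local, \zr{}-counting argument, which is the difficulty the paper singles out. The shrinking lemma, by contrast, is routine but needs care: the empty-box requirement is what makes a \zr{} strictly weaker than an abstract $2413/3142$ occurrence, so the lemma must verify that emptiness can always be restored by shrinking.
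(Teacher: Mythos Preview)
Your central structural claim is false, and the gap is not repairable within the modular-decomposition framework you set up.

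A \zr{} is specifically an empty-box occurrence of the pattern $3142$ (in time order the keys read $3,1,4,2$); it is \emph{not} an occurrence of $2413$. The permutation $2413$ therefore has zero \zr{}s but is simple, hence not separable. Worse, because $3142$ is a simple pattern, the class $\mathrm{Av}(3142)$ is closed under substitution: if you inflate $2413$ recursively by copies of $2413$, or take the simple permutations $\sigma_k = 2,4,6,\ldots,2k,1,3,\ldots,2k-1$ (which are simple and $3142$-avoiding for every $k\ge 2$), you obtain point sets with \emph{zero} \zr{}s whose substitution decomposition has arbitrarily many simple nodes of arbitrarily large arity. So both halves of your claim ``every simple node has arity $O(1)$ and there are only $O(1)$ of them'' fail. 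The step that breaks is exactly ``blowing up a simple permutation of arity $k$ produces $\Omega(k)$ \zr{}s'': this implicitly assumes every simple skeleton contains $3142$, which $2413$ and $\sigma_k$ do not. Your shrinking lemma is fine as far as it goes --- it would correctly turn any $3142$ occurrence into a \zr{} --- but it cannot manufacture a $3142$ from a $2413$.

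The paper's proof is entirely different and does not touch modular decomposition. It works with the sweep-line description of $\upIRB$ (Algorithm~\ref{alg:up-greedy}) and classifies every added point as either (a) rightmost in its row, (b) topmost in its column, or (c) such that the next added point above it in its column shares a row with the top corner $d$ of some \zr{}. Type~(c) comes from a short geometric ``empty slab'' lemma: if $a,b$ are two left-funnel points of $c$, then either the vertical strip $[b.x,a.x]\times[c.y,\infty)$ is empty of accesses, or its lowest access $d$ is the top of a \zr{}. Types (a) and (b) contribute at most $m$ each; type (c) contributes at most $m$ per \zr{}, giving $\upIRB(P)\le O(m)+m\cdot\zRects(P)$.
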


We remark that the proof of this theorem already introduces a nontrivial 
charging argument that could (hopefully) be generalized to prove that $\W$ matches $\IRB$, as conjectured by previous works \cite{IaconoSurvey}. 

\paragraph*{Techniques} At a very high level, 
the main ideas in Theorem~\ref{thm:domination} are to use the self-reducible structure of the \IB{}, and to show that interleaving two access 
sequences $\xL$ and $\xR$ on \emph{disjoint ranges} is a super-additive operation, i.e., it increases the overall value of $\Funnel(X)$ to more than the sum of its parts $\Funnel(\xL)+\Funnel(\xR)$. This argument involves both $X$ and its \emph{reverse} (flip), hence our new symmetric characterization of the \FB{} (through z-rectangles) is key to the proof. The main idea behind Theorem~\ref{thm:separation} is to form hard sequences over geometrically-spaced sets of keys $\{i+1, i+2, i+4, i+8, \ldots\}$, each of which can ``force'' $\ALT_\cT$ to pick a very lopsided reference tree $\cT$. Those sequences can then be concatenated together so that the average value of $\ALT_\cT$ is provably low whichever $\cT$ was picked. Finally, the key idea in Theorem~\ref{thm:speculation} is to study the consequences of the \emph{absence} of \zr{}s on the combinatorial structure of point set $G_X$, and use this to bound the value of $\upIRB(X)$ by a charging argument.

\paragraph*{Remark on independent work} In a concurrent and independent work, Chalermsook, Chuzhoy and Saranurak \cite{CCS19} obtain a (weaker) $\Theta(\lg \lg n/ \lg \lg \lg n)$ separation between $\ALT$ and $\W$, in the same spirit as the tight separation we give in Theorem~\ref{thm:separation}. Our works are otherwise unrelated.

\section{Preliminaries}\label{sec:prelims}
To make our definitions and proofs easier, we will work directly in the geometric representation of access sequences as (finite) sets of points in the plane $\R^2$.
\begin{definition}[geometric view]
Any access sequence $\x=(x_1, \ldots, x_m) \in [n]^m$ can be represented as the set of points $G_X=\{(x_i, i) \mid i \in [n]\}$, where the $x$-axis represents the key and the $y$-axis represents time (see Figure~\ref{fig:geom-view}).
\end{definition}

By construction, in $G_X$, no two points share the same $y$-coordinate. We will say such a set has ``distinct $y$-coordinates''. In addition, we note that it is fine to restrict our attention to sequences $X$ without repeated values.\footnote{Indeed, Appendix~E in \cite{CGKMS15} gives a simple operation that transforms any sequence $\x$ into a sequence $\spl(\x)$ without repeats such that $\OPT(\spl(\x)) = \Theta(\OPT(\x))$. Thus if we found a tight lower bound $L(X)$ for sequences without repeats, a tight lower bound for general $\x$ could be obtained as $L(\spl(\x))$.}
The geometric view $G_X$ of such sequences also has no two points with the same $x$-coordinate. We will say that such a set has ``distinct $x$- and $y$-coordinates''.

\begin{definition}[$x$- and $y$-coordinates]
For a point $p \in \R^2$, we will denote its $x$- and $y$-coordinates as $p.x$ and $p.y$. Similarly, we define $P.x = \{p.x \mid p \in P\}$ and $P.y = \{p.y \mid p \in P\}$.
\end{definition}

We start by defining the \emph{mixing value} of two sets: a notion of how much two sets of numbers are interleaved. It will be useful in defining both the \IB{} and the \FB{}. We define it in a few steps.
\begin{definition}[mixing string]
Given two disjoint finite sets of real numbers $L,R$, let $\mix(L,R)$ be the string in $\{\ttL,\ttR\}^*$ that is obtained by taking the union $L \cup R$ in increasing order and replacing each element from $L$ by $\ttL$ and each element from $R$ by $\ttR$. For example, $\mix(\{2,3,8\},\{1,5\}) = \mathtt{RLLRL}$.
\end{definition}

\begin{definition}[number of blocks]
Given a string $s \in \{\ttL,\ttR\}^*$, we define $\blocks(s)$ as the number of contiguous blocks of the same symbol in $s$. Formally,
\[\blocks(s) \coloneqq 
\begin{cases}
0\text{ if $s$ is empty}\\
1 + \#\{i \mid s_i \neq s_{i+1}\}\text{ otherwise.}
\end{cases}\]
For example,
$\blocks(\mathtt{LLLRLL})=3$. Note that if we insert characters into $s$, $\blocks(s)$ can only increase.
\end{definition}

\begin{definition}[mixing value]
Let $\mixValue(L,R)\coloneqq \blocks(\mix(L,R))$ (see Figure~\ref{fig:mixvalue}).  
\end{definition}

\begin{figure}[h]
\centering
\newcommand{\pointL}[1]{
    \node[dot] (p#1) at (#1,2) {};
    \node[above=1.5mm of p#1] {$#1$};
}
\newcommand{\pointR}[1]{
    \node[dot] (p#1) at (#1,0) {};
    \node[below=1.5mm of p#1] {$#1$};
}
\newcommand{\surround}[3]{
    \pgfmathsetmacro{\rad}{.35}
    \pgfmathsetmacro{\bottom}{#1-\rad}
    \pgfmathsetmacro{\top}{#1+\rad}
    \fill[rounded corners, lightgray] (#2-\rad,\bottom) rectangle (#3+\rad,\top) {};
}
\newcommand{\surroundL}[2]{\surround{2}{#1}{#2}}
\newcommand{\surroundR}[2]{\surround{0}{#1}{#2}}
\begin{tikzpicture}[scale=0.5]
\surroundL{1}{3}
\surroundL{6}{6}
\pointL{1}
\pointL{3}
\pointL{6}
\surroundR{4}{4}
\surroundR{7}{8}
\pointR{4}
\pointR{7}
\pointR{8}
\node at (-0.8,2) {$L$};
\node at (-0.8,0) {$R$};
\end{tikzpicture}
\caption{A visualization of $\mixValue(\{1,3,6\},\{4,7,8\}) = 4$}\label{fig:mixvalue}
\end{figure}

The mixing value has some convenient properties, which we will use later:
\begin{fact}[properties of $\mixValue$]\label{fact:props-mixvalue}
Function $\mixValue(L,R)$ is:
\begin{alphaenumerate}
\item symmetric: $\mixValue(L,R) = \mixValue(R,L)$;
\item monotone: if $L_1 \subseteq L_2$ and $R_1 \subseteq R_2$, then $\mixValue(L_1,R_1) \leq \mixValue(L_2,R_2)$;
\item subadditive under concatenation: if $L_1,R_1 \subseteq (-\infty,x]$ and $L_2,R_2 \subseteq [x,+\infty)$, then $\mixValue(L_1 \cup L_2, R_1 \cup R_2) \leq \mixValue(L_1,R_1) + \mixValue(L_2,R_2)$.
\end{alphaenumerate}
Finally, $\mixValue(L,R) \leq 2\cdot\min(|L|,|R|)+1$.
\end{fact}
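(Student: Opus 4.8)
The plan is to reduce all four claims to two elementary observations about $\blocks$: first, deleting characters from a string never increases its number of blocks (the contrapositive of the insertion property noted right after the definition of $\blocks$); second, $\blocks(s\cdot t) \le \blocks(s)+\blocks(t)$ for any strings $s,t$, which is immediate from the formula $\blocks(s)=1+\#\{i:s_i\neq s_{i+1}\}$ since concatenating $s$ and $t$ creates at most one new unequal adjacent pair (between the last symbol of $s$ and the first of $t$). Throughout, the relevant pairs of sets are assumed disjoint so that $\mixValue$ is defined.

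For part (a), observe that $\mix(R,L)$ is exactly $\mix(L,R)$ with every $\ttL$ replaced by $\ttR$ and vice versa; a symbol-for-symbol relabeling of a binary string preserves its block structure, so $\blocks(\mix(L,R))=\blocks(\mix(R,L))$. For part (b), since $L_1\subseteq L_2$ and $R_1\subseteq R_2$, the string $\mix(L_2,R_2)$ is obtained from $\mix(L_1,R_1)$ by inserting, for each element of $(L_2\setminus L_1)\cup(R_2\setminus R_1)$, its label at the appropriate position in sorted order; inserting characters can only increase $\blocks$, which yields $\mixValue(L_1,R_1)\le\mixValue(L_2,R_2)$.

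Part (c) is the one requiring a small case check. I will show that, because every element of $L_1\cup R_1$ is at most $x$ and every element of $L_2\cup R_2$ is at least $x$, the string $\mix(L_1\cup L_2,\,R_1\cup R_2)$ is obtained from the concatenation $\mix(L_1,R_1)\cdot\mix(L_2,R_2)$ by deleting at most one character. Indeed, in the sorted order of $(L_1\cup L_2)\cup(R_1\cup R_2)$ the elements strictly below $x$ are precisely those of $L_1\cup R_1$ below $x$, and the elements strictly above $x$ are precisely those of $L_2\cup R_2$ above $x$, so the only possible discrepancy with the concatenation occurs at the value $x$ itself. If $x$ belongs to at most one of the four sets, the concatenation already equals $\mix(L_1\cup L_2,\,R_1\cup R_2)$ exactly; the only remaining case consistent with disjointness of $L_1\cup L_2$ from $R_1\cup R_2$ is that $x$ lies in both $L_1$ and $L_2$ (or in both $R_1$ and $R_2$), in which case the concatenation contains the label of $x$ twice in consecutive positions and exactly one of them must be deleted. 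Combining the two observations above then gives $\mixValue(L_1\cup L_2,\,R_1\cup R_2)=\blocks(\mix(L_1\cup L_2,\,R_1\cup R_2))\le\blocks(\mix(L_1,R_1)\cdot\mix(L_2,R_2))\le\mixValue(L_1,R_1)+\mixValue(L_2,R_2)$.

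For the final inequality $\mixValue(L,R)\le 2\min(|L|,|R|)+1$, assume without loss of generality (using (a)) that $|L|\le|R|$ and write $s=\mix(L,R)$. Its blocks strictly alternate between maximal runs of $\ttL$ and maximal runs of $\ttR$; if there are $a$ of the former and $b$ of the latter then $|a-b|\le 1$, and $a\le|L|$ since each $\ttL$-run is nonempty, so $\blocks(s)=a+b\le 2a+1\le 2|L|+1$. The only real (and very mild) obstacle is the boundary bookkeeping around the shared value $x$ in part (c); every other step is a direct consequence of the two facts about $\blocks$ stated at the outset.
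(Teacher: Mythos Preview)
Your proof is correct in all four parts. The paper itself does not supply a proof of this fact; it simply states the properties as evident and moves on, so there is nothing to compare against beyond noting that your argument fills in what the authors left implicit.

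One very minor remark on part~(c): in the only nontrivial case (where $x$ lies in both $L_1$ and $L_2$, or both $R_1$ and $R_2$), the deleted character is one of two \emph{identical} consecutive symbols, so in fact $\blocks$ does not change at all there. Your inequality is of course still valid, but you could even assert equality $\blocks(\mix(L_1\cup L_2,R_1\cup R_2))=\blocks(\mix(L_1,R_1)\cdot\mix(L_2,R_2))$ in every case, making the concatenation bound on $\blocks$ the only place where slack is introduced.
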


We now give precise definitions of Wilber's two bounds.\footnote{These definitions may differ by a constant factor or an additive $\pm O(m)$ from the definitions the reader has seen before.
We will ignore such differences, because the cost of a BST also varies by $\pm O(m)$ depending on the definition, and the interesting regime is when $\OPT(X) = \omega(m)$.}

\begin{definition}[\IB{}]\label{def:ib}
Let $P$ be a point set with distinct $y$-coordinates, and let $\cT$ be a binary tree in which leaves are labeled with
elements of $P.x$
in increasing order, and each non-leaf node has two children.

We define $\ALT_\cT(P)$ using the recursive structure of $\cT$. If $\cT$ is a single node, let $\ALT_\cT(P)\coloneqq 0$. Otherwise, let $\TL$ and $\TR$ be the left and right subtrees at the root. Partition $P$ into two sets $\PL\coloneqq \{p \in P \mid p.x \in \TL\}$ and $\PR \coloneqq \{p \in P \mid p.x \in \TR\}$. Define quantity
\[a(P,\cT) \coloneqq \mixValue(\PL.y, \PR.y),\]
which describes how much $\PL$ and $\PR$ are interleaved in time. Then
\begin{equation}\label{eq:ib}
\ALT_\cT(P) \coloneqq a(P,\cT) + \ALT_\TL(\PL) + \ALT_\TR(\PR).
\end{equation}
In addition, for an access sequence $X$, let $\ALT_\cT(X) \coloneqq \ALT_\cT(G_X)$.
\end{definition}

\begin{definition}[axis-aligned rectangle delimited two points]
Given two points $p$ and $q$ with distinct $x$- and $y$- coordinates, let $\rect{pq}$ be the smallest axis-aligned rectangle that contains both $p$ and $q$. Formally,
\[\rect{pq} \coloneqq [\min(p.x, q.x), \max(p.x, q.x)] \times [\min(p.y, q.y), \max(p.y, q.y)].\]
\end{definition}

\begin{definition}[empty rectangles]
Let $P$ be a point set. Given $p,q \in P$, we say $\rect{pq}$ is 
empty\footnote{This corresponds to the notion of ``unsatisfied rectangle'' in \cite{DHIKP09}.} in $P$ if $P \cap \rect{pq} = \{p,q\}$ (see Figure~\ref{fig:rects}).
\end{definition}

\begin{figure}[h]
\centering

\newcommand{\access}[2]{\node[cross] (p#2) at (#1,#2) {};}
\setlength{\tabcolsep}{5mm}
\begin{tabu}{cc}
\begin{tikzpicture}[scale=0.5]
\draw[gray] (1,1) rectangle (3,4);
\access{1}{1}
\node[left=0mm of p1] {$p$};
\access{3}{4}
\node[right=0mm of p4] {$q$};
\access{0}{3}
\end{tikzpicture}&
\begin{tikzpicture}[scale=0.5]
\draw[gray] (1,1) rectangle (4,3);
\access{1}{3}
\node[left=0mm of p3] {$r$};
\access{4}{1}
\node[right=0mm of p1] {$s$};
\access{2}{2}
\end{tikzpicture}\\
$\rect{pq}$ is empty&
$\rect{rs}$ is \emph{not} empty\\
\end{tabu}

\caption{Some axis-aligned rectangles}\label{fig:rects}
\end{figure}

For the next definitions, it is helpful to refer back to Figure~\ref{fig:fb}. In particular, $\FL$ and $\FR$ (the left and right funnel) correspond to the points marked with $\ttL$ and $\ttR$.

\begin{definition}[left and right funnel]\label{def:lr-funnel}
Let $P$ be a point set. For each $p \in P$, we say that access $q \in P$ is in the left (resp. right) funnel of $p$ within $P$ if $q$ is to the lower left (resp. lower right) of $p$ and $\rect{pq}$ is empty. Formally, let
\[\FL \coloneqq \{q \in P \mid q.y < p.y \,\wedge\, q.x < p.x \,\wedge\, P \cap \rect{pq} = \{p,q\}\}\]
and
\[\FR \coloneqq \{q \in P \mid q.y < p.y \,\wedge\, q.x > p.x \,\wedge\, P \cap \rect{pq} = \{p,q\}\}.\]
We will collectively call $\FL \cup \FR$ the funnel of $p$ within $P$.
\end{definition}

\begin{definition}[\FB{}]\label{def:fb}
Let $P$ be a point set with distinct $y$-coordinates. For each $p \in P$, define quantity
\[f(P,p) \coloneqq \mixValue(\FL.y, \FR.y),\]
which describes how much the left and right funnel of $p$ are interleaved in time.
Then
\[\W(P) \coloneqq \sum_{p \in P} f(P,p).\]
In addition, for an access sequence $X$, let $\W(X) \coloneqq \W(G_X)$.
\end{definition}

\section{The \FB{} dominates the \IB{}}
We prove that $\Funnel$ dominates $\ALT$ in two parts: in Section~\ref{sec:domination1} we show that $\ALT(X)$ is dominated by the sum $\Funnel(X) + \Funnel(\rev{X})$, where $\rev{X}$ is the reverse of $X$, then in Section~\ref{sec:zrects} we prove that $\Funnel(\rev{X}) \approx \Funnel(X)$ using our new characterization of $\Funnel$ by \zr{}s.

\subsection{Upper-bounding the \IB{} by a sum of two \FB{}s}\label{sec:domination1}

\begin{definition}[time reversal]
The time reversal of a point $p \in \R^2$ is $\rev{p} \coloneqq (p.x, -p.y)$.\footnote{The notation is inspired from the notion of complex conjugate, which is also a vertical flip.} The time reversal of a point set $P$ is $\rev{P}\coloneqq \{\rev{p} \mid p \in P\}$ (see Figure~\ref{fig:time-reversal}).
\end{definition}

\begin{figure}[h]
\centering
\newcommand{\access}[2]{
\node[cross] (p#2) at (#1,#2) {};
}
\setlength{\tabcolsep}{10mm}
\begin{tabu}{cc}
\begin{tikzpicture}[scale=0.4]
\access{0}{0}
\access{4}{1}
\node[right=0mm of p1] {$p$};
\access{1}{2}
\access{3}{3}
\access{6}{4}
\access{5}{5}
\access{2}{6}
\end{tikzpicture}&
\begin{tikzpicture}[scale=0.4]
\access{0}{7}
\access{4}{6}
\node[right=0mm of p6] {$\rev{p}$};
\access{1}{5}
\access{3}{4}
\access{6}{3}
\access{5}{2}
\access{2}{1}
\end{tikzpicture}\\[1mm]
$P$ & $\rev{P}$\\
\end{tabu}
\caption{A point set and its time reversal}\label{fig:time-reversal}
\end{figure}

We first prove the following lemma.

\begin{lemma}\label{lemma:two-sided}
Let $P$ be a point set with distinct $y$-coordinates, and let $\cT$ be a tree that satisfies the conditions of Definition~\ref{def:ib}. Then $\W(P) + \W(\rev{P}) \geq \ALT_\cT(P)$.
\end{lemma}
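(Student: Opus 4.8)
The plan is to induct on the recursive structure of $\cT$ from Definition~\ref{def:ib}. The base case ($\cT$ a single node, $\ALT_\cT(P)=0$) is immediate. For the inductive step, let $\TL,\TR$ be the subtrees at the root of $\cT$ and let $x=c$ be the vertical line separating their key ranges, so that $\PL\cup\PR=P$ (the tree spans all of $P.x$). Applying the induction hypothesis to $(\PL,\TL)$ and $(\PR,\TR)$ and unfolding the recursion in Definition~\ref{def:ib}, the lemma reduces to the single inequality
\[\W(P)+\W(\rev P)\;\ge\;a(P,\cT)+\W(\PL)+\W(\PR)+\W(\rev\PL)+\W(\rev\PR).\]

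The first ingredient is that $\W$ is \emph{superadditive under a vertical cut}: $\W(P)\ge\W(\PL)+\W(\PR)$, and likewise $\W(\rev P)\ge\W(\rev\PL)+\W(\rev\PR)$. The reason is that for $p\in\PL$ every point of $\rect{pq}$ with $q$ to the lower left of $p$ has $x<c$, so deleting $\PR$ leaves the left funnel of $p$ unchanged and only shrinks its right funnel, from $F_R(P,p)$ to $F_R(P,p)\cap\PL$; hence by monotonicity of $\mixValue$ (Fact~\ref{fact:props-mixvalue}) the contribution of $p$ computed inside $\PL$ is at most $f(P,p)$, and symmetrically for $p\in\PR$. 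Summing over all $p$ gives both inequalities and reduces the lemma to a lower bound on the total ``slack'',
\[\Big(\W(P)-\W(\PL)-\W(\PR)\Big)+\Big(\W(\rev P)-\W(\rev\PL)-\W(\rev\PR)\Big)\;\ge\;a(P,\cT)=\mixValue(\PL.y,\PR.y).\]

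To bound the slack I would use a charging argument. Up to a single additive $+1$, $\mixValue(\PL.y,\PR.y)$ equals the number of pairs $(q,p)$ of points of $P$ that are consecutive in time and lie on opposite sides of $c$; fix such a pair with $q$ directly below $p$. Then $q$ is the topmost point of $p$'s funnel in $P$, and because $q$ lies across the cut from $p$ it belongs to the ``cut‑facing'' half of that funnel (the half $F_L$ or $F_R$ on the $c$‑ward side of $p$) — precisely the half whose portion beyond $c$ is deleted when $p$ is restricted to its own side, and $q$ is the topmost of those deleted points. Dually, $\rev p$ is the topmost point of $\rev q$'s funnel in $\rev P$, lying in the deleted part of \emph{its} cut‑facing half. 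I charge the pair $(q,p)$ to the slack contributed by $p$ inside $\W(P)$ together with the slack contributed by $\rev q$ inside $\W(\rev P)$; since each point has a unique time‑predecessor, distinct time‑adjacencies charge disjoint slack terms, so it suffices to show these two slacks sum to at least $1$. The key dichotomy is: if the second‑highest point of $p$'s funnel in $P$ lies on the opposite side of $c$ from $q$, then in the string recording (in order of $y$) which of $F_L(P,p),F_R(P,p)$ each funnel point of $p$ lies in, $q$ is a singleton top block, so deleting the cut‑facing half — which contains $q$ at its top, and which, being a deletion, can only decrease the block count — strictly lowers $\mixValue$, giving slack $\ge 1$ at $p$. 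Otherwise I would appeal to the standard recursive description of a funnel in terms of the funnel of its time‑predecessor (one half of $p$'s funnel is ``$q$ followed by a prefix of $q$'s funnel'', the other half is a suffix of $q$'s funnel) to conclude that the second‑highest point of $\rev q$'s funnel in $\rev P$ must then lie on the opposite side of $c$ from $\rev p$, giving slack $\ge 1$ at $\rev q$.

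The step I expect to be the main obstacle is exactly this dichotomy: ruling out that the bad configuration — deleting the topmost funnel point fails to decrease $\mixValue$ — holds \emph{simultaneously} for $p$ in $P$ and for $\rev q$ in $\rev P$. This requires a careful description of how $p$'s funnel inherits its two halves from $q$'s funnel (which half is the extended prefix and which the suffix) together with the observation that reversing time interchanges these two halves, which is what should force one of the two configurations to be good. One also has to check that deleting the \emph{entire} cut‑facing portion (not just its top point) never re‑merges blocks so as to cancel the gained slack, but this is automatic since deleting points can only decrease the number of blocks. A last, minor bookkeeping point is the stray additive $+1$ per node (coming from the bottommost block, and from $\mixValue$ of a one‑sided pair equalling $1$); summed over the recursion these total $O(m)$, and I expect them to be absorbed by the slack the argument leaves unused — combining the two sides typically yields slack $2$, rather than $1$, per time‑adjacency.
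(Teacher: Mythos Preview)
Your induction and the superadditivity step are exactly right and match the paper. The difficulty is entirely in your charging argument, and there the proposal contains a genuine error.

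Your plan charges each time-adjacent pair $(q,p)$ on opposite sides of the cut to the sum of the slack at $p$ in $\W(P)$ and the slack at $\rev q$ in $\W(\rev P)$, claiming this sum is always $\ge 1$. It is not. Take $P=\{(2,1),(6,2),(1,3),(5,4)\}$ with the cut between keys $2$ and $5$, and consider the adjacency $q=(6,2)$, $p=(1,3)$. In $P$ the funnel of $p$ is $\{(2,1),(6,2)\}$, both in $F_R$, so $f(P,p)=1$; in $\PL=\{(2,1),(1,3)\}$ the funnel of $p$ is $\{(2,1)\}$, still in $F_R$, so $f(\PL,p)=1$ and the slack is $0$. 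Symmetrically, in $\rev P$ the funnel of $\rev q$ is $\{\rev p,\rev{(5,4)}\}$, both in $F_L$, so $f(\rev P,\rev q)=1$, while in $\rev\PR$ it is $\{\rev{(5,4)}\}$, still $F_L$, giving slack $0$. So this adjacency yields combined slack $0$, not $\ge 1$. Your dichotomy also conflates ``opposite side of $c$'' with ``opposite side of $p$'': the second-highest funnel point of $p$ can lie in $\PL$ (your good case) yet still sit in $F_R(P,p)$, so $q$ is \emph{not} a singleton top block. Finally, even if the per-adjacency bound held, summing over $a(P,\cT)-1$ adjacencies leaves an additive $+1$ per internal node of $\cT$, i.e.\ an unaccounted $O(m)$; the lemma as stated has no such term, and ``unused slack'' does not obviously absorb it.

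The paper fixes all of this by charging differently: instead of opposite-side time-adjacencies, it charges each pair $(p_i,p_j)$ of \emph{consecutive same-side} points separated by a run of opposite-side points (there are $a(P,\cT)-2$ such pairs), together with two boundary events (the first point of the later-starting side and the last point of the earlier-ending side), for exactly $a(P,\cT)$ events. The case split is then on $p_i.x$ versus $p_j.x$: if $p_i.x<p_j.x$ then $p_i$ is in $F_L(\PL,p_j)$ while $p_{j-1}\in\PR$ lands in $F_R(P,p_j)$ above it, forcing $f(P,p_j)\ge f(\PL,p_j)+1$; if $p_i.x>p_j.x$ the symmetric argument applies to $\rev{p_i}$. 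In the counterexample above this routes the charge for the $\PL$-straddle $\big((2,1),(1,3)\big)$ to $\rev{(2,1)}$ and for the $\PR$-straddle $\big((6,2),(5,4)\big)$ to $(5,4)$, both of which do carry slack.
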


Even though the formal proof of this lemma is a relatively involved case analysis, it is easy to understand geometrically. The key observation is the following. Consider two sequences $\xL$ and $\xR$ on disjoint ranges, and interleave to form a single sequence $\x$. Then the more times we switch from elements of $\xL$ to elements of $\xR$, the bigger $\Funnel(\x)+\Funnel(\rev{\x})$ is going to be.

To see this, let's look at the geometric view of $\x$ (see Figure~\ref{fig:domination-idea}). Let $p$ and $q$ be two consecutive points on the $\xL$ side that are separated by a streak of points from $\xR$ (i.e. all accesses between $p$ and $q$ vertically are from $\xR$). First, assume $p.x > q.x$. Then $q$ is in the left funnel of $p$, and at least of the points on the $\xR$ between $p$ and $q$ must be in the right funnel of $p$, which forms a completely new group of funnel points compared to what $p$ had in $\xL$. This means that the contribution of $p$ to $\Funnel(\x)$ is at least one higher than its contribution to $\Funnel(\xL)$.

\begin{figure}[h]
\centering
\newcommand{\diskColor}{gray}
\newcommand{\shadowColor}{lightgray}
\newcommand{\bottom}{-2}
\newcommand{\funnel}[3]{
    \fill[gray] (#1,#2) circle (3mm);
    \node at (#1,#2) (C) {};
    \node[above=0mm of C] {#3};
}
\newcommand{\funnelL}[3]{
    \fill[lightgray] (#1,#2) rectangle (#3,\bottom);
    \funnel{#1}{#2}{$\ttL$}
}
\newcommand{\funnelR}[3]{
    \fill[lightgray] (#1,#2) rectangle (#3,\bottom);
    \funnel{#1}{#2}{$\ttR$}
}
\newcommand{\access}[2]{\node[cross] (p#2) at (#1,#2) {};}
\newcommand{\pointSet}{
\access{3}{-1}
\access{8}{0}
\access{5}{1}
\access{2}{2}
\access{6}{3}
\access{9}{4}
\access{7}{5}
\access{4}{6}
\access{1}{7}
\node[above=0mm of p6] {$p$};
\node[right=1mm of p2] {$q$};
\draw[dashed, thin] (5.5,\bottom) -- (5.5,8);
}

\setlength{\tabcolsep}{5mm}
\begin{tabu}{c|c|c}
\begin{tikzpicture}[scale=0.5]
\pointSet{}
\draw [decorate,decoration={brace,amplitude=2mm},yshift=5mm]
(0.8,7) -- (5.2,7) node [midway,yshift=4mm] {from $\xL$};
\draw [decorate,decoration={brace,amplitude=2mm},yshift=5mm]
(5.8,7) -- (9.2,7) node [midway,yshift=4mm] {from $\xR$};
\end{tikzpicture}&
\begin{tikzpicture}[scale=0.5]
\funnelL{3}{-1}{0.5}
\funnelL{2}{2}{0.5}
\funnelR{5}{1}{5.5}
\pointSet{}
\end{tikzpicture}&
\begin{tikzpicture}[scale=0.5]
\funnelL{3}{-1}{0.5}
\funnelL{2}{2}{0.5}
\funnelR{5}{1}{9.5}
\funnelR{6}{3}{9.5}
\funnelR{7}{5}{9.5}
\pointSet{}
\draw[dashed] (6.5,4.3) circle (2cm);
\node[text width=2cm,align=left] at (8.5,7.7) {\footnotesize completely new group of funnel points};
\end{tikzpicture}\\
geometric view of $\x$&
funnel of $p$ in $G_{\xL}$&
funnel of $p$ in $G_{\x}$\\
&(before interleaving)&(after interleaving)\\
\end{tabu}

\caption{Interleaving sequences $\xL=(3,5,2,4,1)$ and $\xR=(8,6,9,7)$ into $X=(3,8,5,2,6,9,7,4,1)$. The contribution of $p$ to $\Funnel(\xL)$ is 3, while the contribution of $p$ to $\Funnel(\x)$ is 4.}\label{fig:domination-idea}
\end{figure}

What if $p.x < q.x$ instead? Then it turns out that an analogous argument can be made on $q$ if we take the time reversal of $X$. That is, the contribution of $\rev{q}$ to $\Funnel(\rev{\x})$ is at least one higher than its contribution to $\Funnel(\rev{\xL})$. Indeed, if we flip the point set vertically, then $p$ and $q$ exchange roles, which means $p.x > q.x$ once again.

To conclude, it remains to observe that the $a(P,p)$ term in the recursive definition of $\ALT_\cT(\x)$ is precisely a measure of how much the subsequences $\xL$ and $\xR$ corresponding to the left and right subtree at the root of $\cT$ are interleaved. So we can apply the argument above by induction to show that $\Funnel(\x) + \Funnel(\rev{\x}) \geq \ALT_\cT(\x)$.
We now reluctantly move to the formal proof.

\begin{proof}[Proof of Lemma~\ref{lemma:two-sided}]
We prove this by induction on $\cT$. The base case is $\cT$ made of a single node. In this case, $\ALT_\cT(P)=0$ by definition, so the inequality trivially holds.

Now consider a general tree $\cT$, and define $\TL$, $\TR$, $\PL$ and $\PR$ as in Definition~\ref{def:ib}. Note that each leaf of $\cT$ has a label in $P.x$ and $\TL$ and $\TR$ must each have at least one leaf, so $\PL$ and $\PR$ are not empty. 
Let's apply the induction hypothesis on $(\PL,\TL)$ and $(\PR,\TR)$. This means that
\begin{align*}
\W(\PL) + \W(\rev{\PL}) &\geq \ALT_\TL(\PL)\\
\W(\PR) + \W(\rev{\PR}) &\geq \ALT_\TR(\PR).
\end{align*}
Thus we find that
\begin{align}
\ALT_\cT(P) &= a(P,\cT) + \ALT_\TL(\PL) + \ALT_\TR(\PR)\tag{by definition}\\
&\leq a(P,\cT) + \W(\PL) + \W(\rev{\PL}) + \W(\PR) + \W(\rev{\PR})\label{eq:applied-hyp}
\end{align}

\begin{claim}
If $p \in \PL$, then
\[f(P,p) \geq f(\PL,p)\quad\text{and}\quad f(\rev{P},\rev{p}) \geq f(\rev{\PL},\rev{p});\]
and if $p \in \PR$, then
\[f(P,p) \geq f(\PR,p)\quad\text{and}\quad f(\rev{P},\rev{p}) \geq f(\rev{\PR},\rev{p}).\]
\end{claim}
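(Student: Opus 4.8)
The plan is to exploit the horizontal separation between $\PL$ and $\PR$. Since the leaves of $\cT$ are labeled by $P.x$ in increasing order, every key in $\TL$ is smaller than every key in $\TR$, so $\max(\PL.x) < \min(\PR.x)$; in particular the $x$-ranges of $\PL$ and $\PR$ are disjoint. I will prove the two inequalities only for $p \in \PL$. The case $p \in \PR$ is symmetric: reflecting the plane horizontally swaps $\ttL$- and $\ttR$-funnels and $\mixValue$ is symmetric in its two arguments (Fact~\ref{fact:props-mixvalue}), so it reduces to the $p \in \PL$ case. And within each case, the time-reversed inequality is literally the same argument run on the point set $\rev{P}$ with partition $\rev{\PL}, \rev{\PR}$, which is legitimate because time reversal changes only $y$-coordinates, so the horizontal separation --- hence everything below --- is preserved.

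So fix $p \in \PL$. The one observation I need is: whenever $p, q \in \PL$, the rectangle $\rect{pq}$ has its entire $x$-range inside $(-\infty, \max(\PL.x)]$, which is disjoint from $\PR.x$; hence $\rect{pq} \cap \PR = \emptyset$, and therefore $\rect{pq}$ is empty in $P$ if and only if it is empty in $\PL$. For the left funnel, any candidate $q$ for $F_\ttL(P,p)$ or $F_\ttL(\PL, p)$ has $q.x < p.x \leq \max(\PL.x) < \min(\PR.x)$, hence $q \in \PL$, and emptiness of $\rect{pq}$ in $P$ and in $\PL$ coincide; so $F_\ttL(P, p) = F_\ttL(\PL, p)$ exactly. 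For the right funnel, if $q \in F_\ttR(\PL, p)$ then $q \in \PL$ and $\PL \cap \rect{pq} = \{p,q\}$, so by the observation $P \cap \rect{pq} = \{p,q\}$ and thus $q \in F_\ttR(P, p)$; hence $F_\ttR(\PL, p) \subseteq F_\ttR(P, p)$. (Intuitively: the points of $\PR$ lie entirely to the right of the block $\PL$, so they can never block a same-side funnel point of $p$ --- they can only add brand new points to $p$'s right funnel.) Combining, $F_\ttL(\PL, p).y = F_\ttL(P,p).y$ and $F_\ttR(\PL, p).y \subseteq F_\ttR(P, p).y$, so the monotonicity of $\mixValue$ from Fact~\ref{fact:props-mixvalue} gives $f(\PL, p) = \mixValue(F_\ttL(\PL,p).y, F_\ttR(\PL,p).y) \leq \mixValue(F_\ttL(P,p).y, F_\ttR(P,p).y) = f(P, p)$, and the remaining inequalities follow by the symmetries noted above.

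I do not expect a genuine obstacle here; the only point requiring care is the bookkeeping that the $x$-ranges of $\TL$ and $\TR$ are actually disjoint --- which is exactly what makes ``empty in $P$'' and ``empty in $\PL$'' agree on same-side pairs --- together with the (easy but essential) direction of the monotonicity: enlarging a point set can destroy a funnel membership by blocking it, but can never create visibility, so the funnel of $p$ in $\PL$ sits inside its funnel in $P$ rather than the other way around.
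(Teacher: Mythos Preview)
Your proof is correct and follows essentially the same approach as the paper: use the disjointness of the $x$-ranges of $\PL$ and $\PR$ to show $F_\ttL(\PL,p)=F_\ttL(P,p)$ and $F_\ttR(\PL,p)\subseteq F_\ttR(P,p)$, then conclude via monotonicity of $\mixValue$ (the paper phrases this last step as ``$\mix$ is a subsequence, so $\blocks$ can only increase'', which is the same thing). Your handling of the symmetries is slightly more explicit than the paper's, which simply declares the other three cases symmetric.
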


\begin{claimproof}
We will deal with the first case (the other three cases are symmetric). The key is that $\PL$ and $\PR$ operate on disjoint ranges of $x$-coordinates.
\begin{itemize}
\item The left funnel of $p$ within $\PL$ is identical to its left funnel within $P$, since all elements of $\PR$ are to the right of $p$. Formally, $F_\ttL(\PL,p) = \FL$.
\item All points $q$ that were in the right funnel of $p$ within $\PL$ will still be part of the right funnel of $p$ within $P$. Indeed, the only way for them to stop being funnel points would be to add accesses inside the rectangle delimited by $p$ and $q$. This doesn't happen because all points in $\PR$ are strictly to the right of all points in $\PL$. Formally, $F_\ttR(\PL,p) \subseteq \FR$.
\end{itemize}
Therefore, $\mix(F_\ttL(\PL,p).y,F_\ttR(\PL,p).y)$ is a subsequence of $\mix(\FL.y,\FR.y)$, which means that
\[f(\PL,p) = \blocks(\mix(F_\ttL(\PL,p),F_\ttR(\PL,p))) \leq \blocks(\mix(\FL,\FR)) = f(P,p).\]
\end{claimproof}

Summing up $f(P,p)$ and $f(\rev{P},\rev{p})$ over all points $p \in P$, we obtain
\begin{equation}\label{eq:sum-up}
\begin{aligned}
\W(P) &= \sum_{p \in P} f(P,p) \geq \sum_{p \in \PL} f(\PL,p) + \sum_{p \in \PR} f(\PR,p) = \W(\PL) + \W(\PR)\\
\W(\rev{P}) &= \sum_{p \in P} f(\rev{P},\rev{p}) \geq \sum_{p \in \PL} f(\rev{\PL},\rev{p}) + \sum_{p \in \PR} f(\rev{\PR},\rev{p}) = \W(\rev{\PL}) + \W(\rev{\PR}).
\end{aligned}
\end{equation}
This, combined with \eqref{eq:applied-hyp}, gives
\begin{align*}
\W(P) + \W(\rev{P})
&\geq \W(\PL) + \W(\PR) + \W(\rev{\PL}) + \W(\rev{\PR})\\
&\geq \ALT_\cT(P) - a(P,\cT)
\end{align*}
This falls $a(P,\cT)$ short of our goal (which makes sense, since we haven't used the interleaving of $\PL$ and $\PR$ yet). To fix this, we will show the following claim.
\begin{claim}\label{claim:scenarios}
Consider the following properties defined over a point $p \in P$:
\begin{alphaenumerate}
\item $p \in \TL$ and $f(P,p) \geq f(\PL,p)+1$;
\item $p \in \TL$ and $f(\rev{P},\rev{p}) \geq f(\rev{\PL},\rev{p})+1$;
\item $p \in \TR$ and $f(P,p) \geq f(\PR,p)+1$;
\item $p \in \TR$ and $f(\rev{P},\rev{p}) \geq f(\rev{\PR},\rev{p})+1$.
\end{alphaenumerate}
The sum of the number of points in $P$ having each property (a)--(d) is at least $a(P,\cT)$.
\end{claim}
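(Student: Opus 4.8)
The plan is to look at the string $s \coloneqq \mix(\PL.y, \PR.y)$, whose number of blocks is exactly $a(P,\cT) = \mixValue(\PL.y,\PR.y)$. Each block boundary in $s$ corresponds to a pair of vertically consecutive points $p,q \in P$ lying on opposite sides (one in $\PL$, one in $\PR$), and the goal is to charge each of the $a(P,\cT)-1$ boundaries (plus possibly one endpoint) to a distinct point of $P$ satisfying one of properties (a)--(d). First I would set up the charging rule: for a boundary between a maximal $\PL$-streak and a maximal $\PR$-streak, look at the topmost point $p$ of whichever of the two streaks is \emph{below} the other, and the bottommost point $q$ of the streak above it. By symmetry (swapping $\ttL\leftrightarrow\ttR$ across the recursion and $P\leftrightarrow\rev P$ vertically) it suffices to handle the case where $p\in\PL$ sits just below a streak of $\PR$-points topped by $q$; the geometric picture is exactly Figure~\ref{fig:domination-idea}.

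The key sub-step is the case split on whether $p.x > q.x$ or $p.x < q.x$ (they cannot be equal, distinct $x$-coordinates). If $p.x > q.x$: then $q$ is below and to the left of $p$, and since the points strictly between them vertically are all in $\PR$ (hence to the right of $p$) while $\PL$-points are all to the left of $\PR$, the rectangle $\rect{pq}$ is empty, so $q\in\FL$; moreover at least one of the $\PR$-points strictly between $p$ and $q$ vertically is orthogonally visible from $p$ on the right, so $\FR\neq\emptyset$ whereas $F_\ttR(\PL,p)=\emptyset$. Since $F_\ttL(\PL,p)=\FL\supseteq\{q\}$ and $\FL$ agrees with $F_\ttL(\PL,p)$ (shown in the Claim above), adding a nonempty right funnel at the \emph{bottom} of the funnel ordering (all $\PR$-points between $p,q$ have $y$ below every $\PL$-point in $p$'s funnel that lies above $q$... in fact the newly appearing $\ttR$'s all sit below $q=$ the lowest $\ttL$, so they form a fresh block) increases $\blocks$ by at least one: $f(P,p)\ge f(\PL,p)+1$, i.e. property (a) holds for $p$. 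If instead $p.x < q.x$, I apply the identical argument to $\rev q$ in $\rev P$: under the vertical flip, $\rev q$ sits just below a streak of $\rev{\PR}$-points topped by $\rev p$, and now $(\rev p).x < (\rev q).x$ becomes $(\rev q).x > (\rev p).x$... more simply, $p.x<q.x$ means $q$ is to the \emph{upper right} of $p$, so after flipping, $p$ is to the upper... I would just say: by the symmetric statement with roles of $p,q$ swapped and $P$ replaced by $\rev P$, property (b) holds for $q$. Each boundary thus yields a charge to a distinct point (distinct because distinct boundaries involve disjoint vertical streaks, and we charge the unique extreme point of the lower streak), and the topmost streak of $s$ contributes the ``$+1$'' in $\blocks$, which I can either absorb into one more charge at the top endpoint or simply note that counting $a(P,\cT)-1$ boundaries already suffices once one rechecks the constants — but to be safe I would handle the top streak too, charging its bottommost point by the same rule applied to the boundary just below it, which is already counted, so actually the clean bookkeeping is: there are $\blocks(s)-1 = a(P,\cT)-1$ boundaries giving $a(P,\cT)-1$ charged points, and I pick up the last $+1$ by charging the globally topmost point of $P$ (which belongs to one side, say $\PL$, and whose funnel in $P$ versus in $\PL$ gains a block from the $\PR$-streak immediately below it) — I will verify in the write-up that this point is not already charged.

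The main obstacle I anticipate is \textbf{the bookkeeping that different boundaries are charged to genuinely different points}, together with carefully arguing the ``fresh block'' claim: I must confirm that the new right-funnel points created by the interleaving are all strictly \emph{below} $q$ in the funnel's $y$-order (so they genuinely form a new $\ttR$-block rather than merging into an existing one), which hinges on the observation that every point of $\PL$ in $p$'s funnel that lies above $q$ has no effect because the $\PR$-streak occupies a contiguous $y$-interval immediately above $p$ and entirely below $q$. Once that geometric fact is pinned down, the four cases (a)--(d) are fully symmetric under the dihedral group generated by $\ttL\leftrightarrow\ttR$ and $P\leftrightarrow\rev P$, so I only write one in detail and invoke symmetry for the rest.
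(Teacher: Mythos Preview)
Your high-level plan is the paper's: charge each alternation in $\mix(\PL.y,\PR.y)$ to a point whose funnel value strictly increases when passing from $\PL$ (or $\PR$, or their reversals) to $P$, and invoke the fourfold $\ttL\leftrightarrow\ttR$, $P\leftrightarrow\rev P$ symmetry to reduce to one case. But the execution has real gaps.

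\textbf{The counting does not close.} You announce ``one charge per boundary'' ($a(P,\cT)-1$ charges), but your actual geometric argument takes a pair $p,q$ on the \emph{same} side straddling a full block of the opposite side. That is one charge per straddled block, and there are only $a(P,\cT)-2$ of those. The paper makes up the deficit of \emph{two} via two separate endpoint cases: the first point $p_{i^*}$ of whichever side starts appearing later (its funnel within $\PL$ or $\PR$ is empty, so any opposite-side point below it gives the $+1$), and the last point $p_{j^*}$ of whichever side finishes earlier (same argument in $\rev P$). Your single ``globally topmost point'' charge covers at most one of these and may in fact coincide with a point already charged by a straddling pair.

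\textbf{The key sub-step is misargued.} The claim $F_\ttR(\PL,p)=\emptyset$ is false in general: $p\in\PL$ may well have $\PL$-points to its right in its right funnel. Likewise, the new $\ttR$-points are not ``below $q$'' in the funnel's $y$-order; they are \emph{above} $q$, since the $\PR$-points lie between $q$ and $p$ vertically. The correct reason $f(P,p)\ge f(\PL,p)+1$ is that $q$ is the \emph{highest} funnel point of $p$ within $\PL$ (being the $\PL$-point immediately below $p$), it is labeled $\ttL$ in the case $q.x<p.x$, and the $\PR$-point directly below $p$ becomes a right-funnel point strictly above $q$ --- so an $\ttR$-block is appended at the \emph{top} of the mix string. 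Your description also swaps the roles of $p$ and $q$ midway (first $p$ is below, then ``$q$ is below and to the left of $p$''), which is why the geometry came out inverted.

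\textbf{A minor missing case.} Lemma~\ref{lemma:two-sided} only assumes distinct $y$-coordinates, so $p.x=q.x$ is possible and must be handled (the paper does so: then $p$'s funnel within $\PL$ is empty, and the $\PR$-point directly below $p$ gives $f(P,p)\ge 1$).
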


\begin{claimproof}
Let's number the points of $P$ by increasing $y$-coordinate (i.e. in chronological order) as $p_1, \ldots, p_m$.
Recall that $a(P,\cT) = \mixValue(\PL.y, \PR.y)$. Also, $\PL$ and $\PR$ are non-empty, so $a(P,\cT) \geq 2$. This means that as we go through the points $p_1, \ldots, p_m$, we switch $a(P,\cT)-1 \geq 1$ times between points of $\PL$ and points of $\PR$.

Therefore, there are exactly $a(P,\cT)-2$ pairs of indices $(i,j)$ with $i+1<j$ such that
\begin{itemize}
\item case 1: $p_i,p_j \in \PL$ but $p_{i+1}, \ldots, p_{j-1} \in \PR$, or
\item case 2: $p_i,p_j \in \PR$ but $p_{i+1}, \ldots, p_{j-1} \in \PL$,
\end{itemize}
which ``straddle accesses of the opposite side''.
Also, there is an index $i^*>1$ (the ``first element of the side that starts appearing later'') such that
\begin{itemize}
\item case 3: $p_{i^*} \in \PL$ but $p_1, \ldots, p_{i^*-1} \in \PR$, or
\item case 4: $p_{i^*} \in \PR$ but $p_1, \ldots, p_{i^*-1} \in \PL$
\end{itemize}
and similarly, there is an index $j^*<m$ (the ``last element of the side that finishes appearing earlier'') such that
\begin{itemize}
\item case 5: $p_{j^*} \in \PL$ but $p_{j^*+1}, \ldots, p_m \in \PR$, or
\item case 6: $p_{j^*} \in \PR$ but $p_{j^*+1}, \ldots, p_m \in \PL$.
\end{itemize}
This makes for a total of $a(P,\cT)-2+1+1=a(P,\cT)$ occurrences of one of the six cases. We will show that each of them leads to a point $p$ satisfying one of the properties (a)--(d). More precisely, we claim that:
\begin{itemize}
\item case 1 implies $p_j$ has property (a) or $p_i$ has property (b);
\item case 2 implies $p_j$ has property (c) or $p_i$ has property (d);
\item case 3 implies $p_{i^*}$ has property (a);
\item case 4 implies $p_{i^*}$ has property (c);
\item case 5 implies $p_{j^*}$ has property (b);
\item case 6 implies $p_{j^*}$ has property (d).
\end{itemize}

We will show this for case 1 and case 3. The other four cases are analogous.
To treat case 1, let's separate into more cases.\footnote{We wish we were joking.}
\begin{itemize}
\item If $p_i.x < p_j.x$, then $p_i$ is in the left funnel of $p_j$ within both $P$ and $\PL$. But within $P$, $p_{j-1}$ would be an additional right funnel point. Since it has a higher index than $p_i$, this would add at least 1 to $f(P,p_j)$ compared to $f(\PL, p_j)$. In other words, $f(P,p_j) \geq f(\PL,p_j)+1$ (scenario (a)).
\item If $p_i.x > p_j.x$, then we can use the same argument as above on $\rev{P}$ and $\rev{\PL}$ by swapping $i$ and $j$, obtaining $f(\rev{P},\rev{p_i}) \geq f(\PL,\rev{p_i})+1$ (scenario (b)).
\item If $p_i.x = p_j.x$, then both funnels of $p_j$ within $\PL$ are completely empty, which means that $f(\PL,p_j)=0$, while the right funnel of $p_j$ in $P$ would contain at least $p_{j-1}$. Therefore, $f(P,p_j) = 1 \geq f(\PL,p_j)+1$ (scenario (a)).
\end{itemize}
To treat case 3, it suffices to observe that both funnels of $p_{i^*}$ within $\PL$ would be completely empty (for lack of lower points), so $f(\PL, p_{i^*})=0$, while in $P$ the right funnel of $x_{i^*}$ would contain at least $p_{i^*-1}$. Therefore, $f(P,p_{i^*}) \geq 1 = f(\PL,p_{i^*}) + 1$ (scenario (a)).
\end{claimproof}

Now, if we sum up $f(P,p)$ and $f(\rev{P},\rev{p})$ over all points $p$ as we did in \eqref{eq:sum-up}, but this time also apply Claim~\ref{claim:scenarios}, we obtain that
\[\W(P) + \W(\rev{P}) \geq \W(\PL) + \W(\PR) + \W(\rev{\PL}) + \W(\rev{\PR}) + a(P,\cT).\]
Combined with \eqref{eq:applied-hyp}, this gives the desired result and concludes the inductive step.
\end{proof}

\subsection{Characterizing the \FB{} using \zr{}s} \label{sec:zrects}

Lemma~\ref{lemma:two-sided} asserts that all possible \IB{}s for all choices of reference trees $\cT$, are simultaneously upper-bounded 
by the sum of two specific \FB{}s. While this is already a nontrivial bound, $\W(P)$ and $\W(\rev{P})$ could in principle be wildly different,  
and it is therefore more compelling to show that the \emph{single} quantity $\W(P)$ already provides an upper bound. 
(It is curious that the symmetry properties of the Funnel bound, which are a necessary precondition for dynamic optimality, already 
enter the picture in determining the relationship between Wilber's bounds.)

To achieve this, we need to think about how geometric transformations affect the value of the \FB{}. It is clear from the definition that $\W(P)$ is unaffected by a horizontal flip. Indeed, the left funnel would become the right funnel and vice versa, so this wouldn't affect the number of times we switch between the two: the quantity $f(P,p)$ would remain the same for each $p$ (see Figure~\ref{fig:hor-flip}).

\begin{figure}[h]
\centering
\newcommand{\diskColor}{gray}
\newcommand{\shadowColor}{lightgray}
\newcommand{\bottom}{0}
\newcommand{\funnel}[3]{
    \fill[gray] (#1,#2) circle (3mm);
    \node at (#1,#2) (C) {};
    \node[above=0mm of C] {#3};
}
\newcommand{\funnelL}[2]{
    \fill[lightgray] (#1,#2) rectangle (-0.5,\bottom);
    \funnel{#1}{#2}{$\ttL$}
}
\newcommand{\funnelR}[2]{
    \fill[lightgray] (#1,#2) rectangle (8.5,\bottom);
    \funnel{#1}{#2}{$\ttR$}
}

\newcommand{\access}[2]{\node[cross] (p#2) at (#1,#2) {};}
$
\vcenter{\hbox{
\begin{tikzpicture}[scale=0.5]
\funnelL{3}{3}
\funnelL{2}{7}
\funnelL{1}{8}
\funnelR{5}{4}
\funnelR{7}{6}
\fill[\shadowColor] (-0.5,\bottom) rectangle (8.5,1);
\access{4}{1}
\access{6}{2}
\access{3}{3}
\access{5}{4}
\access{1}{5}
\access{7}{6}
\access{2}{7}
\access{1}{8}
\access{4}{9}
\access{6}{10}
\access{3}{11}
\node[above=0mm of p9] {$p$};
\end{tikzpicture}
}}
\quad
\longleftrightarrow
\quad
\vcenter{\hbox{
\begin{tikzpicture}[scale=0.5]
\funnelR{5}{3}
\funnelR{6}{7}
\funnelR{7}{8}
\funnelL{3}{4}
\funnelL{1}{6}
\fill[\shadowColor] (-0.5,\bottom) rectangle (8.5,1);
\access{4}{1}
\access{2}{2}
\access{5}{3}
\access{3}{4}
\access{7}{5}
\access{1}{6}
\access{6}{7}
\access{7}{8}
\access{4}{9}
\access{2}{10}
\access{5}{11}
\node[above=0mm of p9] {$p$};
\end{tikzpicture}
}}
$

\caption{Flipping the geometric view horizontally conserves the contribution $f(P,p)$ of each point: the only change is that the labels of the funnel points flip between $\ttL$ and $\ttR$.}\label{fig:hor-flip}
\end{figure}

On the other hand, it is far from obvious that the \FB{} is unaffected by a vertical flip. Because of the time reversal, the notion of funnel changes completely. And indeed, the precise value will change, as is shown in Figure~\ref{fig:ver-flip}.

\begin{figure}[h]
\centering
\newcommand{\access}[3]{
    \node[cross] (p#2) at (#1,#2) {};
    \node[above=0mm of p#2] {#3};
}

\setlength{\tabcolsep}{10mm}
\begin{tabu}{cc}
\begin{tikzpicture}[scale=0.5]
\access{1}{1}{$0$}
\access{3}{2}{$1$}
\access{2}{3}{$2$}
\end{tikzpicture}&
\begin{tikzpicture}[scale=0.5]
\access{2}{1}{$0$}
\access{3}{2}{$1$}
\access{1}{3}{$1$}
\end{tikzpicture}\\
$\Funnel(P) = 0+1+2 = 3$&
$\Funnel(\rev{P}) = 0+1+1 = 2$\\
\end{tabu}

\caption{A minimal example such that $\Funnel(P) \neq \Funnel(\protect\rev{P})$ is $P=\{(1,1),(3,2),(2,3)\}$. Each access $p$ is labeled with its contribution $f(P,p)$ (left) or $f(\protect\rev{P},p)$ (right).}\label{fig:ver-flip}
\end{figure}

Nevertheless, we will show that for any point set $P$ with distinct $x$- and $y$-coordinates, $\W(P)$ and $\W(\rev{P})$ are equal up to an additive $O(m)$. We do this by introducing a new characterization of the \FB{} that is naturally invariant under \emph{90\textdegree{} rotations} of the point set. This new characterization is the number of \emph{\zr{}s}.

\begin{figure}[h]
\centering

\newcommand{\access}[2]{\node[cross] (p#2) at (#1,#2) {};}
\begin{tikzpicture}[scale=0.5]
\draw[gray] (2,2) rectangle (7,6);
\access{4}{1}
\access{5}{2}
\access{2}{3}
\access{7}{4}
\access{1}{5}
\access{3}{6}
\access{8}{7}
\access{6}{8}
\node[above=0mm of p6] {$p$};
\node[left=0mm of p3] {$q$};
\node[below=0mm of p2] {$r$};
\node[right=0mm of p4] {$s$};
\end{tikzpicture}
\caption{A \zr{}. The relative order of points $p,q,r,s$ horizontally and vertically matters.}\label{fig:zrect-formal}
\end{figure}

\begin{definition}[\zr]
Let $P$ be a point set. We call tuple $(p,q,r,s) \in P^4$ a \zr{} of $P$ if the following conditions hold:
\begin{alphaenumerate}
\item $q.x < p.x < r.x < s.x$;
\item $r.y < q.y < s.y < p.y$;
\item $P \cap [q.x,s.x] \times [r.y,p.y] = \{p,q,r,s\}$.
\end{alphaenumerate}
\end{definition}
In other words, a \zr{} is a subsequence of 4 accesses with key values in relative order $3,1,4,2$ and such that the axis-aligned rectangle that they span is empty (see Figure~\ref{fig:zrect-formal} for an example). We define the corresponding quantity, which we will prove is equivalent to the \FB{}.
\begin{definition}[\zr{} bound]
For any point set $P$ with distinct $x$- and $y$-coordinates,\footnote{If the $x$- and $y$-coordinates are not distinct, $\zRects(P)$ may give absurd results. For example, if we start with any $P$ and add a duplicate point $(x,y+\epsilon)$ for every point $(x,y)$ of $P$ (with $\epsilon$ small enough), then $\zRects(P)$ will drop to 0.} let
\[\zRects(P) \coloneqq |\{(p,q,r,s) \mid \text{$(p,q,r,s)$ is a \zr{} of $P$}\}|.\]
\end{definition}

First, we formally state the rotation-invariance of \zr{}s.
\begin{definition}[counter-clockwise 90\textdegree{} rotation]
For a point $p \in \R^2$, let $\rot{p} \coloneqq (-p.y,p.x)$.
Analogously, for a point set $P$, let $\rot{P} \coloneqq \{\rot{p} \mid p \in P\}$.
\end{definition}
\begin{lemma}\label{lemma:symmetry-zrects}
For any point set $P$, $\zRects(P) = \zRects(\rot{P})$.
\end{lemma}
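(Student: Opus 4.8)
The plan is to prove Lemma~\ref{lemma:symmetry-zrects} by exhibiting an explicit bijection between $\cZ(P)$, the set of \zr{}s of $P$, and $\cZ(\rot{P})$; then $\zRects(P) = |\cZ(P)| = |\cZ(\rot{P})| = \zRects(\rot{P})$ is immediate. The candidate map is the ``cyclic relabel-and-rotate''
\[\Phi(p,q,r,s) \coloneqq (\rot{s},\rot{p},\rot{q},\rot{r}),\]
and the entire content is to check that $\Phi$ sends $\cZ(P)$ into $\cZ(\rot{P})$. Since $\Phi$ is manifestly injective as a map of $4$-tuples of points (one recovers $(p,q,r,s)$ by applying $\rot^{-1}$ entrywise and undoing the cyclic shift), this already gives $\zRects(P) \le \zRects(\rot{P})$. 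To upgrade this to equality I would apply the same inequality with $\rot{P}$, $\rotrot{P}$, and $\rot{(\rotrot{P})}$ in place of $P$, and use that four applications of a $90\textdegree$ rotation is the identity (so $\rot{(\rot{(\rotrot{P})})}=P$), obtaining the chain $\zRects(P) \le \zRects(\rot{P}) \le \zRects(\rotrot{P}) \le \zRects(\rot{(\rotrot{P})}) \le \zRects(P)$, which forces all terms to be equal.

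For the core verification that $\Phi(p,q,r,s)$ is a \zr{} of $\rot{P}$ whenever $(p,q,r,s)$ is a \zr{} of $P$, the first step is to record how $\rot{}$ acts on coordinate order: for any two points $a,b$ we have $\rot{a}.x < \rot{b}.x \iff a.y > b.y$ and $\rot{a}.y < \rot{b}.y \iff a.x < b.x$, because $\rot{p}=(-p.y,p.x)$. Feeding in condition (a) ($q.x<p.x<r.x<s.x$) and condition (b) ($r.y<q.y<s.y<p.y$), this shows that in $\rot{P}$ the four images are ordered by $x$-coordinate as $\rot{p}<\rot{s}<\rot{q}<\rot{r}$ and by $y$-coordinate as $\rot{q}<\rot{p}<\rot{r}<\rot{s}$. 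Reading these two orderings off, the relabeled tuple $(p',q',r',s')\coloneqq(\rot{s},\rot{p},\rot{q},\rot{r})$ satisfies exactly $q'.x<p'.x<r'.x<s'.x$ and $r'.y<q'.y<s'.y<p'.y$, i.e.\ conditions (a) and (b). For the emptiness condition (c), I would note that the spanning rectangle of $(p',q',r',s')$ is $[q'.x,s'.x]\times[r'.y,p'.y]=[\rot{p}.x,\rot{r}.x]\times[\rot{q}.y,\rot{s}.y]$, which by the orderings above is precisely the bounding box of $\{\rot{p},\rot{q},\rot{r},\rot{s}\}$, just as $[q.x,s.x]\times[r.y,p.y]$ is the bounding box of $\{p,q,r,s\}$. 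Since a $90\textdegree$ rotation carries each axis-aligned rectangle bijectively onto an axis-aligned rectangle and is a bijection of the plane, a point $t$ lies in the first box iff $\rot{t}$ lies in the second; combined with the fact that $p\mapsto\rot{p}$ is a bijection from $P$ onto $\rot{P}$, the hypothesis $P \cap [q.x,s.x]\times[r.y,p.y]=\{p,q,r,s\}$ transfers verbatim to $\rot{P}\cap[\rot{p}.x,\rot{r}.x]\times[\rot{q}.y,\rot{s}.y]=\{\rot{p},\rot{q},\rot{r},\rot{s}\}=\{p',q',r',s'\}$, which is condition (c) for $(p',q',r',s')$ in $\rot{P}$.

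The main obstacle here is not any real difficulty but bookkeeping: one has to correctly determine the \emph{forced} relabeling induced by the rotation (namely that the old corners $p,q,r,s$ must be renamed in the cyclic pattern $s,p,q,r$) and keep straight which of the four ``pinwheel'' corners plays which role before and after rotating, since the emptiness rectangle's corner labels permute. Once the relabeling is pinned down, every step is a routine chase through the order-reversal/order-preservation rules for $\rot{}$, so I expect the write-up to be short.
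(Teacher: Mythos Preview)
Your proposal is correct and follows essentially the same approach as the paper: both use the explicit map $(p,q,r,s)\mapsto(\rot{s},\rot{p},\rot{q},\rot{r})$ to carry \zr{}s of $P$ to \zr{}s of $\rot{P}$. The paper simply asserts this map works ``and vice-versa'' (leaving the verification to the reader rotating Figure~\ref{fig:zrect-formal}), whereas you spell out the coordinate check and replace the ``vice-versa'' by a four-fold rotation chain; these are the same argument at different levels of detail.
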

\begin{proof}
Each \zr{} of $P$ induces a \zr{} in $\rot{P}$ and vice-versa: \zr{} $(p,q,r,s)$ in $P$ becomes \zr{} $(\rot{s},\rot{p},\rot{q},\rot{r})$ in $\rot{P}$ (the reader is encouraged to physically rotate the page containing figure \ref{fig:zrect-formal} in order to convince themselves of this fact). Therefore, $P$ and $\rot{P}$ have the same number of \zr{}s.
\end{proof}

The relation between $\W(P)$ and $\zRects(P)$ is proved in the following two lemmas.
\begin{lemma}\label{lemma:zrects-geq-funnel}
$\zRects(P) \geq \W(P)/2 - O(m)$.
\end{lemma}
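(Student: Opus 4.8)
The plan is to charge $\W(P)$ point by point to $z$-rectangles; assume throughout that $P$ has distinct $x$- and $y$-coordinates (as required by the definition of $\zRects$), and write $m=|P|$. For $p\in P$, read the funnel of $p$ in increasing $y$-order as a word $w_p:=\mix(\FL.y,\FR.y)\in\{\ttL,\ttR\}^*$, and let $t(p):=\#\{i:(w_p)_i\neq(w_p)_{i+1}\}$ be its number of letter changes; then $f(P,p)=\blocks(w_p)$ equals $1+t(p)$ if $p$'s funnel is nonempty and $0$ otherwise, so $\W(P)=\sum_{p\in P}f(P,p)\le m+\sum_{p\in P}t(p)$. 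When nonempty, $w_p$ is a concatenation of $t(p)+1$ maximal single-letter blocks that strictly alternate between $\ttL$ and $\ttR$; call all but the first and last block \emph{internal}, and note at least $t(p)/2-2$ of the internal blocks are $\ttL$-blocks. So it suffices to construct an injection from $\{(p,B):p\in P,\ B\text{ an internal }\ttL\text{-block of }w_p\}$ into the $z$-rectangles of $P$: summing over $p$ then gives $\zRects(P)\ge\sum_p\bigl(t(p)/2-2\bigr)\ge\tfrac12\bigl(\W(P)-m\bigr)-2m$, which is the lemma with $O(m)=\tfrac52 m$.

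To build the injection, fix $p$ and an internal $\ttL$-block $B$ of $w_p$. Let $q$ be the lowest point of $B$, let $r\in\FR$ be the funnel point immediately below $B$ in the $y$-ordering, and $s\in\FR$ the funnel point immediately above $B$; both exist because $B$ is internal (it has an $\ttR$-block on each side). I claim $(p,q,r,s)$ is a $z$-rectangle, and that $(p,B)\mapsto(p,q,r,s)$ is injective — from the tuple, $p$ is recovered as the unique top point and $B$ as the block of $w_p$ containing $q$. To verify the order conditions I would first record the \emph{staircase monotonicity} of funnels: for $a,b\in\FR$, $a.y<b.y\iff a.x<b.x$, and for $a,b\in\FL$, $a.y<b.y\iff a.x>b.x$ (otherwise $\rect{pa}$ or $\rect{pb}$ would contain the other point, contradicting emptiness). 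Since $r,s\in\FR$ with $r.y<s.y$ this gives $r.x<s.x$; combined with $q\in\FL$ this is $q.x<p.x<r.x<s.x$; and the positions of $r,q,s$ along the $y$-ordering give $r.y<q.y<s.y<p.y$.

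The remaining and genuinely delicate condition is that the box $\rho:=[q.x,s.x]\times[r.y,p.y]$ is empty in $P$ apart from $p,q,r,s$; I expect this to be the main obstacle. The other points of $B$ lie strictly left of $q$ (staircase monotonicity makes $q$, the lowest point of $B$, its rightmost point), and the same monotonicity places every other funnel point of $p$ outside $\rho$, using that the only funnel points of $p$ with $y$-coordinate strictly between $r.y$ and $s.y$ are exactly those of $B$. The only remaining threat is a point $t\in P\cap\rho$ that is \emph{not} in the funnel of $p$; since every point of $\rho$ with $x\in\{q.x,p.x,s.x\}$ or $y\in\{r.y,p.y\}$ is one of $p,q,r,s$ (distinct coordinates), such a $t$ lies in one of the open boxes $(q.x,p.x)\times(r.y,p.y)$ or $(p.x,s.x)\times(r.y,p.y)$. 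In the first case, let $u$ be the point of $P$ of largest $x$-coordinate lying in that open box: the choice of $u$ forces $\rect{pu}$ to be empty, so $u\in\FL$, and $u.x>q.x$ together with $\FL$-monotonicity forces $u.y<q.y<s.y$, so $u$ is a funnel point of $p$ with $y$-coordinate strictly in $(r.y,s.y)$ — hence $u\in B$, contradicting $u.x>q.x$. The second case is symmetric: the point $v\in P$ of smallest $x$-coordinate in the second open box has $\rect{pv}$ empty, so $v\in\FR$, and $v.x<s.x$ with $\FR$-monotonicity gives $v.y<s.y$, so $v$ is a funnel point of $p$ with $y$-coordinate in $(r.y,s.y)$, forcing $v\in B\subseteq\FL$, contradicting $v\in\FR$. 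Working with \emph{open} boxes is exactly what keeps $u$ and $v$ distinct from $r$ and $s$, which is where a naive attempt breaks; everything outside this case analysis is the bookkeeping of the first paragraph.
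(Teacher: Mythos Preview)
Your proof is correct and takes essentially the same approach as the paper. Your ``internal $\ttL$-block'' is precisely the paper's ``left-straddling pair'' (the paper defines a pair $(i,j)$ with $a_i,a_j\in\FR$ and all funnel points between them in $\FL$), and your tuple $(p,q,r,s)$ is the paper's $(p,a_{i+1},a_i,a_j)$. The only stylistic difference is in the emptiness check: the paper covers the box by five sub-rectangles $\rect{pa_i},\rect{pa_{i+1}},\rect{pa_j},\rect{a_ia_{i+1}},\rect{a_ia_j}$ and shows each is empty, whereas you argue directly via staircase monotonicity and an extremal-point contradiction; both arguments work.
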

\begin{lemma}\label{lemma:funnel-geq-zrects}
$\W(P) \geq 2\cdot \zRects(P)$.
\end{lemma}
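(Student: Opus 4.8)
The plan is to charge each z-rectangle against the Funnel-bound contribution of its topmost corner. In a z-rectangle $(p,q,r,s)$ the corner $p$ is the unique one with the largest $y$-coordinate (by condition (b) of the definition), so if we let $z(p)$ denote the number of z-rectangles of $P$ whose first coordinate is $p$, then $\zRects(P) = \sum_{p\in P} z(p)$. It therefore suffices to prove $f(P,p) \ge 2\,z(p)$ for every $p \in P$, and then sum.

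Fix $p$ and a z-rectangle $(p,q,r,s)$ topped at $p$. First I would check that $q$ lies in the left funnel of $p$ and that $r$ and $s$ lie in the right funnel: each of $\rect{pq}$, $\rect{pr}$, $\rect{ps}$ is contained in the empty rectangle $[q.x,s.x]\times[r.y,p.y]$, and a glance at the coordinate orderings (a)--(b) shows that the two corners among $\{p,q,r,s\}$ that do not define it also do not lie inside it, so each of these three rectangles is empty. The crux is then the following \emph{locality} claim: no funnel point of $p$ has $y$-coordinate in the open interval $(r.y,q.y)$, and no right-funnel point of $p$ has $y$-coordinate in $(q.y,s.y)$. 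To see this, take a putative such point $t$; if $t.x\in[q.x,s.x]$ then $t$ falls inside the empty rectangle, forcing $t\in\{p,q,r,s\}$, which its $y$-coordinate rules out; and if instead $t.x<q.x$ (resp. $t.x>s.x$) then $q$ (resp. $s$) lies strictly inside $\rect{pt}$, so $t$ is not orthogonally visible from $p$ and hence not a funnel point of $p$.

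The locality claim has two consequences. First, $(p,q,r,s)$ is recoverable from $p$ and $q$ alone: $r$ must be the topmost funnel point of $p$ below $q$, and $s$ the bottommost right-funnel point of $p$ above $q$. Hence $(p,q,r,s)\mapsto q$ is injective on the z-rectangles topped at $p$, so $z(p)$ is at most the number of points $q$ that arise this way. Second, any such $q$ is a left-funnel point of $p$ whose immediate predecessor in the $y$-ordering of the funnel of $p$ (namely $r$) is a right-funnel point. Listing the funnel points of $p$ in increasing $y$-order and marking each as $\ttL$ or $\ttR$ according to whether it is a left- or right-funnel point, we obtain a string with exactly $f(P,p)$ blocks, and the admissible $q$'s correspond to the lowest letter of each $\ttL$-block that is not the bottom block. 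Since the blocks alternate, there are at most $\lfloor f(P,p)/2\rfloor$ such $\ttL$-blocks — the extremal case being when the lowest funnel point of $p$ is itself a right-funnel point — so $z(p)\le \lfloor f(P,p)/2\rfloor \le f(P,p)/2$. Summing over $p$ gives $\W(P)=\sum_p f(P,p)\ge 2\sum_p z(p)=2\,\zRects(P)$.

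I expect the locality claim to be the only real obstacle: one has to verify carefully, using just the strict coordinate orderings among $p,q,r,s$ and the distinctness of the coordinates in $P$, that the stray point $t$ genuinely lands strictly inside $\rect{pt}$ in the two side cases (and strictly inside the spanning rectangle in the middle case). The containment checks of the first step, the injectivity, and the counting of $\ttL$-blocks are all routine once that claim is in hand.
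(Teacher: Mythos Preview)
Your proof is correct and follows essentially the same strategy as the paper: group the \zr{}s by their top point $p$ and show that each one contributes a distinct left/right alternation in the funnel of $p$, giving $f(P,p)\ge 2\,z(p)$. The only cosmetic difference is that the paper sorts the \zr{}s at $p$ by $r.y$ and exhibits the alternating subsequence $q_1,s_1,q_2,s_2,\ldots,q_k,s_k$ explicitly, whereas you reach the same count via your locality claim and the injectivity of $(p,q,r,s)\mapsto q$.
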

We will use the fact that $P$ has distinct $x$- and $y$- coordinates.
\begin{proof}[Proof of Lemma~\ref{lemma:zrects-geq-funnel}]
We will show that for each $p \in P$, the funnel of $p$ induces at least $\floor{f(P,p)/2}-1$ different \zr{}s of the form $(p, \cdot, \cdot, \cdot)$. Summing this up for each $p$ then completes the proof.

Let's assume $f(P,p) \geq 4$; otherwise the claim holds vacuously. Let's number the points in $\FL \cup \FR$ (the funnel of $p$) by increasing $y$-coordinate as $a_1,a_2,\ldots,a_l$. Note that $l$ may be greater than $f(P,p)$, because a sequence of funnel points that are all on the same side of $p$ counts only for 1 in $f(P,p)$.

We will call $(i,j) \in [l]^2$ a \emph{left-straddling pair} if $i+1 < j$, $a_i.x > p.x$ and $a_j.x > p.x$, but for all $i < k < j$, $a_k.x < p.x$. That is, $a_i$ and $a_j$ are to the right of $p$ but all funnel points between them in order of height are to the left of $p$. Because funnel points alternate $f(P,p)-1$ times between the left and the right of $p$, there must be at least $\floor{f(P,p)/2}-1$ left-straddling pairs.

We claim that if $(i,j)$ is a left-straddling pair, then $(p,a_{i+1},a_i,a_j)$ is a \zr{}. Since all left-straddling pairs have distinct $i$, this produces $\floor{f(P,p)/2}-1$ distinct \zr{}s.

First, we verify that $p,a_{i+1},a_i,a_j$ have the correct relative positions. The order in $y$-coordinate is correct by definition of the numbering $a_1, \ldots, a_k$. For the order in $x$-coordinates, we know that $a_{i+1}$ is to the left of $p$ and $a_i, a_j$ are to its right, so we only need to verify that $a_i.x < a_j.x$. This is true because $a_i$ is in the funnel of $p$, so $\rect{pa_i}$ must be empty. If $a_i.x > a_j.x$, then $a_j$ would be in $\rect{pa_i}$.

What we still need to prove is that rectangle $[a_{i+1}.x,a_j.x] \times [a_i.y,p.y]$ is empty (except for points $p,a_{i+1},a_i,a_j$ themselves). First, since $a_i$, $a_{i+1}$ and $a_j$ in the funnel of $p$, we know that $\rect{pa_i}$, $\rect{pa_{i+1}}$ and $\rect{pa_j}$ are empty. This covers the zones pictured in Figure~\ref{fig:zrect-cover}.

\begin{figure}[h]
\centering

\newcommand{\access}[2]{\node[cross] (p#2) at (#1,#2) {};}
\begin{tikzpicture}[scale=0.5]
\draw[gray, pattern=north west lines, pattern color=gray] (2,4) rectangle (1,2);
\draw[gray, pattern=north east lines, pattern color=gray] (2,4) rectangle (3,1);
\draw[gray, pattern=north west lines, pattern color=gray] (2,4) rectangle (4,3);
\draw[gray] (1,1) rectangle (4,4);
\access{3}{1}
\access{1}{2}
\access{4}{3}
\access{2}{4}
\node[above=0mm of p4] {$p$};
\node[left=0mm of p2] {$a_{i+1}$};
\node[below=0mm of p1] {$a_i$};
\node[right=0mm of p3] {$a_j$};
\end{tikzpicture}
\caption{Proposed \zr{} $(p,a_{i+1},a_i,a_j)$ with empty rectangles $\rect{pa_{i}}$, $\rect{pa_{i+1}}$ and $\rect{pa_j}$ highlighted. If in addition we can prove that $\rect{a_ia_{i+1}}$ and $\rect{a_ia_j}$ are empty, then this is a valid \zr{}.}\label{fig:zrect-cover}
\end{figure}

Finally, we will prove that $\rect{a_ia_{i+1}}$ and $\rect{a_ia_j}$ are empty, which covers the missing parts.
\begin{itemize}
\item Assume $\rect{a_ia_{i+1}}$ is not empty, and let $b$ be the highest point of $P$ in it (except for $a_{i+1}$).  We have already shown that $\rect{pa_i}$ and $\rect{pa_{i+1}}$ are empty, so $\rect{pb}$ must be empty. This means that $b$ must be in the funnel of $p$. But $a_i.y < b.y < a_{i+1}.y$, so this contradicts the numbering by increasing $y$-coordinate.
\item Assume $\rect{a_ia_j}$ is not empty, and let $b$ be the highest point of $P$ in it (except for     $a_j$). We have already shown that $\rect{pa_i}$ and $\rect{pa_j}$ are empty, so $\rect{pb}$ must be empty. This means that $b$ must be in the (right) funnel of $p$. But this contradicts our assumption that all funnel points between $a_i$ and $a_j$ in $y$-coordinate must be to the left of $p$.
\end{itemize}
Since points $p,a_{i+1},a_i,a_j$ and $[a_{i+1}.x,a_j.x] \times [a_i.y,p.y]$ is empty, $(p,a_{i+1},a_i,a_j)$ is a \zr{}. This completes the proof of Lemma~\ref{lemma:zrects-geq-funnel}.
\end{proof}

\begin{proof}[Proof of Lemma~\ref{lemma:funnel-geq-zrects}]
Essentially, the reason why this is true is because all \zr{}s must be exactly of the form described in the previous proof.
We will prove something slightly weaker which still reaches the desired result. We will group the \zr{}s by their top point and show that if $P$ has $k$ rectangles of the form $(p,\cdot,\cdot,\cdot)$, then $f(P,p) \geq 2k$.

Fix $p$, and sort the $k$ \zr{}s by the increasing $y$-coordinate of their bottom point $r$. Name their points $(p,q_1,r_1,s_1)$ to $(p,q_k,r_k,s_k)$. First, we will show that there can be no ties. Indeed, if $r_i.y = r_j.y$ then $r_i=r_j$. Also, when the $p$ and $r$ (top and bottom) points of a \zr{} are fixed, then the other two points $q$ and $s$ are uniquely determined as the rightmost point in $(-\infty,p.x] \times [r.x,p.x]$ and the leftmost point in $[r.x,\infty) \times [r.x,p.x]$, respectively.

We will now prove that
\begin{equation}\label{eq:alternating-qs}
q_1.y < s_1.y < q_2.y < s_2.y < \cdots < q_k.y < s_k.y.
\end{equation}
The $q_i.y < s_i.y$ inequalities are true by the definition of a \zr{}, so we only need to prove $s_i.y < q_{i+1}.y$.
To do this, consider two consecutive \zr{}s $(p,q_i,r_i,s_i)$ and $(p,q_{i+1},r_{i+1},s_{i+1})$ (see Figure~\ref{fig:two-zrects}). Since $r_i.y < r_{i+1}.y$, $s_i$ can't be strictly to the right of $r_{i+1}$, because otherwise $r_{i+1}$ would be inside \zr{} $(p,q_i,r_i,s_i)$. In turn, this means that $s_i$ can't be strictly higher than $r_{i+1}$ because otherwise it would be inside $\rect{pr_{i+1}}$. Therefore, we have $s_i.y \leq r_{i+1}.y < q_{i+1}.y$.

\begin{figure}[h]
\centering

\newcommand{\access}[2]{\node[cross] (p#2) at (#1,#2) {};}
\begin{tikzpicture}[scale=0.5]
\draw[gray] (1,4) rectangle (8,7);
\draw[gray] (2,1) rectangle (5,7);
\access{4}{1}
\access{2}{2}
\access{5}{3}
\access{7}{4}
\access{1}{5}
\access{8}{6}
\access{3}{7}
\node[above=0mm of p7] {$p$};
\node[left=0mm of p2] {$q_i$};
\node[left=0mm of p5] {$q_{i+1}$};
\node[below=0mm of p1] {$r_i$};
\node[below=0mm of p4] (dup2) {$r_{i+1}$};
\node[right=0mm of p3] (dup1) {$s_i$};
\node[right=0mm of p6] {$s_{i+1}$};
\node[right] (text) at (6,1) {could be the same point};
\draw[->] (text) -- (dup1);
\draw[->] (text) -- (dup2);
\end{tikzpicture}
\caption{The only possible relative position of two \zr{} with the same top point $p$}\label{fig:two-zrects}
\end{figure}

Points $q_1, s_1, \ldots, q_k, s_k$ are all in the funnel of $p$ by the definition of \zr{}. Therefore,  Equation~\eqref{eq:alternating-qs} reveals $2k$ funnel points that alternate from the left to the right side of $p$ with increasing $y$-coordinates. Thus $\mix(\FL.y, \FR.y)$ contains a subsequence $\mathtt{LRLR\cdots LR}$ of length $2k$, and
\[f(P,p) = \blocks(\mix(\FL.y, \FR.y)) \geq \blocks(\underbrace{\mathtt{LRLR\cdots LR}}_\text{length $2k$}) = 2k.\]
Summing this up for each $p$ completes the proof.
\end{proof}

\begin{corollary}\label{cor:funnel-reverse}
$\W(P) \geq \W(\rev{P}) - O(m)$.
\end{corollary}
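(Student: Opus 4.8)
The plan is to read the corollary off from the three lemmas of this subsection together with the fact --- noted just before Figure~\ref{fig:hor-flip} --- that $\W$ is unchanged by a horizontal flip. Let $H$ denote the horizontal-flip map $p \mapsto (-p.x, p.y)$, extended pointwise to point sets, so that $\W(H(P)) = \W(P)$ for every $P$ (a horizontal flip only swaps the left and right funnel of each point, hence leaves $f(P,p)$ unchanged pointwise). The second ingredient I would use is that $H(P)$ and $\rev{P}$ differ by a $180^\circ$ rotation, i.e.\ by $\rot{\cdot}$ applied twice: concretely $\rotrot{(H(p))} = (p.x,-p.y) = \rev{p}$. Since Lemma~\ref{lemma:symmetry-zrects} gives $\zRects(Q) = \zRects(\rot{Q})$ for all $Q$, applying it twice yields $\zRects(\rev{P}) = \zRects(\rotrot{(H(P))}) = \zRects(H(P))$.

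With these two identities in place, I would finish with a one-line chain. Note first that $P$ has distinct $x$- and $y$-coordinates (the standing assumption of this section), hence so do $H(P)$ and $\rev{P}$, so Lemmas~\ref{lemma:funnel-geq-zrects} and~\ref{lemma:zrects-geq-funnel} apply to them. Then
\[
\W(P) \;=\; \W(H(P)) \;\ge\; 2\,\zRects(H(P)) \;=\; 2\,\zRects(\rev{P}) \;\ge\; \W(\rev{P}) - O(m),
\]
where the first inequality is Lemma~\ref{lemma:funnel-geq-zrects} applied to $H(P)$ and the last is Lemma~\ref{lemma:zrects-geq-funnel} applied to $\rev{P}$ (rearranged to $\W(\rev{P}) \le 2\,\zRects(\rev{P}) + O(m)$). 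This is exactly the claim, and by symmetry (replacing $P$ with $\rev{P}$, noting $\rev{\rev{P}} = P$) it also gives the reverse bound, so $\W(P)$ and $\W(\rev{P})$ agree up to $O(m)$.

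I do not expect any genuine obstacle here; the corollary is a direct dividend of the $\zRects$ characterization, and the only things worth stating carefully are that the single $O(m)$ loss enters through Lemma~\ref{lemma:zrects-geq-funnel} alone (the other steps being exact), and that the argument leans on two complementary symmetries at once --- rotation-invariance of $\zRects$ to pass from $\rev{P}$ to $H(P)$, and horizontal-flip-invariance of $\W$ to pass from $H(P)$ back to $P$. The reason one cannot shortcut this is precisely that $\W$ is \emph{not} visibly invariant under rotations (only under flips), whereas $\zRects$ is --- which is what the whole characterization was designed to exploit.
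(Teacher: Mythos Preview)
Your proof is correct and essentially the same as the paper's: both combine Lemma~\ref{lemma:funnel-geq-zrects}, Lemma~\ref{lemma:symmetry-zrects} (applied twice), Lemma~\ref{lemma:zrects-geq-funnel}, and the horizontal-flip invariance of $\W$. The only cosmetic difference is where the flip is applied: the paper runs the chain $\W(P) \ge 2\,\zRects(P) = 2\,\zRects(\rotrot{P}) \ge \W(\rotrot{P}) - O(m) = \W(\rev{P}) - O(m)$, invoking flip-invariance at the end (since $\rotrot{P}$ and $\rev{P}$ differ by a horizontal flip), whereas you invoke it at the start by passing to $H(P)$.
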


\begin{proof}
By the left-right symmetry of $\W(\cdot)$, we know that $\W(\rev{P}) = \W(\rotrot{P})$, where $\rotrot{P}$ is $P$ rotated by 180\textdegree. Therefore,
\begin{align*}
\W(P)
&\geq 2\cdot\zRects(P)\tag{Lemma~\ref{lemma:funnel-geq-zrects}}\\
&= 2\cdot\zRects(\rotrot{P})\tag{Lemma~\ref{lemma:symmetry-zrects}}\\
&\geq \W(\rotrot{P}) - O(m)\tag{Lemma~\ref{lemma:zrects-geq-funnel}}\\
&= \W(\rev{P}) - O(m).
\end{align*}
\end{proof}

We can now finally prove Theorem~\ref{thm:domination}.
\begin{proof}[Proof of Theorem~\ref{thm:domination}]
By Lemma~\ref{lemma:two-sided},
$\ALT_\cT(P) \leq \W(P) + \W(\rev{P})$. Combining this with Corollary~\ref{cor:funnel-reverse}, we obtain $\ALT_\cT(P) \leq \W(P) + (\W(P) + O(m)) \leq O(\W(P) + m)$.
\end{proof}

\section{Separation between the \IB{} and the \FB{}}

\newcommand{\sL}{s_\ttL}
\newcommand{\sR}{s_\ttR}
\newcommand{\sLeft}{\sL}
\newcommand{\sRight}{\sR}
We will now define an access sequence $\xTilde$ such that the \IB{} is too low for all reference trees $\cT$ simultaneously. More precisely, we will define an access sequence $\xTilde \in [n]^m$ such that $\ALT_\cT(\xTilde) = O(m)$ for all trees $\cT$ while on the other hand $\OPT(\xTilde)$ and $\W(\xTilde)$ are $\Theta(m\lg\lg n)$. This $\lg\lg n$ factor is the biggest possible separation: indeed, Tango trees show that for a balanced tree $\cT$, $\ALT_\cT(\x)$ is always within $O(\lg \lg n)$ of $\OPT(\x)$.

To define $\xTilde$, we will need the notion of a \emph{bit-reversal} sequence. This is a permutation that in a sense looks ``maximally shuffled'' to a binary search tree.
\newcommand{\bitrev}{\mathrm{bitReversal}}
\begin{definition}
Let $k$ be a positive integer and let $K=2^k$. Then let $\bitrev^k \in \{0,\cdots,K-1\}^K$ be the sequence where $\bitrev^k_i$ is the number obtained by taking the binary representation of $i-1$, padding it with leading zeroes to reach length $k$, flipping it, then converting this back to a number.
\end{definition}
It is easiest to understand through an example. Take $k=2$, then $\bitrev^2$ is obtained this way:
\[(0,1,2,3) \xrightarrow{\text{to binary}} (00,01,10,11) \xrightarrow{\text{flip}} (00,10,01,11) \xrightarrow{\text{from binary}} (0,2,1,3).\]

The reason why we use this sequence is the following well-known fact.
\begin{fact}\label{fact:bitrev}
Let $\cT$ be the complete binary tree of height $k$ which has $K$ leaves labeled $0$ through $K-1$. Then $\ALT_\cT(\bitrev^k)=kK = K\lg K$.
\end{fact}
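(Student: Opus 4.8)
The plan is to apply the recursive definition of $\ALT_\cT$ (Definition~\ref{def:ib}) and exploit the self-similar structure of the bit-reversal permutation. First I would observe that $\cT$ is the perfectly balanced binary tree on $K=2^k$ leaves, so at the root it splits $\{0,\ldots,K-1\}$ into the lower half $L^\star = \{0,\ldots,K/2-1\}$ and the upper half $R^\star = \{K/2,\ldots,K-1\}$, and the two subtrees $\TL,\TR$ are each the perfectly balanced tree of height $k-1$. The key structural fact about $\bitrev^k$ is this: a key lies in $L^\star$ iff its binary representation has leading bit $0$, which (after flipping) corresponds to the \emph{last} bit of $i-1$ being $0$, i.e. to $i$ being odd. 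Hence as we read $\bitrev^k$ in time order $i=1,2,\ldots,K$, the keys alternate perfectly: odd positions give keys in $L^\star$, even positions give keys in $R^\star$. Therefore $\mix(\PL.y,\PR.y) = \mathtt{LRLR}\cdots\mathtt{LR}$ of length $K$, so $a(G_{\bitrev^k},\cT) = \mixValue(\PL.y,\PR.y) = K$.

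Next I would identify the subproblems. The keys in $L^\star$, taken in their time order, are exactly $\bitrev^k_1,\bitrev^k_3,\bitrev^k_5,\ldots$; dropping the (now constant) leading $0$ bit and renumbering, these are precisely the sequence $\bitrev^{k-1}$ on the index set $\{0,\ldots,K/2-1\}$. Symmetrically, the keys in $R^\star$ in time order, after subtracting $K/2$, form $\bitrev^{k-1}$ as well. Formally, $\PL$ is an order-isomorphic copy of $G_{\bitrev^{k-1}}$ and likewise for $\PR$, and $\ALT$ depends only on the order type of the point set together with the combinatorial tree, both of which match. Hence $\ALT_\TL(\PL) = \ALT_\TR(\PR) = \ALT_{\cT'}(\bitrev^{k-1})$ where $\cT'$ is the balanced tree of height $k-1$.

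Putting these together with \eqref{eq:ib} gives the recurrence
\[
\ALT_\cT(\bitrev^k) = K + 2\,\ALT_{\cT'}(\bitrev^{k-1}),
\]
with base case $k=0$: then $K=1$, $\cT$ is a single leaf, and $\ALT_\cT(\bitrev^0)=0$. Unrolling, $\ALT_\cT(\bitrev^k) = \sum_{j=1}^{k} 2^{k-j}\cdot 2^j = \sum_{j=1}^k 2^k = k\cdot 2^k = kK = K\lg K$, as claimed. I would write this as a short induction on $k$ rather than an explicit unrolling, to keep the bookkeeping clean.

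The only genuinely delicate point — and the step I would spell out carefully — is the claim that the even/odd split of time positions corresponds exactly to the $L^\star/R^\star$ split of keys, and that each half is again a bit-reversal sequence. This is just a statement about how flipping a $k$-bit string interacts with reading off the first bit versus the last bit, but it is the crux of the argument, so I would state it as a small auxiliary lemma (``the first bit of $\bitrev^k_i$ is the parity of $i-1$, and deleting it yields $\bitrev^{k-1}$ on the odd/even subsequence'') and verify it directly from the definition before invoking it in the induction.
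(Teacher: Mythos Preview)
Your proposal is correct and is essentially the same argument as the paper's, just packaged inductively rather than as a direct count: the paper observes in one line that at every internal node $u$ of $\cT$ the accesses alternate perfectly between the two subtrees, so each node contributes the number of leaves below it, and summing level by level gives $kK$. Your induction on $k$ makes the same observation at the root (perfect $\mathtt{LRLR}\cdots$ alternation giving $a=K$) and then invokes the self-similarity $\PL,\PR \cong G_{\bitrev^{k-1}}$ to recurse; the recurrence $K + 2\cdot(\text{size-}K/2\text{ instance})$ is just the level-by-level sum unrolled one step at a time.
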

\begin{proof}
Because of the way $\bitrev^k$ is defined, for each node $u \in \cT$, the keys that are accessed below $u$ as the sequence is processed constantly alternate from $u$'s left subtree to $u$'s right subtree. So the contribution of $u$ is exactly the number of keys of its subtree. This way, every key is counted once at each of the $k=\lg K$ levels, so the total is $K \lg K$.
\end{proof}

We can now define our access sequence as follows. Let $n\coloneqq 2^K = 2^{2^k}$, and let
\[S_i \coloneqq (i + 2^{\bitrev^k_1}, i + 2^{\bitrev^k_2}, \ldots, i+2^{\bitrev^k_K}).\]
Then, denoting concatenation by $\circ$, we define
\[\xTilde \coloneqq \underbrace{S_0 \circ \cdots \circ S_0}_\text{$n$ times} \circ \underbrace{S_1 \circ \cdots \circ S_1}_\text{$n$ times} \circ \cdots \circ \underbrace{S_{n/2} \circ \cdots \circ S_{n/2}}_\text{$n$ times}.\]
The range of $\xTilde$ is $[n]$ and its length is $m = (\frac{n}{2}+1) \cdot n \cdot K = \Theta(n^2 \lg n)$.
See Figure~\ref{fig:separation-seq} for an example with $k=2$.
We will prove that for all $\cT$, $\ALT_\cT(\xTilde) \leq O(m)$ while on the other hand $\W(\xTilde) \geq \Omega(m \lg\lg n)$.
\begin{lemma}\label{lemma:alt-weak}
For any $\cT$, $\ALT_\cT(\xTilde) \leq O(m)$.
\end{lemma}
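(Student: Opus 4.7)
The plan is to decompose $\xTilde = S_0^n \circ S_1^n \circ \cdots \circ S_{n/2}^n$ and bound the alternation contributed by each block at each internal node of $\cT$. By subadditivity of the block count under string concatenation (hence of $\ALT_\cT$), we have $\ALT_\cT(\xTilde) \leq \sum_{i=0}^{n/2} \ALT_\cT(S_i^n)$. At each internal node $v$ of $\cT$, the $\ttL$/$\ttR$ string induced by $S_i^n$ is the $n$-fold repetition $s_v(S_i)^n$ of the string from one copy of $S_i$. Crucially, when $s_v(S_i)$ is a constant string, $\blocks(s_v(S_i)^n) = 1$, whereas when $s_v(S_i)$ contains both $\ttL$ and $\ttR$, $\blocks(s_v(S_i)^n) \leq n \cdot \blocks(s_v(S_i))$. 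So the total cost splits cleanly into a \emph{constant} term (contributing $1$ per non-empty $(v,i)$ pair where $s_v(S_i)$ is constant) and a \emph{non-constant} term (dominated by $n \cdot \blocks(s_v(S_i))$).

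A standard access-depth counting argument bounds the constant term by $O(m)$. For the non-constant term, I would exploit two structural properties of each $S_i$: its $K$ keys $\{i + 2^j\}_{j=0}^{K-1}$ are geometrically spaced, and they are enumerated in bit-reversal order. Let $j^*_v(i) \coloneqq \lfloor \log_2(c_v - i) \rfloor$, where $c_v$ is the split at $v$; the bit-reversal structure then gives $\blocks(s_v(S_i)) = O(\min(j^*_v(i) + 1,\, K - j^*_v(i) - 1))$. Moreover, $(v, i)$ is non-constant only when $v$'s subtree straddles $c_v$ with keys of $S_i$ on both sides, which typically forces $v$ to be the lowest common ancestor in $\cT$ of an adjacent-index key pair $(i+2^{j^*}, i+2^{j^*+1})$ of $S_i$; this bounds the total number of non-constant pairs by $O(nK)$.

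The main obstacle is tightening these two bounds to reach $O(m)$ rather than the naive $O(m \log K) = O(m \log\log n)$ that a direct product yields. I plan a dyadic charging argument: for fixed $v$, as $i$ sweeps $[0, n/2]$, the quantity $c_v - i$ takes values in each dyadic interval $[2^j, 2^{j+1})$ for exactly $2^j$ consecutive $i$'s, and weighting the block count $\min(j+1, K-j-1)$ against this dyadic mass telescopes (via $\sum_j 2^j\cdot\min(j+1, K-j-1) = O(2^K) = O(n)$) to $O(n)$ per node rather than a naive $O(nK)$. Aggregating this per-node bound across all internal nodes of $\cT$ via the LCA characterization of non-constant witnesses (which prevents overcounting the same non-constant pair at several ancestors) yields a total non-constant contribution of $O(nK)$. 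Multiplying by the factor $n$ from the inner repetition, we conclude $\ALT_\cT(\xTilde) \leq O(n^2 K) = O(m)$. The most delicate step is the LCA-based aggregation, which must absorb the $\log K$ factor without losing the geometric gains from bit-reversal and the dyadic distribution of $c_v - i$.
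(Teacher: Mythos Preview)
Your decomposition $\ALT_\cT(\xTilde)\le\sum_i\ALT_\cT(S_i^n)$ and your constant/non-constant split are fine, and the ``constant'' term is indeed $O(n^2)\le O(m)$ by trivially counting the at most $(n-1)(n/2+1)$ pairs $(v,i)$. Two things, however, do not hold up.

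First, the bit-reversal order is a red herring here. The bound you use, $\blocks(s_v(S_i))=O(\min(j^*+1,K-j^*-1))$, is nothing but the generic inequality $\mixValue(L,R)\le 2\min(|L|,|R|)+1$ applied to the left/right split at $c_v$; it holds for \emph{any} enumeration of the keys of $S_i$. The paper explicitly notes that the proof of this lemma depends only on the \emph{set} $\{i+2^0,\ldots,i+2^{K-1}\}$, not on the bit-reversal ordering, which is used only to make $\W(\xTilde)$ large.

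Second, and this is the real gap, your aggregation step does not yield $O(nK)$. You have two separate facts: (i) for each fixed $v$, $\sum_i \min(l_i(v),r_i(v))=O(n)$ by the dyadic argument, and (ii) there are at most $O(nK)$ non-constant pairs $(v,i)$ by the LCA observation. But (i) summed over all $n-1$ internal nodes only gives $O(n^2)$, and (ii) with the per-pair bound $\blocks\le K$ only gives $O(nK^2)$; either way the result is $O(m\lg\lg n)$, not $O(m)$. Attempting to combine them more carefully---each non-constant $(v,i)$ has $j^*_v(i)=j$ for the unique $j$ with $v=\mathrm{LCA}(i+2^j,i+2^{j+1})$, and summing $\min(j{+}1,K{-}j{-}1)$ over the $O(n)$ choices of $i$ for each $j$---still produces $\sum_j O(n)\cdot\min(j{+}1,K{-}j{-}1)=O(nK^2)$. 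The $\lg K$ factor is not absorbed.

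What the paper does instead is precisely the missing idea: it does \emph{not} bound $\sum_i\min(l_i,r_i)$ by a quantity depending on $K$, but rather by one depending on the \emph{subtree sizes} at $v$. Concretely, if the left and right subtrees of $v$ have $s_\ttL$ and $s_\ttR$ leaves with $s=s_\ttL+s_\ttR$, the paper shows $\sum_i\min(l_i,r_i)\le C\,s_\ttL\lg(s/s_\ttL)$ when $s_\ttL\le s_\ttR$ (and symmetrically otherwise), using the geometric spacing of $S_i$ in a case analysis. This is then fed into an induction on $\cT$ via the weighted-path-length recursion $s_\ttL\lg s_\ttL+s_\ttR\lg s_\ttR+s_\ttL\lg(s/s_\ttL)\le s\lg s$, yielding $\sum_v\sum_i\min(l_i,r_i)=O(n\lg n)=O(nK)$ in total. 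That entropy-style induction over the tree, tying the per-node contribution to the subtree sizes rather than to $K$, is the step your LCA/dyadic combination cannot replace.
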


\begin{lemma}\label{lemma:funnel-strong}
$\W(\xTilde) \geq \Omega(m \lg\lg n)$.
\end{lemma}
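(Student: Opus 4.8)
The plan is to reduce the statement to a \emph{single} bit-reversal sequence, exploiting the fact that---in stark contrast to $\ALT_\cT$---the Funnel bound is \emph{exactly} super-additive under concatenation in time. Write $\xTilde = C_1 \circ C_2 \circ \cdots \circ C_N$, where the $C_a$ are the individual copies of the $S_i$'s, so that $N = (\tfrac n2+1)\cdot n$ and $m = NK$. If I can show $\W(\xTilde) \ge \sum_{a=1}^N \W(C_a)$ and that each $\W(C_a) = \Omega(K\lg K)$, then $\W(\xTilde) \ge \Omega(NK\lg K) = \Omega(m\lg\lg n)$, since $K = \lg n$ and $\lg K = k = \lg\lg n$.

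\textbf{Super-additivity under time-concatenation.} First I would prove $\W(A\circ B) \ge \W(A) + \W(B)$ for any access sequences $A, B$. In $G_{A\circ B}$, every point of $G_A$ lies strictly below every point of $G_B$. Hence for $p \in G_A$ the funnel of $p$ is computed entirely within $G_A$ and is unchanged, so $f(G_{A\circ B}, p) = f(G_A, p)$. For $p \in G_B$, any $q$ in the funnel of $p$ within $G_B$ stays in the funnel of $p$ within $G_{A\circ B}$: the rectangle $\rect{pq}$ lies inside $B$'s time range, so it contains no point of $G_A$ and remains empty. Introducing the points of $G_A$ can thus only \emph{add} elements to the funnel of $p$, and by monotonicity of $\mixValue$ (Fact~\ref{fact:props-mixvalue}(b)) this can only increase $f(G_{A\circ B},p)$. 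Summing over all $p$ gives $\W(A\circ B) \ge \W(A) + \W(B)$, and iterating over the $N$ blocks yields $\W(\xTilde) \ge \sum_a \W(C_a)$.

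\textbf{One block equals one bit-reversal.} The quantity $\W(P)$ depends only on the order types of $P.x$ and $P.y$, since funnels (emptiness of axis-aligned rectangles, lower-left/lower-right position) and $\mixValue$ are defined purely in terms of order. Within a copy of $S_i$ the keys $i + 2^{\bitrev^k_1}, \ldots, i + 2^{\bitrev^k_K}$ occur in the same relative order as $\bitrev^k_1, \ldots, \bitrev^k_K$ because $t \mapsto i + 2^t$ is increasing, so $\W(C_a) = \W(G_{\bitrev^k})$ for every $a$. It therefore suffices to show $\W(G_{\bitrev^k}) = \Omega(K\lg K)$, and this follows from the machinery already developed: taking $\cT$ to be the complete binary tree of height $k$ on the leaves $0, \ldots, K-1$ (which satisfies the conditions of Definition~\ref{def:ib} for $P = G_{\bitrev^k}$), Lemma~\ref{lemma:two-sided} gives $\W(G_{\bitrev^k}) + \W(\rev{G_{\bitrev^k}}) \ge \ALT_\cT(G_{\bitrev^k}) = K\lg K$ by Fact~\ref{fact:bitrev}, while Corollary~\ref{cor:funnel-reverse} gives $\W(\rev{G_{\bitrev^k}}) \le \W(G_{\bitrev^k}) + O(K)$. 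Combining, $\W(G_{\bitrev^k}) \ge \tfrac12 K\lg K - O(K) = \Omega(K\lg K)$ (using $\lg K = k = \lg\lg n \to \infty$), and hence $\W(\xTilde) \ge N\cdot\Omega(K\lg K) = \Omega(m\lg\lg n)$.

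\textbf{Where the difficulty is.} I do not expect this lemma to be the hard part: Lemma~\ref{lemma:two-sided} and Corollary~\ref{cor:funnel-reverse} do the real work, and Fact~\ref{fact:bitrev} is classical. The one point that deserves care is the \emph{direction} of the concatenation step---we need the \emph{exact} inequality $\W(A\circ B) \ge \W(A) + \W(B)$, with no additive slack. This is precisely the structural feature distinguishing $\W$ from $\ALT_\cT$: the analogous step for $\ALT_\cT$ loses a unit per node of $\cT$ at each of the $N$ junctions, a total loss of $\Theta(Nn)$, which dwarfs $m$ here---and that loss is exactly what forces $\ALT_\cT(\xTilde)$ down to $O(m)$. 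Routing through a single bit-reversal, rather than applying the Section~\ref{sec:zrects} machinery directly to $G_{\xTilde}$, also conveniently sidesteps the fact that $\xTilde$ has repeated keys (so $G_{\xTilde}$ lacks the distinct $x$-coordinates that Corollary~\ref{cor:funnel-reverse} relies on). The genuinely involved case analysis of this section lives in the companion Lemma~\ref{lemma:alt-weak}, not here.
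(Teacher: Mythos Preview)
Your proposal is correct and follows essentially the same approach as the paper: super-additivity of $\W$ under time-concatenation to reduce to a single $S_i$, order-invariance to identify $\W(S_i)$ with $\W(\bitrev^k)$, and then the domination result (which the paper cites as Theorem~\ref{thm:domination}, while you invoke its ingredients Lemma~\ref{lemma:two-sided} and Corollary~\ref{cor:funnel-reverse} directly) together with Fact~\ref{fact:bitrev} to get $\W(\bitrev^k)=\Omega(K\lg K)$. Your added remark that working at the level of a single permutation block sidesteps the repeated-key issue for Corollary~\ref{cor:funnel-reverse} is a nice observation the paper leaves implicit.
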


The combination of Lemma~\ref{lemma:alt-weak} and Lemma~\ref{lemma:funnel-strong} shows the separation claimed in Theorem~\ref{thm:separation}. 
Before we move to the proofs of those lemmas, let's go over some intuition for the proof of Lemma~\ref{lemma:alt-weak}, which is the more complicated one.

First, note that the only reason we use $\bitrev^k$ in $\xTilde$ is to make $\W(\xTilde)$ large. Replacing $\bitrev^k$ by any other permutation of $\{0,\ldots,K-1\}$ would not affect the proof of Lemma~\ref{lemma:alt-weak} in any way because that proof only looks at the \emph{set} of keys that are hit by each of the parts $S_0, \ldots, S_{n/2}$.

The general intuition of the proof of Lemma~\ref{lemma:alt-weak} is that while one tree could give a high lower bound for \emph{one} of the sequences $S_i$, no tree can give a high lower bound \emph{on average} over all $S_i$. The reason is that, given the geometric spacing of each $S_i$, any way to split an interval of keys into two will typically (on average over $i$) leave almost all the keys of $S_i$ in either the left or the right part (Claim~\ref{claim:min-li-ri}). Therefore, it is impossible to split the keys into subtrees in a way that would ensure a high number of alternations.

\begin{figure}[h]
\centering
\newcommand{\access}[2]{\node[cross] (p#2) at (#1,#2) {};}
\newcommand{\Szero}[3]{
\access{1+#1}{1+#2}
\access{4+#1}{2+#2}
\access{2+#1}{3+#2}
\access{8+#1}{4+#2}
\draw [decorate,decoration={brace,amplitude=2mm},xshift=-4mm,yshift=0pt]
(1,0.8+#2) -- (1,4.2+#2) node [midway,left,xshift=-2mm] {#3};
}
\newcommand{\customDots}[3]{
\node at (-0.5,1.8+#2) {$\vdots$};
\node at (4.5+#1,1.85+#2) {#3};
}
\newcommand{\group}[3]{
\Szero{#1}{#2}{#3}
\customDots{#1}{#2+4}{$\vdots$}
\Szero{#1}{#2+6}{#3}
}

\begin{tikzpicture}[scale=0.3]
\draw[step=1cm,lightgray,thin] (1,1) grid (16,34);
\group{0}{0}{$S_0$}
\group{1}{10}{$S_1$}
\customDots{4.5}{21}{\reflectbox{$\ddots$}}
\group{8}{24}{$S_{n/2}$}
\end{tikzpicture}

\caption{A schematic view of sequence $\xTilde$ for $k=2$. Each part $S_i$ is made of $K=2^k=4$ accesses. There are $n=2^K=16$ distinct keys and the length of $\xTilde$ is $m=(16/2+1)nK=576$.}\label{fig:separation-seq}
\end{figure}

\begin{proof}[Proof of Lemma~\ref{lemma:alt-weak}]
The first step of the proof is to decompose $\xTilde$ into substrings $S_0 \circ \cdots \circ S_0$ through $S_{n/2} \circ \cdots \circ S_{n/2}$, and then bound the sum of their Alternation bounds. Let's denote those substrings as $S_0 * n$, $S_1 * n$, \ldots, $S_{n/2}*n$. Because of the subadditivity of $\mixValue$ under concatenation (Fact~\ref{fact:props-mixvalue}), we have
\begin{equation}\label{eq:deconcatenate}
\ALT_\cT(\xTilde) \leq \sum_{i=0}^{n/2} \ALT_\cT(S_i * n).
\end{equation}

Note that we don't want to decompose $\xTilde$ down to the $S_i$'s themselves: every time we split it, our analysis loses up to an additive $O(n)$ in precision. Intuitively, this $O(n)$ is due to a ``warmup'' cost which we might or might not incur at the beginning of each substring, depending on which parts of the tree were last visited. With our decomposition into $n$ substrings, that's an extra $O(n^2)$ cost, which is okay since it is small compared to the total length of the sequence $\Theta(n^2 \log n)$. In fact, this is precisely why we repeated each $S_i$ several times: if we had defined $\xTilde$ as $S_0 \circ S_1 \circ \cdots \circ S_{n/2}$ instead, this $O(n^2)$ would have been large compared to the length of the sequence $\Theta(n \log n)$.\footnote{The astute reader will notice that we could have repeated each $S_i$ only $\Theta(n / \log n)$ times instead of $n$ times. But we are not limited in terms of the length of $\xTilde$, so it was (notationally) simpler to repeat them $n$ times.}

We will upper-bound the sum $\sum_i \ALT_\cT(S_i * n)$ by induction on the recursive definition of $\ALT_\cT(\cdot)$. 
Concretely, let $\cT^*$ be a subtree of $\cT$, and let $\TL^*,\TR^*$ be the left and right subtrees of $\cT^*$. Let $s$, $\sL$ and $\sR$ be the number of keys in $\cT^*$, $\TL^*$ and $\TR^*$ (note that $s=\sL+\sR$). For each $i$, let $P_i^*$ be the subset of $P(S_i * n)$ corresponding to keys in $\cT^*$, and let $P_{i,\ttL}^*,P_{i,\ttR}^*$ be the same for $\TL^*$ and $\TR^*$. We will prove the following claim by induction:
\begin{claim}\label{claim:induction-log}
For some constant $C>0$,
\[\sum_{i=0}^{n/2} \ALT_{\cT^*}(P_i^*) \leq (s-1)(n/2+1) + 2Cns \lg s.\]
\end{claim}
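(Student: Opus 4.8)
The plan is to induct on the structure of the subtree $\cT^*$, following the recursive definition of $\ALT$. The base case is when $\cT^*$ is a single leaf: then $s=1$, $\ALT_{\cT^*}(P_i^*)=0$ for every $i$, and the right-hand side $(s-1)(n/2+1)+2Cns\lg s = 0 + 0 = 0$, so equality holds. For the inductive step, I would write $\ALT_{\cT^*}(P_i^*) = a(P_i^*,\cT^*) + \ALT_{\TL^*}(P_{i,\ttL}^*) + \ALT_{\TR^*}(P_{i,\ttR}^*)$ and sum over $i$. The two recursive terms are controlled by the induction hypothesis applied to $\TL^*$ (with $\sL$ keys) and $\TR^*$ (with $\sR$ keys), contributing at most $(\sL-1)(n/2+1) + 2Cn\sL\lg\sL$ and $(\sR-1)(n/2+1) + 2Cn\sR\lg\sR$ respectively. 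Since $\sL+\sR = s$, adding these gives $(s-2)(n/2+1) + 2Cn(\sL\lg\sL + \sR\lg\sR) \le (s-2)(n/2+1) + 2Cn s\lg s - 2Cn\cdot(\text{something positive})$, using the standard concavity bound $\sL\lg\sL + \sR\lg\sR \le s\lg s - \min(\sL,\sR)$ (or a similar entropy-type inequality). The remaining slack $2Cn\min(\sL,\sR)$ plus the unused $(n/2+1)$ from going from $s-2$ to $s-1$ is what must absorb the sum $\sum_i a(P_i^*,\cT^*)$.

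So the crux is to bound $\sum_{i=0}^{n/2} a(P_i^*,\cT^*) = \sum_i \mixValue(P_{i,\ttL}^*.y,\, P_{i,\ttR}^*.y)$. Here I would use the last part of Fact~\ref{fact:props-mixvalue}: $\mixValue(L,R) \le 2\min(|L|,|R|) + 1$. For a fixed $i$, the set $P_i^*$ consists of $n$ repeated copies of the accesses in $S_i$ whose keys land in $\cT^*$; let $\ell_i$ (resp. $r_i$) be the number of \emph{distinct} keys of $S_i$ falling in $\TL^*$ (resp. $\TR^*$). Then $|P_{i,\ttL}^*| = n\ell_i$, $|P_{i,\ttR}^*| = n r_i$, so $a(P_i^*,\cT^*) \le 2n\min(\ell_i,r_i) + 1$. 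Summing, $\sum_i a(P_i^*,\cT^*) \le 2n\sum_i \min(\ell_i,r_i) + (n/2+1)$. The $(n/2+1)$ term is exactly what lets us upgrade $s-2$ back to $s-1$ on the right-hand side, so everything reduces to showing $\sum_{i=0}^{n/2}\min(\ell_i,r_i) \le $ something like $C\cdot(\min(\sL,\sR)\cdot\text{const})$ — more precisely, I expect the needed bound is $\sum_i \min(\ell_i,r_i) = O(1)$ \emph{per split}, or at worst $O(\lg s)$, which is presumably isolated as Claim~\ref{claim:min-li-ri} referenced in the intuition paragraph.

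The main obstacle — and the combinatorial heart of the argument — is exactly this bound on $\sum_{i=0}^{n/2}\min(\ell_i,r_i)$. The keys accessed by $S_i$ are $\{i+1, i+2, i+4, i+8, \ldots, i+2^{K-1}\}$, a geometrically-spaced set shifted by $i$. The subtree $\cT^*$ governs some contiguous interval $I$ of $[n]$, and $\TL^*,\TR^*$ split $I$ at some point $t$. The claim to prove is that, as $i$ ranges over $\{0,1,\ldots,n/2\}$, for all but $O(1)$ (or $O(\lg n)$) values of $i$ the split point $t$ has \emph{all} of $S_i$'s keys-in-$I$ on one side. This is where the geometric spacing is essential: consecutive keys $i+2^j$ and $i+2^{j+1}$ of $S_i$ differ by $2^j$, so the split at $t$ can separate them only when $t$ lies in the narrow window $(i+2^j,\, i+2^{j+1}]$; shifting $i$ by $1$ shifts this window, and a counting argument over $i$ shows the total number of "bad" $i$ (those for which $t$ falls strictly between two consecutive keys of $S_i$) is small. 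I would phrase this cleanly by fixing the split point $t$ and the interval $I$, and for each scale $j$ counting how many $i$ have $i+2^j \le t' < i+2^{j+1}$ for the relevant truncated endpoint $t'$ — each scale contributes $O(1)$ bad values of $i$ after accounting for the at-most-one key on the short side, and summing over the $O(\lg n) = O(\lg\lg n)\cdot$... scales gives the bound. Once Claim~\ref{claim:min-li-ri} is in hand, the induction closes by choosing $C$ large enough to dominate the constants from Fact~\ref{fact:props-mixvalue} and the concavity inequality, and Lemma~\ref{lemma:alt-weak} follows by applying the claim to the root, where $s = n$ and $s\lg s = n\lg n = \Theta(n\lg n)$, so $2Cn\cdot n\lg n = O(n^2\lg n) = O(m)$.
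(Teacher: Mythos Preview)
Your skeleton is right---induct on $\cT^*$, peel off $\sum_i a(P_i^*,\cT^*)$ via $\mixValue(L,R)\le 2\min(|L|,|R|)+1$, and absorb it using the concavity gap in $\sL\lg\sL+\sR\lg\sR\le s\lg s$---and this is exactly what the paper does. But the quantitative matching between the slack and the sum-of-mins is where your proposal breaks.

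First, your guess $\sum_i\min(\ell_i,r_i)=O(1)$ or $O(\lg s)$ is too optimistic. Take $\sL=\sR=s/2$: for $i$ just below the split point $b$, many small powers $i+2^0,i+2^1,\ldots$ land in the left half and several large powers land in the right half, giving $\min(\ell_i,r_i)\ge 2$ for $\Theta(s)$ values of $i$; the sum is $\Theta(s)$, not $O(\lg s)$. Your ``each scale contributes $O(1)$ bad $i$'' is also off: the window $(i+2^j,i+2^{j+1}]$ has width $2^j$, so scale $j$ contributes $\Theta(2^j)$ bad values of $i$, not $O(1)$. The paper's Claim~\ref{claim:min-li-ri} shows the correct bound is $\sum_i\min(\ell_i,r_i)\le C\cdot\sL\lg(s/\sL)$ when $\sL\le\sR$ (and symmetrically otherwise); this interpolates between $\Theta(\lg s)$ at $\sL=1$ and $\Theta(s)$ at $\sL=s/2$.

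Second, your declared slack $2Cn\min(\sL,\sR)$ is too small to absorb even the true bound above: at $\sL=1$ you have slack $2Cn$ but need to absorb $2n\cdot\Theta(\lg s)$. The fix is to use the \emph{full} entropy gap $s\lg s-(\sL\lg\sL+\sR\lg\sR)=\sL\lg(s/\sL)+\sR\lg(s/\sR)$, and in particular the term $\sL\lg(s/\sL)$, which is exactly what Claim~\ref{claim:min-li-ri} produces. The induction then closes verbatim as in the paper: $Cn(\sL\lg\sL+\sR\lg\sR)+Cn\,\sL\lg(s/\sL)=Cn(\sL\lg s+\sR\lg\sR)\le Cn\,s\lg s$. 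So the missing idea is that both the available slack and the required bound are $\Theta(\sL\lg(s/\sL))$, and proving the latter (Claim~\ref{claim:min-li-ri}) requires a more careful case analysis than the per-scale $O(1)$ count you sketched.
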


The base case is when $\cT^*$ is a single node. Then $\ALT_{\cT^*}(S_i * n)=0$ for all $i$, while $s=1$, so the result holds. 
To deal with the inductive step, we will need make a few tools first.
By definition of the \IB{} (Definition~\ref{def:ib}), for each $i$ we have
\begin{equation}\label{eq:apply-def-ib}
\ALT_{\cT^*}(P_i^*) = a(P_i^*,\cT^*) + \ALT_{\TL^*}(P_{i,\ttL}^*) + \ALT_{\TR^*}(P_{i,\ttR}^*).
\end{equation}
The challenging part is how to deal with $a(P_i^*,\cT^*)$. By Fact~\ref{fact:props-mixvalue}, we have
\[a(P_i^*,\cT^*) = \mixValue(P_{i,\ttL}^*.y,P_{i,\ttR}^*.y) \leq 2\cdot\min(|P_{i,\ttL}^*|,|P_{i,\ttR}^*|) + 1.\]
Summing this up over all $i$, we get
\begin{equation}\label{eq:leq-min}
\sum_{i=0}^{n/2} a(P_i^*,\cT^*) \leq \sum_{i=0}^{n/2}(2\cdot\min(|P_{i,\ttL}^*|,|P_{i,\ttR}^*|) + 1) = (n/2 +1)+ 2\cdot\sum_{i=0}^{n/2} \min(|P_{i,\ttL}^*|,|P_{i,\ttR}^*|).
\end{equation}

\begin{claim}\label{claim:min-li-ri}
For some constant $C>0$,
\[
\sum_{i=0}^{n/2} \min(|P_{i,\ttL}^*|,|P_{i,\ttR}^*|) \leq Cn \cdot
\begin{cases}
\sLeft\,\lg\frac{s}{\sLeft}\text{ if $\sL\leq\sR$, and}\\
\sRight\,\lg\frac{s}{\sRight}\text{ if $\sR\leq\sL$.}
\end{cases}
\]
This left-right symmetry is very surprising given that the sequences $S_i$ themselves are \emph{not} left-right symmetric. But it will be very convenient.
\end{claim}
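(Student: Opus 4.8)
The plan is to first strip the claim down to a self-contained combinatorial statement about how a geometrically-spaced set of keys lands in an interval, and then prove that statement by a dyadic (block-by-block) counting argument.

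First I would set up the reduction. Let $\{a+1,\dots,a+s\}$ be the interval of keys owned by $\cT^*$, so that $\TL^*$ owns $(a,b]$ and $\TR^*$ owns $(b,a+s]$ with $b:=a+\sL$. Since $\bitrev^k$ is a permutation of $\{0,\dots,K-1\}$, the distinct keys touched by $S_i$ are exactly $\{i+2^j:0\le j\le K-1\}$, each occurring $n$ times in $S_i*n$. Writing $x:=b-i$ and
\[f_L(x):=|\{j\ge 0:x-\sL<2^j\le x\}|,\qquad f_R(x):=|\{j\ge 0:x<2^j\le x+\sR\}|,\]
we get $|P_{i,\ttL}^*|\le n\,f_L(x)$ and $|P_{i,\ttR}^*|\le n\,f_R(x)$, hence $\min(|P_{i,\ttL}^*|,|P_{i,\ttR}^*|)\le n\min(f_L(x),f_R(x))$. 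With $\mu:=\min(\sL,\sR)$ (note $\mu\le s/2$), it then suffices to prove
\[\Sigma\ :=\ \sum_{x=b-n/2}^{b}\min(f_L(x),f_R(x))\ \le\ C\,\mu\lg(s/\mu)\]
for an absolute constant $C$ (here $x\le b\le n$, and only $x\ge 1$ matters since $f_L(x)=0$ for $x\le 0$).

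Next I would decompose dyadically. Write $\Sigma=\sum_{t\ge 1}N_t$ with $N_t:=|\{x:\min(f_L(x),f_R(x))\ge t\}|$, and split each $N_t$ over the blocks $[2^p,2^{p+1})$. For $x\in[2^p,2^{p+1})$ the largest power of $2$ that is $\le x$ is $2^p$, so $f_L(x)\ge t$ iff $p\ge t-1$ and $2^{p-t+1}>x-\sL$, and $f_R(x)\ge t$ iff $2^{p+t}\le x+\sR$. Thus $\{x\in[2^p,2^{p+1}):\min\ge t\}$ is an interval $[\max(2^p,2^{p+t}-\sR),\ \min(2^{p+1},2^{p-t+1}+\sL))$ of some length $L_p(t)\ge 0$, so $\Sigma=\sum_p\sum_{t\ge 1}L_p(t)$, and picking one endpoint from each $\max/\min$ in all four ways gives
\[L_p(t)\ \le\ \min\bigl(2^p,\ \sL-2^p+2^{p-t+1},\ \sR+2^{p+1}-2^{p+t},\ s-2^p(2^t-2^{1-t})\bigr).\]
Then for $t=1$ I would use $L_p(1)\le\min(\mu,2^p,s-2^p)$, which is $0$ unless $2^p<s$: the blocks with $2^p\le\mu$ give a geometric sum $O(\mu)$, the remaining $O(\lg(s/\mu))$ blocks give $\le\mu$ each, so $N_1=O(\mu\lg(s/\mu))$. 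For $t\ge 2$, the second and third bounds force $L_p(t)\le\min(2^p,\sL-2^{p-1},\sR-2^{p+1})$, which is positive only when $2^p<\min(2\sL,\sR/2)=\Theta(\mu)$; for each such block, $L_p(t)=2^p$ while $t\lesssim\lg\sR-p$ and then $L_p(t)$ roughly halves per unit of $t$, so $\sum_{t\ge 2}L_p(t)=O((\lg\sR-p+1)2^p)$. Summing over $p\le\lg(2\mu)$ is a geometric series in $2^p$ dominated by its top nonzero term, giving $O(\mu(\lg(\sR/\mu)+1))=O(\mu\lg(s/\mu))$. Hence $\Sigma=O(\mu\lg(s/\mu))$; and since $\mu\lg(s/\mu)$ is symmetric in $\sL\leftrightarrow\sR$ this also accounts for the claim's left-right symmetry, even though each $S_i$ is not symmetric.

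The hard part will be the $t\ge 2$ estimate. One must simultaneously (i) use \emph{both} the $\sL$- and $\sR$-flavored upper bounds on $L_p(t)$ to truncate the range of relevant dyadic blocks at $2^p\approx\mu$, (ii) exploit the geometric decay of $L_p(t)$ in $t$ to keep the per-block sum at $O((\lg\sR-p+1)2^p)$, and (iii) collapse $\sum_p(\lg\sR-p+1)2^p$ to $O(\mu\lg(s/\mu))$ uniformly across the two regimes --- when $\sL\le\sR$ the $\lg$ factor genuinely appears (the top block is $2^p\approx\sL$), whereas when $\sR\le\sL$ the entire $t\ge 2$ contribution is merely $O(\sR)$. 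The remaining work --- the reduction, the $t=1$ count, and the endpoint bookkeeping in the definition of $L_p(t)$ --- is routine.
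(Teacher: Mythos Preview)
Your argument is correct and takes a genuinely different route from the paper. The paper reasons directly about each shift $i$, splitting into a ``far'' case (when $i$ is farther from the interval than the small side's width, so at most one power of $2$ can land on that side, and one counts how many $i$'s allow even that single hit) and a ``close'' case (only $O(\mu)$ such $i$'s, and for each one bounds how many exponents $j$ can land on the other side). You instead run a layer-cake decomposition $\Sigma=\sum_t N_t$ and slice each level set across dyadic blocks $[2^p,2^{p+1})$, reducing the whole claim to elementary bounds on the interval lengths $L_p(t)$. Your approach is more mechanical and makes the $\sL\leftrightarrow\sR$ symmetry visible from the start (both quantities enter the four bounds on $L_p(t)$ in symmetric roles), while the paper proves the two cases $\sL\le\sR$ and $\sR\le\sL$ by separate, asymmetric case analyses. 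On the other hand, the paper's argument is shorter and sidesteps the double sum over $(p,t)$.

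Two small points of hygiene. First, you silently toggle between treating $x$ as an integer (in ``$N_t:=|\{x:\ldots\}|$'') and as a real variable (``an interval \ldots of length $L_p(t)$''); since the number of lattice points in an interval can exceed its length by one, you should also note that the total number of nonempty $(p,t)$-cells is $O(\lg s)$ for $t=1$ and $O((\lg\mu)(\lg s))$ for $t\ge 2$, both of which are $O(\mu\lg(s/\mu))$ and hence absorbed. Second, ``$L_p(t)$ roughly halves per unit of $t$'' is not quite the right description---once $2^{p+t}$ exceeds $\sR+2^{p+1}$ the third bound goes negative in a single step---but your stated conclusion $\sum_{t\ge 2}L_p(t)=O((\lg\sR-p+1)2^p)$ follows anyway from $L_p(t)\le 2^p$ together with $L_p(t)=0$ once $t>\lg\sR-p+O(1)$.
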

\begin{claimproof}
To simplify the notation, let's say that the keys in $\TL^*$ are in range $[a,b]$ and the keys in $\TR^*$ are in range $[b,c]$, for some real numbers $a,b,c$ with $b-a=\sL$ and $c-b=\sR$.\footnote{We can for example fix $a$ to the first key of $\TL^*$ minus $\frac{1}{2}$, $b$ to the last key of $\TL^*$ plus $\frac{1}{2}$, and $c$ to the last key of $\TR^*$ plus $\frac{1}{2}$.}

For each $i$, let $V_i = \{i+2^0, \ldots, i+2^{K-1}\}$ be the set of values that are hit by sequence $S_i$. Then $|P_{i,\ttL}^*|$ (resp. $|P_{i,\ttL}^*|$) is exactly $n$ times the number of elements of $V_i$ that are in $[a,b]$ (resp. $[b,c]$). Let's name this number of keys $l_i$ (resp. $r_i$).
We will instead prove that
\begin{align}
\sum_{i=0}^{n/2} \min(l_i, r_i) &\leq
O\left(\sLeft\,\lg\frac{s}{\sLeft}\right)\text{ if $\sL\leq\sR$, and}\label{eq:sl-leq-sr}\\
\sum_{i=0}^{n/2} \min(l_i, r_i) &\leq
O\left(\sRight\,\lg\frac{s}{\sRight}\right)\text{ if $\sR\leq\sL$.}\label{eq:sr-leq-sl}
\end{align}
Once this is proved, $C$ can be set to the maximum of the two constants hidden inside the $O(\cdot)$s. Those constants might be different since the reasonings leading to~\eqref{eq:sl-leq-sr} and~\eqref{eq:sr-leq-sl} are completely different.

We first make a general observation. Look at set $V_i=\{i+2^0, \ldots, i+2^j, \ldots,\}$ in increasing order.
Note that after $i+2^j$, all further elements are spaced by at least $2^j$.
In order for $\min(l_i,r_i)$ to be non-zero, we need to have at least two elements of $S_i$ in $[a,c]$: specifically, one in $[a,b]$ and one in $[b,c]$. But this means that $i+2^{j+1} \in [a,c]$ isn't acceptable for $j > \lg s$: indeed, the closest other point in $S_i$ is more than $s$ away, so it must be outside of $[a,c]$.
Therefore, in bounding $\sum\min(l_i,r_i)$, it is fine to imagine that the elements $i+2^{j+1}$ for $j > \lg s$ simply do not exist.

Let us now prove \eqref{eq:sl-leq-sr}. Assume $\sLeft \leq \sRight$. We split into two cases:
\begin{itemize}
\item ``Far'' case: $i < a - \sLeft$. Since $i$ is further from $[a,b]$ than its size $\sLeft$, this means that $[a,b]$ can only contain at most one point from $S_i$. So $l_i \leq 1$. Besides, that (potential) single point must have $j \leq 1+\lg s$ (see above) and $j \geq \lg\sLeft$ (because we have $i+2^j \geq a$). And of course, we have in addition that $i+2^j \in [a,b]$. Therefore, this limits the number of possible values of $i$ to at most $\sLeft(2+\lg s - \lg \sLeft)$, and since $l_i \leq 1$, this also limits the total contribution to $\sum \min(l_i, r_i)$.
\item ``Close to right'' case: $i \geq a - \sLeft$. Then we also have $i \geq b - 2\sLeft$. Since we need $l_i \neq 0$ to have some contribution, we must have $i < b$, so the total number of possible values of $i$ is limited to $2\sLeft$. Let's consider the values of $j$ such that $i+2^j$ can lie in $[b,c]$, the right part. We already know that $j \leq 1+\lg s$, but we have no lower limit, as $i$ could be very close to $b$. However, values of $j$ much smaller than $\lg \sLeft$ will be only for the few values of $i$ close enough to $b$.

More precisely, we study the contribution of each $j$ to $\sum r_i$ into two groups:
    \begin{itemize}
    \item $j \geq \lg \sLeft$: there are $2+\lg s - \lg \sLeft$ such values $j$, and there are $2\sLeft$ possible values of $i$, so the total contribution is at most $2\sLeft(2 + \lg s - \lg \sLeft)$.
    \item $j < \lg \sLeft$: as $j$ decreases, the number of acceptable values of $i$ decreases exponentially. The number of values of $i$ for which $i+2^j \in [b,c]$ for $j \leq \lg \sLeft - l$ is at most $\sLeft / 2^l$. Therefore, the overal contribution is at most $\sLeft + \sLeft/2 + \cdots \leq 2\sLeft$.
    \end{itemize}
\end{itemize}
All those quantities are upper bounded by $O(\sLeft(1+\lg(s/\sLeft)))$, which under the assumption $\sLeft \leq \sRight$, is also bounded by $O(\sLeft\lg(s/\sLeft))$.

We now prove \eqref{eq:sr-leq-sl} in a very similar way. Assume $\sLeft \leq \sRight$.
\begin{itemize}
\item ``Far'' case: $i < b - \sRight$. The argument is analogous to the ``far'' case for \eqref{eq:sl-leq-sr}, but considering $r_i$ this time. We obtain a contribution of at most $\sRight(2+\lg s - \lg \sRight)$.
\item ``Close to right'' case: $i \geq b - \sRight$. The argument is analogous to the ``close to right'' case for \eqref{eq:sl-leq-sr}, but with a distance of $\sRight$ instead of $2\sLeft$ this time. We obtain contributions of at most $\sRight(2 + \lg s - \lg \sRight)$ and $2\sRight$ for the two subcases.
\end{itemize}
All those quantities are upper bounded by $O(\sRight(1+\lg(s/\sRight)))$, which under the assumption $\sRight \leq \sLeft$, is also bounded by $O(\sRight\lg(s/\sRight))$.
\end{claimproof}

We are now ready to finish the induction step.
\begin{claimproof}[Proof of Claim~\ref{claim:induction-log}]
We define $C$ to be the same as in Claim~\ref{claim:min-li-ri}. We have
\begin{align*}
\sum_{i=0}^{n/2} \ALT_{\cT^*}(P_i^*)
&= \sum_{i=0}^{n/2} \big(a(P_i^*,\cT^*) + \ALT_{\TL^*}(P_{i,\ttL}^*) + \ALT_{\TR^*}(P_{i,\ttR}^*)\big)\tag{by \eqref{eq:apply-def-ib}}\\
&\leq \left(\sum_{i=0}^{n/2} a(P_i^*,\cT^*)\right) + (\sL-1)(n/2+1) + 2Cn\sL \lg \sL + (\sR-1)(n/2+1) + 2Cn\sR \lg \sR\tag{inductive hypothesis}\\
&\leq (n/2+1) + 2\cdot\sum_{i=0}^{n/2} \min(|P_{i,\ttL}^*|,|P_{i,\ttR}^*|)\\
&\qquad+ (\sL-1)(n/2+1) + 2Cn\sL \lg \sL + (\sR-1)(n/2+1) + 2Cn\sR\lg\sR \tag{by \eqref{eq:leq-min}}\\
&\leq (s-1)(n/2+1)  + 2Cn(\sL \lg \sL + \sR\lg\sR) + 2\cdot\sum_{i=0}^{n/2} \min(|P_{i,\ttL}^*|,|P_{i,\ttR}^*|) \tag{$s=\sL+\sR$}
\end{align*}
All we need to show is that
\[Cn(\sL \lg \sL + \sR\lg\sR) + \sum_{i=0}^{n/2} \min(|P_{i,\ttL}^*|,|P_{i,\ttR}^*|) \leq Cn(s \lg s).\]
Let's assume that $\sL \leq \sR$ (the other case is identical). Then by Claim~\ref{claim:min-li-ri},
\begin{align*}
Cn(\sL \lg \sL + \sR\lg\sR) + \sum_{i=0}^{n/2} \min(|P_{i,\ttL}^*|,|P_{i,\ttR}^*|)
&\leq Cn(\sL \lg \sL + \sR\lg\sR) + Cn\sLeft\,\lg\frac{s}{\sLeft}\\
&\leq Cn(\sL \lg s + \sR\lg\sR)\\
&\leq Cn(\sL \lg s + \sR\lg s)\\
&= Cns \lg s.
\end{align*}
This completes the proof of Claim~\ref{claim:induction-log}.
\end{claimproof}

Applying Claim~\ref{claim:induction-log} to the full tree $\cT$, which has $n$ keys, we get
\begin{align*}
\ALT_\cT(\xTilde)
&\leq \sum_{i=0}^{n/2} \ALT_{\cT^*}(S_i * n)\tag{by \eqref{eq:deconcatenate}}\\
&\leq (n-1)(n/2+1) + 2Cn^2 \lg n\tag{Claim~\ref{claim:induction-log}}\\
&\leq O(n^2 \lg n)\\
&= O(m).
\end{align*}
\end{proof}

We now move to the proof of Lemma~\ref{lemma:funnel-strong}, which is much simpler.
\begin{proof}[Proof of Lemma~\ref{lemma:funnel-strong}]
From the definition of $\W(\cdot)$ (Definition~\ref{def:fb}), it is easy to see that for any two sequences $S$ and $T$, $\W(S \circ T) \geq \W(S) + \W(T)$. Indeed concatenating $S$ and $T$ does not affect the funnel of each point in $S$, and can only add points to the funnel of each point in $T$. Therefore,
\begin{equation}\label{eq:funnel-concat}
\W(\xTilde) \geq n\sum_{i=0}^{n/2} \W(S_i).
\end{equation}

Since $\W(\cdot)$ only depends on the relative order of the keys in the access sequence, not on their exact value, we have $\W(S_i) = \W(\bitrev^k)$ for each $i$. Besides, defining $\cT$ to be the complete binary search tree of height $k$ as in Fact~\ref{fact:bitrev}, we have
\begin{align*}
\W(\bitrev^k)
&\geq \Omega(\ALT_\cT(\bitrev^k)) - K\tag{by Theorem~\ref{thm:domination}}\\
&\geq \Omega(K \lg K) - K\tag{by Fact~\ref{fact:bitrev}}\\
&\geq \Omega(K \lg K).
\end{align*}
Combined with \eqref{eq:funnel-concat}, this gives $\W(\xTilde) \geq n \cdot (n/2+1) \cdot \Omega(K \lg K) \geq \Omega(m \lg K) = \Omega(m \lg \lg n)$.
\end{proof}

\ifarxiv
    \section{Towards an equivalence between the Funnel bound and the Independent Rectangle bound} \label{sec:speculation}
The Independent Rectangle bound $\IRB(P)$ of \cite{DHIKP09} is currently the highest known lower bound on 
$\OPT(P)$, as both the Alternation and Funnel bounds have been proven to be special cases of it. 
Nevertheless, in contrast to $\W(P)$, the quantity $\IRB(P)$ is complicated to analyze 
as it is a \emph{maximum} over a constrained family of lower bounds.
Therefore, proving that $\W(P)$ is actually 
equivalent to it (in accordance to Wilber's conjecture) could be very useful in analyzing
candidate optimal trees (e.g. GreedyFuture and splay trees). 
$\IRB(P)$ is equal (up to constant factors) to the sum $\upIRB(P) + \downIRB(P)$,\footnote{Actually, \cite{DHIKP09} uses $\upIRB(\cdot)$ and $\downIRB(\cdot)$ to refer to \emph{sets} of rectangles. Here, by $\upIRB(\cdot)$ and $\downIRB(\cdot)$ we actually refer to the size of those sets.} which are defined as the result of a sweeping line algorithm in point set $P$. No relationship is known between $\upIRB(P)$ and $\downIRB(P)$, but we conjecture that they are equal up an additive $O(m)$.

\begin{algorithm}[Algorithm 4.3 in \cite{DHIKP09}]\label{alg:up-greedy}
Sweep the point set $P$ with a horizontal line by increasing $y$-coordinate. When considering point $p$ on the sweep line, for each empty rectangle $\rect{pq}$ formed by $p$ and a point $q$ to its lower left, add the upperleft corner of $\rect{pq}$ to the point set. Let $\addUp(P)$ be the set of all added points (excluding the points originally in $P$), and let $\upIRB(P) \coloneqq |\addUp(P)|$.
\end{algorithm}

The set $\addDown(P)$ and quantity $\downIRB(P) \coloneqq |\addDown(P)|$ are defined in an analogous way, but considering $q$ to the lower \emph{right} of $p$ instead.
The following figure illustrates this process.
From now, we will make the distinction between \emph{accesses} (points of $P$, drawn as crosses) and \emph{added points} (points of $\addUp(P)$ or $\addDown(P)$, drawn as dots). See Figure~\ref{fig:irb} for an example of the computation of $\addUp(P)$ and $\addDown(P)$.

\begin{figure}[h]
\centering

\newcommand{\access}[2]{\node[cross] (p#2) at (#1,#2) {};}
\newcommand{\added}[2]{\node[dot] (q#2) at (#1,#2) {};}
\setlength{\tabcolsep}{8mm}
\begin{tabu}{cc}
\begin{tikzpicture}[scale=0.5]
\access{4}{0}
\access{0}{1}
\access{2}{2}
\access{6}{3}
\access{1}{4}
\access{3}{5}
\access{5}{6}
\added{0}{2}
\added{2}{3}
\added{4}{3}
\added{0}{4}
\added{1}{5}
\added{2}{5}
\added{3}{6}
\added{4}{6}
\end{tikzpicture}&
\begin{tikzpicture}[scale=0.5]
\access{4}{0}
\access{0}{1}
\access{2}{2}
\access{6}{3}
\access{1}{4}
\access{3}{5}
\access{5}{6}
\added{4}{1}
\added{4}{2}
\added{2}{4}
\added{6}{4}
\added{4}{5}
\added{6}{5}
\added{6}{6}
\end{tikzpicture}\\[2mm]
$\upIRB(P) = |\addUp(P)| = 8$&
$\downIRB(P) = |\addDown(P)| = 7$\\
\end{tabu}

\caption{Running Algorithm~\ref{alg:up-greedy} (and the analogous algorithm on the right) to compute $\protect\upIRB(P)$ and $\protect\downIRB(P)$}\label{fig:irb}
\end{figure}

\begin{remark}\label{rem:alg-empty}
As shown in \cite{DHIKP09}, all points $r$ in $\addUp(P)$ correspond to empty rectangles of $P$ in the following way. Let $a$ be the highest access of $P$ below $r$ such that $r.x=a.x$, and let $b$ be the access of $P$ such that $r.y=b.y$. Then $\rect{ab} \cap P = \{a,b\}$. In other words, $a$ is in the left funnel of $b$ (Definition~\ref{def:lr-funnel}).
\end{remark}

In this section, we prove that when $P$ contains only a constant number of \zr{}s, then $\addUp(P)$ is linear in $m$, or more precisely:
\begin{theorem}\label{thm:irb-linear}
For any point set $P$ with distinct $x$- and $y$-coordinates,
\[\upIRB(P) \leq O(m) + m \cdot \zRects(P).\]
\end{theorem}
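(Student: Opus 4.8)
The plan is to combine Remark~\ref{rem:alg-empty} with a closer look at \emph{which} empty rectangles of $P$ actually get a corner placed on them during the sweep of Algorithm~\ref{alg:up-greedy}. First I would set up the reduction: Remark~\ref{rem:alg-empty} gives an injection $r\mapsto(a,b)$ from $\addUp(P)$ into $\{(a,b)\mid a\in F_\ttL(P,b)\}$ whose inverse is $(a,b)\mapsto(a.x,b.y)$, so grouping added points by the access $b$ with $b.y=r.y$ yields $\upIRB(P)=\sum_{b\in P}c_b$, where $c_b$ is the number of corners added while the sweep processes $b$. Reading Algorithm~\ref{alg:up-greedy}, those corners are the upper-left corners of the rectangles that hang down-left from $b$ and are empty \emph{in the augmented point set present at that moment}; such rectangles have pairwise distinct left edges, so $c_b$ is exactly the size of the left funnel of $b$ computed \emph{in the augmented set}, and in particular $c_b$ is at most the number (which is $\le m$) of distinct $x$-coordinates of $P$. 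The point of this step is that we only need to count the \emph{realized} left-funnel points of each $b$; the naive bound $\upIRB(P)\le\sum_b|F_\ttL(P,b)|$ is far too weak, since $\sum_b|F_\ttL(P,b)|$ can be $\Omega(m\lg m)$ even when $\zRects(P)=0$.

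Next I would isolate the few ``bad'' points. Re-examining the proof of Lemma~\ref{lemma:zrects-geq-funnel}: if the funnel of $b$, listed bottom-to-top, is \emph{not} of the form $\ttL^*\ttR^*\ttL^*$ (equivalently, its $\ttR$'s are not contiguous), then $b$'s funnel contains a left-straddling pair and hence $b$ is the top point of at least one z-rectangle of $P$; so there are at most $\zRects(P)$ such $b$, and for each we bound $c_b\le m$ crudely, contributing at most $m\cdot\zRects(P)$ to the sum. It remains to show $\sum_{b\text{ good}}c_b=O(m)$, where ``good'' means the funnel of $b$ is $\ttL^*\ttR^*\ttL^*$. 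Here the combinatorial structure is used: I would argue that the corners deposited earlier in the sweep ``shadow'' almost all of a good $b$'s left funnel — concretely, once a row of corners has been dropped at some height $h$, every access at those $x$-coordinates lying below $h$ is permanently blocked for all later points lying above-right of it, and because $b$'s funnel has a single block of $\ttR$'s, this collapses each of the (at most two) blocks of left-funnel points of $b$ to $O(1)$ surviving corners, apart from a ``warm-up'' contribution charged against the accesses first seen when $b$ is processed. Summing the $O(1)$'s over all good $b$ and amortizing the warm-up contributions against the $m$ accesses gives $\sum_{b\text{ good}}c_b=O(m)$, hence $\upIRB(P)\le O(m)+m\cdot\zRects(P)$.

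The step I expect to be the main obstacle is the shadowing claim for good $b$: the emptiness tests in Algorithm~\ref{alg:up-greedy} are run against the point set \emph{as it grows}, so one must maintain the right invariant for the sweep — essentially the shape of the upper-left staircase of the current augmented set — and show that it degenerates in the way described (each funnel block collapsing to $O(1)$ corners), while carefully handling ties (several corners at a common height, or at a common $x$-coordinate) and making the warm-up amortization precise via a potential argument. The treatment of bad points, by contrast, is deliberately wasteful — each z-rectangle is allowed to absorb up to $m$ units of charge — which is exactly why the theorem carries the factor $m$ in front of $\zRects(P)$ rather than an additive $O(\zRects(P))$.
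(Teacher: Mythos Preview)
Your decomposition by the funnel pattern of each access $b$ is a natural idea, and the ``bad'' half is fine: if the funnel of $b$ is not of the form $\ttL^*\ttR^*\ttL^*$ then $b$ is the top point of some \zr{}, so there are at most $\zRects(P)$ bad accesses and the crude bound $c_b\le m$ gives the $m\cdot\zRects(P)$ term. This part matches the paper in spirit.

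The gap is in the ``good'' half. Your shadowing claim --- that for a good $b$ each of the at most two $\ttL$-blocks collapses to $O(1)$ surviving corners --- is simply false as stated: take the staircase $(k,1),(k-1,2),\ldots,(1,k)$ followed by $b=(k+1,k+1)$; then $b$ is good (pattern $\ttL^k$), no corners have been added before $b$, and $c_b=k$. You acknowledge a ``warm-up'' amortization is needed, but the natural candidate --- charging each corner to the access at its $x$-coordinate --- can charge the same access many times (just repeat the staircase construction with fresh keys but letting later accesses revisit old $x$-coordinates), so you are back to needing a reason why \emph{non-highest} corners in a column are rare. That reason is exactly what the paper supplies and you do not: the structural Lemma~\ref{lemma:empty-slab}, which says that above any two consecutive left-funnel points of $c$ the slab is either empty to infinity or its lowest point is the top of a \zr{}. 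From this the paper classifies every added point (not every access) as (a)~rightmost in its row, (b)~highest in its column, or (c)~sitting directly below a corner whose row is a \zr{}-row; types~(a) and~(b) are each $\le m$, and type~(c) is $\le m\cdot\zRects(P)$. This per-added-point, column-based charging is what replaces your amortization, and I do not see how to complete your row-based argument without proving something equivalent to Lemma~\ref{lemma:empty-slab} along the way.
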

Note that in case $\W(P)$ matches $\IRB(P)$, which is strongly believed to be true, then the statement could be improved to
\[\upIRB(P) \leq O(m + \zRects(P)).\]
Nevertheless, the current theorem is good news for the possible optimality of the Funnel bound.
The proof is a straightforward charging argument, and is a consequence of the following key lemma.
\begin{lemma}\label{lemma:empty-slab}
Let $a$ and $b$ be two points in the left funnel of $c$ , with $b$ to the upper left of $a$ ($a,b,c \in P$). Then either $P$ has no points in $[b.x,a.x] \times [c.y,\infty)$, or the lowest point in that region $d$ is part of a \zr{} of the form $(d,\cdot,\cdot,\cdot)$.
\end{lemma}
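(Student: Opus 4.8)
The plan is to locate the point $d$ explicitly and then show that the funnel of $d$ in $P$ contains two points to the right of $d$ with a point to the left of $d$ sandwiched between them in time; once this is in place, the z-rectangle is produced for free by the construction inside the proof of Lemma~\ref{lemma:zrects-geq-funnel}, which turns any such ``left-straddling pair'' into a \zr{} whose top vertex is the point whose funnel we are examining. First I would dispose of the trivial alternative: if $P\cap([b.x,a.x]\times[c.y,\infty))$ is empty we are done, so assume it is non-empty and let $d$ be its lowest point. Since $a,b$ lie in the left funnel of $c$ with $b$ to the upper left of $a$, we have $b.x<a.x<c.x$, $a.y<b.y<c.y$, and $\rect{cb}$ is empty; and since the only point of $P$ at height $c.y$ is $c$, which lies to the right of $a$, distinctness of coordinates forces $d.x\in(b.x,a.x)$ and $d.y>c.y$.

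Next I would exhibit three points of the funnel of $d$. (i) The point $b$ itself: $\rect{db}$ is empty, since a point of $P$ strictly inside it would either have $y<c.y$ and hence lie in the empty rectangle $\rect{cb}$, or have $y\ge c.y$ and hence lie in the slab $[b.x,a.x]\times[c.y,\infty)$ strictly below $d$; so $b\in F_\ttL(P,d)$ with $b.y<d.y$. (ii) Let $r$ be the point of $P$ with the smallest $x$-coordinate among those with $x>d.x$ and $y<b.y$; this set is non-empty because $a$ belongs to it, so also $r.x\le a.x$. A point of $P$ strictly inside $\rect{dr}=[d.x,r.x]\times[r.y,d.y]$ has $y<b.y$ (contradicting minimality of $r.x$), or $b.y\le y<c.y$ (then it lies in $\rect{cb}$, using $x<r.x\le a.x<c.x$), or $y\ge c.y$ (then it lies in the slab strictly below $d$); hence $r\in F_\ttR(P,d)$ with $r.y<b.y$. (iii) Let $w$ be the point of $P$ with the smallest $x$-coordinate among those with $x>d.x$ and $c.y\le y<d.y$; this set is non-empty because $c$ belongs to it. A point of $P$ strictly inside $\rect{dw}=[d.x,w.x]\times[w.y,d.y]$ would have $x>d.x$ and $c.y\le w.y<y<d.y$, contradicting minimality of $w.x$; hence $w\in F_\ttR(P,d)$ with $w.y\ge c.y>b.y$.

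To finish, $b$, $r$, $w$ are three distinct points of the funnel of $d$ in $P$ with $r,w$ to the right of $d$, $b$ to the left of $d$, and $r.y<b.y<w.y$. Numbering the funnel points of $d$ by increasing $y$-coordinate, the last right-funnel point of height below $b.y$ and the first right-funnel point of height above $b.y$ form a left-straddling pair in the sense of the proof of Lemma~\ref{lemma:zrects-geq-funnel} (the left-funnel point $b$ lies strictly between them), and that proof shows every such pair induces a \zr{} of the form $(d,\cdot,\cdot,\cdot)$, which is what we wanted. I expect the only delicate point to be choosing $r$ and $w$ by \emph{minimum $x$-coordinate}: this is exactly what forces $\rect{dr}$ and $\rect{dw}$ to be empty, i.e.\ what makes $r$ and $w$ genuinely visible from $d$. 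A naive attempt that tries to name the four corners of the z-rectangle directly, e.g.\ as $(d,b,a,c)$, gets stuck because the rectangle $[a.x,c.x]\times[c.y,d.y]$ need not be empty; passing through the funnel of $d$ and the already-established Lemma~\ref{lemma:zrects-geq-funnel} avoids having to control that region.
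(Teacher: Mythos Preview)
Your argument is correct. The three funnel points $b,r,w$ of $d$ are verified carefully, and the final step --- passing from the pattern $r.y<b.y<w.y$ with $r,w\in F_\ttR(P,d)$ and $b\in F_\ttL(P,d)$ to a genuine left-straddling pair by taking the \emph{last} right-funnel point below $b$ and the \emph{first} above $b$ --- is exactly what is needed to invoke the construction inside the proof of Lemma~\ref{lemma:zrects-geq-funnel}.

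Your route differs from the paper's in two ways. First, the paper reduces to the case where $a$ and $b$ are \emph{consecutive} left-funnel points of $c$, proves the claim there by explicitly naming all four corners of the \zr{} as $(d,b,a,c')$ with $c'$ the leftmost point of $(a.x,\infty)\times[a.y,d.y]$, and then handles the general case by decomposing $[b.x,a.x]$ into the sub-intervals cut out by the intermediate funnel points. You skip this reduction entirely: your argument only uses that $\rect{cb}$ is empty, that $a$ witnesses the non-emptiness (and gives the bound $r.x\le a.x$), and that $d$ is lowest in the slab, so it works verbatim for non-consecutive $a,b$. Second, rather than building the \zr{} by hand, you reuse the left-straddling machinery already established in Lemma~\ref{lemma:zrects-geq-funnel}. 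This is more modular and arguably cleaner; the price is that you do not exhibit the four corners explicitly (in fact your $(d,\cdot,\cdot,\cdot)$ may differ from the paper's $(d,b,a,c')$), which is fine for the statement as phrased but would matter if one later needed to know \emph{which} \zr{} is produced.
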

\begin{figure}[h]
\centering

\newcommand{\access}[2]{\node[cross] (p#2) at (#1,#2) {};}
\setlength{\tabcolsep}{10mm}
\begin{tabu}{cc}
\begin{tikzpicture}[scale=0.5]
\draw[lightgray] (5,1) rectangle (7,5);
\draw[lightgray] (1,4) rectangle (7,5);
\path[pattern=north west lines, pattern color=darkgray] (1,8) -- (1,5) -- (5,5) -- (5,8);
\draw[darkgray] (1,8.5) -- (1,5) -- (5,5) -- (5,8.5);

\access{5}{1}
\access{4}{2}
\access{2}{3}
\access{1}{4}
\access{7}{5}
\node[right=0mm of p5] (dup2) {$c$};
\node[below=0mm of p1] {$a$};
\node[left=0mm of p4] {$b$};
\node (text) at (9.5,7) {empty up to infinity};
\draw[->] (text) -- (5.1,7);
\draw[<->] (1,9) -- (5,9) node[midway,above] {$[b.x,a.x]$};
\end{tikzpicture}&
\begin{tikzpicture}[scale=0.5]
\draw[lightgray] (5,1) rectangle (7,5);
\draw[lightgray] (1,4) rectangle (7,5);
\draw[darkgray] (2,2) rectangle (6,7);
\access{5}{1}
\access{4}{2}
\access{2}{3}
\access{1}{4}
\access{7}{5}
\access{6}{6}
\access{3}{7}
\node[right=0mm of p6] (dup1) {$c'$};
\node[right=0mm of p5] (dup2) {$c$};
\node[above=0mm of p7] {$d$};
\node[below=0mm of p1] {$a$};
\node[left=0mm of p4] {$b$};
\node (text) at (8,8) {could be the same point};
\draw[->] (text) -- (dup1);
\draw[->] (text) -- (dup2);
\end{tikzpicture}\\
$[b.x,a.x] \times [c.y,\infty)$ is empty & $d$ is the top point of a \zr{}\\
\end{tabu}
\caption{The two cases of Lemma~\ref{lemma:empty-slab}. Rectangles $\rect{ac}$ and $\rect{bc}$ (in light gray) are empty.}\label{fig:empty-slab}
\end{figure}

\begin{proof}
We start by proving this for $a$ and $b$ that are \emph{consecutive} left funnel points. That is, we assume that there is no point $a'$ in the left funnel of $c$ with $a.y < a'.y < b.y$. First, we observe that
\begin{equation}\label{eq:abc}
[b.x,c.x] \times [a.y,c.y] \cap P = \{a,b,c\}.
\end{equation}
Indeed, since $a,b$ are in the left funnel of $c$, we know that
\begin{itemize}
\item $[a.x,c.x] \times [a.y,c.y] \cap P = \{a,c\}$;
\item $[b.x,c.x] \times [b.y,c.y] \cap P = \{b,c\}$;
\end{itemize}
and besides, if there were a point in $[b.x,a.x] \times [a.y,b.y] \cap P$, then the highest of them would also be in the left funnel of $c$ and would contradict the consecutiveness of $a$ and $b$.

Now, assume that $P$ contains a point in $[b.x,a.x] \times [c.y, \infty)$ and let $d$ be the point among those with lowest $y$-coordinate. Let $c'$ be the point in $(a.x,\infty) \times [a.y,d.y]$ with least $x$-coordinate. Note that $c$ is an acceptable candidate, so $c'$ exists and $c'.x \leq c.x$.

The definitions of $d$ and $c'$ imply respectively that
\begin{itemize}
\item $[b.x,a.x] \times [c.y,d.y] \cap P = \{d\}$;
\item $(a.x,c'.x] \times [a.y,d.y] \cap P = \{c'\}$.
\end{itemize}
Therefore, combining those with \eqref{eq:abc}, we obtain that
\[[b.x,c'.x] \times [a.y,d.y] \cap P = \{a,b,c',d\}.\]

Also, again using \eqref{eq:abc} and the fact that $c'.x \leq c.x$, we can deduce that $c'.y \geq c.y > b.y$. Therefore, we have
\[b.x < d.x < a.x < c'.x\text{ and }a.y < b.y < c'.y < d.y\]
which means that $(d,b,a,c')$ is a \zr{}.

Now, suppose $a$ and $b$ are \emph{not} consecutive left funnel points, and let $a'_1,\cdots,a'_k$ be the left funnel points between them, by increasing $y$-coordinate (see Figure~\ref{fig:intermediate-funnel}). Then we can apply the above argument, replacing $(a,b)$ by each of $(a,a'_1)$, $(a'_1,a'_2)$, \ldots, $(a'_{k-1},a'_k)$ and $(a'_k,b)$. If $P$ has a point in $[b.x,a.x] \times [c.y,\infty)$, then the lowest such point $d$ will be in one of the ranges $[b.x,a'_k.x] \times [c.y,\infty), \ldots, [a'_1.x,a.x] \times [c.y,\infty)$, and thus will be involved in a \zr{} of the form $(d, \cdot, \cdot, \cdot)$.
\begin{figure}[h]
\centering

\newcommand{\access}[2]{\node[cross] (p#2) at (#1,#2) {};}
\begin{tikzpicture}[scale=0.5]
\draw[gray] (7,1) rectangle (9,6);
\draw[gray] (1,5) rectangle (9,6);

\access{7}{1}
\access{5}{2}
\access{4}{3}
\access{2}{4}
\access{1}{5}
\access{9}{6}
\node[right=0mm of p6] (dup2) {$c$};
\node[below=0mm of p1] {$a$};
\node[below=0mm of p2] {$a'_1$};
\node[below=0mm of p3] {$a'_2$};
\node[below=0mm of p4] {$a'_3$};
\node[below=0mm of p5] {$b$};
\end{tikzpicture}
\caption{Some intermediate points $a'_1, a'_2, a'_3$ in the left funnel of $c$ between $a$ and $b$}\label{fig:intermediate-funnel}
\end{figure}

\end{proof}

The following lemma makes the charging argument concrete.
\begin{lemma}\label{lemma:highest-or-rightmost}
Every added point $p \in \addUp(P)$ is of at least one of three types:
\begin{alphaenumerate}
\item $p$ is the rightmost added point at $y$-coordinate $p.y$;
\item $p$ is the highest added point at $x$-coordinate $p.x$;
\item let $r$ be the lowest added point above $p$ at $x$-coordinate $p.x$, then $r$ has the same $y$-coordinate as some access $d \in P$ involved in a \zr{} $(d,\cdot,\cdot,\cdot)$.
\end{alphaenumerate}
See Figure~\ref{fig:charging-cases} for examples of each type.
\end{lemma}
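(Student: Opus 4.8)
The plan is to read off the structure of $p$ from Remark~\ref{rem:alg-empty} and then feed the resulting configuration into Lemma~\ref{lemma:empty-slab}. By Remark~\ref{rem:alg-empty}, an added point $p \in \addUp(P)$ has the form $p=(a.x,b.y)$, where $a$ is the unique access of $P$ at $x$-coordinate $p.x$ (unique since $P$ has distinct $x$-coordinates), $b$ is the unique access at $y$-coordinate $p.y$, and $a$ is in the left funnel of $b$; in particular $a.x<b.x$ and $a.y<b.y$. Suppose $p$ is neither of type (a) nor of type (b). Since $p$ is not the rightmost added point of its row, applying Remark~\ref{rem:alg-empty} to an added point of that row to the right of $p$ yields an access $a''\in F_\ttL(P,b)$ with $a''.x>a.x$, so $a$ is not the rightmost point of $F_\ttL(P,b)$; let $a'$ be the point of $F_\ttL(P,b)$ with smallest $x$-coordinate larger than $a.x$, and note $a'.y<a.y$ (two left-funnel points of a common access are monotone: $y$ decreases as $x$ increases). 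Since $p$ is not the highest added point of its column, let $r$ be the lowest added point strictly above $p$ in that column; applying Remark~\ref{rem:alg-empty} to $r$ (and using that $a$ is again the unique access of its column), $r=(a.x,d.y)$ where $d$ is the access at height $r.y$, $a\in F_\ttL(P,d)$, $d.y>b.y$, and $d.x<b.x$ (else $b$ lies in the empty rectangle $\rect{ad}$). The goal is to exhibit a z-rectangle $(d,\cdot,\cdot,\cdot)$, which certifies type (c).

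The natural candidate is $(d,a,a',b)$. Here $a$ is to the lower-left of $d$, $b$ is to the lower-right of $d$ (because $d.x<b.x$, $b.y<d.y$), and when $a'.x>d.x$ the point $a'$ is also to the lower-right of $d$; the remaining coordinates then satisfy $a.x<d.x<a'.x<b.x$ and $a'.y<a.y<b.y<d.y$, exactly the relative order required of a z-rectangle. It remains to show the bounding box $[a.x,b.x]\times[a'.y,d.y]$ is empty. The known empty rectangles $\rect{ab}$, $\rect{ad}$ and $\rect{a'b}$ together cover all of the box except its lower-left corner $[a.x,a'.x]\times[a'.y,a.y]$ and its upper-right corner $\rect{db}=[d.x,b.x]\times[b.y,d.y]$. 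The lower-left corner is empty: a point there would, via the ``highest visible point'' argument used in the proof of Lemma~\ref{lemma:empty-slab}, be a left-funnel point of $b$ with $x$-coordinate strictly between $a$ and $a'$ (impossible by the choice of $a'$) or a point inside the empty rectangle $\rect{a'b}$. So the whole argument comes down to controlling $\rect{db}$.

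This last step is where I expect the main obstacle to lie. If $\rect{db}$ is nonempty, its lowest point $h$ is to the lower-right of $d$ and strictly above $b$, and one checks $h$ is a right-funnel point of $d$; together with a right-funnel point of $d$ below the height of $a$ --- produced by applying Lemma~\ref{lemma:empty-slab} with $c:=b$ to the pair $a',a$ and using its conclusion in the nonempty-strip case --- this creates an $\mathtt{RLR}$ alternation among the funnel points of $d$ in increasing height, which the argument of Lemma~\ref{lemma:funnel-geq-zrects} turns into a z-rectangle topped by $d$. The complementary case $a'.x<d.x$ (so $d$ sits to the right of the strip $[a.x,a'.x]\times[b.y,\infty)$) is handled by the same mechanism, using the left-funnel point of $d$ closest to $d$ in place of $a'$. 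Either way $d$ tops a z-rectangle, so $p$ is of type (c). None of this is conceptually hard, but it requires careful bookkeeping: pasting together several empty rectangles and checking their union covers the candidate box, rerunning the ``highest visible point'' argument in each corner, and handling the degenerate configurations where two of the four z-rectangle corners nearly coincide (compare Figure~\ref{fig:two-zrects}) and where added points share a coordinate with accesses.
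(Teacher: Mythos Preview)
Your approach diverges from the paper's at the very first step, and the divergence is what creates the difficulty you run into. The paper does \emph{not} assume $p$ fails both (a) and (b) and then try to manufacture a \zr{} topped by a given $d$. Instead, assuming only that $p$ is not of type (a), it takes the neighbouring added point $q$ to the right of $p$ on the same row, extracts from it the adjacent left-funnel point of $c$ (your $a'$), and applies Lemma~\ref{lemma:empty-slab} \emph{immediately} to the pair of funnel points. This gives a clean dichotomy: either the vertical slab $[p.x,q.x]\times[c.y,\infty)$ contains no access at all, or its lowest access $d$ is already the top of a \zr{}. The remaining work is purely algorithmic: in the first case one argues that the presence of $q$ prevents any later sweep step from adding a point above $p$ in its column, so $p$ is type (b); in the second case one argues that no point is added in that column until the sweep reaches $d$, and that at height $d.y$ a point is added there, so the lowest added point $r$ above $p$ sits at height $d.y$ and $p$ is type (c).

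Your route reverses this: you fix $r$ first (from ``not type (b)''), read off $d$ as the access at height $r.y$, and then try to build a \zr{} $(d,a,a',b)$ by hand. The trouble is that nothing in your setup tells you $d$ is the \emph{lowest} access in the slab $[a.x,a'.x]\times[b.y,\infty)$, so you cannot simply invoke Lemma~\ref{lemma:empty-slab} for your $d$; and without that, the corner $\rect{db}$ need not be empty. Your fallback for this corner (``apply Lemma~\ref{lemma:empty-slab} in the nonempty-strip case to get a right-funnel point of $d$ below $a$'') is not what that lemma says: its nonempty-case conclusion produces a \zr{} whose top is the lowest point of the strip, which a priori is some $d^*\neq d$, and type (c) demands the \zr{} be topped by your specific $d$. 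The same issue afflicts the ``complementary case $a'.x<d.x$''. In short, the missing ingredient is exactly the algorithmic link the paper supplies: the sweep cannot add anything in the column of $p$ between heights $b.y$ and the first access that enters the slab, so the $d$ you extract from $r$ \emph{is} the lowest slab access, after which Lemma~\ref{lemma:empty-slab} finishes the job in one line and all of your corner-by-corner emptiness checks become unnecessary.
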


\begin{figure}[h]
\centering

\newcommand{\access}[2]{\node[cross] (p#2) at (#1,#2) {};}
\newcommand{\nodeAbove}[2]{\node[above=0mm of #1] {#2};}
\newcommand{\added}[3]{
    \node[dot] (q#1#2) at (#1,#2) {};
    \nodeAbove{q#1#2}{#3}
}
\begin{tikzpicture}[scale=0.6]
\draw[lightgray] (2,0) rectangle (6,5);
\access{4}{0}
\access{0}{1}
\access{2}{2}
\access{6}{3}
\access{1}{4}
\access{3}{5}
\access{5}{6}
\added{0}{2}{a}
\added{2}{3}{c}
\added{4}{3}{a}
\added{0}{4}{a,b}
\added{1}{5}{b}
\added{2}{5}{a,b}
\added{3}{6}{b}
\added{4}{6}{a,b}
\end{tikzpicture}

\caption{Added points of $\protect\addUp(P)$ labeled with their type(s) from Lemma~\ref{lemma:highest-or-rightmost}. The \zr{} corresponding to the type-c point is drawn in gray.}\label{fig:charging-cases}
\end{figure}

\begin{proof}
Consider the swipe of Algorithm~\ref{alg:up-greedy} when it reaches some access $c$. Let $p$ be any point added at this height ($p.y = c.y$). Assuming $p$ is not of type (a), there is another added point $q$ with $q.y = c.y$ and $q.x > p.x$. Let $q$ be the leftmost such point.

Let $a$ be the access at $x$-coordinate $q.x$ and $b$ be the access at $x$-cordinate $p.x$. Since all added points correspond to empty rectangles (Remark~\ref{rem:alg-empty}), we know that $a$ and $b$ are in the left funnel of $c$, with $a.y < b.y$. Thus we can apply Lemma~\ref{lemma:empty-slab}. There are two cases:
\begin{itemize}
\item Assume that there is no access in rectangle $[p.x,q.x] \times [c.y,\infty)$.
We claim this implies that $p$ is the highest added point at $x$-coordinate $p.x$, so $p$ is of type (b). Indeed, in order to produce a new added point at that $x$-coordinate, there would need, at some point later in the swipe, to be some access $d$ such that $\rect{dp}$ is empty. But since $d$ must be to the right of $q$, this is made impossible by the presence of $q$. 
\item Otherwise, let $d$ be the lowest access in rectangle $[p.x,q.x] \times [c.y,\infty)$. From Lemma~\ref{lemma:empty-slab}, we know that it is involved in a \zr{} of the form $(d,\cdot,\cdot,\cdot)$. Thus it suffices to prove the existence of $r$.
By the same arguments as the previous case, after it has added $p$ and $q$, Algorithm~\ref{alg:up-greedy} cannot add any points in range $[p.x,q.x]$ until it reaches $d$. Thus, when it reaches $d$, $\rect{dp}$ will be empty, which means that point $r=(p.x,d.y)$ will be added.
\end{itemize}
\end{proof}

\begin{proof}[Proof of Theorem~\ref{thm:irb-linear}]
Let's bound each type of added point as described in Lemma~\ref{lemma:highest-or-rightmost}.
By construction, the $y$-coordinates of any added point in $\addUp(P)$ has to be shared with one of the $m$ original accesses in $P$. Since that coordinate uniquely defines a point of type (a), there can be at most $m$ added points of type (a). An analogous argument can be made about $x$-coordinates to show that there are at most $n=m$ added points of type (b).

Furthermore, since there are $\zRects(P)$ \zr{}s, there are at most $\zRects(P)$ possible values of access $d$ in the definition of type (c). Each such $d$ can only produce $\leq m$ possible points $r$, and $r$ uniquely determines $p$. Therefore, there are at most $m \cdot \zRects(P)$ added points of type (c). Theorem~\ref{thm:irb-linear} follows from taking the sum over each type.
\end{proof}

\else
    ~

    \hrule

    ~

    Due to space constraints, the last (short) section, which relates the \FB{} to the Independent Rectangle bound, is deferred to the full version.
\fi

\bibliography{refs}

\begin{thebibliography}{10}

\bibitem{AM78}
Brian Allen and Ian Munro.
\newblock Self-organizing binary search trees.
\newblock {\em J. ACM}, 25(4):526--535, October 1978.
\newblock URL: \url{http://doi.acm.org/10.1145/322092.322094}, \href
  {http://dx.doi.org/10.1145/322092.322094} {\path{doi:10.1145/322092.322094}}.

\bibitem{BDDF10}
Prosenjit Bose, Karim Dou{\"{\i}}eb, Vida Dujmovic, and Rolf Fagerberg.
\newblock An \emph{O}(log log \emph{n})-competitive binary search tree with
  optimal worst-case access times.
\newblock In {\em Algorithm Theory - {SWAT} 2010, 12th Scandinavian Symposium
  and Workshops on Algorithm Theory, Bergen, Norway, June 21-23, 2010.
  Proceedings}, pages 38--49, 2010.
\newblock URL: \url{https://doi.org/10.1007/978-3-642-13731-0\_5}, \href
  {http://dx.doi.org/10.1007/978-3-642-13731-0\_5}
  {\path{doi:10.1007/978-3-642-13731-0\_5}}.

\bibitem{CCS19}
Parinya Chalermsook, Julia Chuzhoy, and Thatchaphol Saranurak.
\newblock Pinning down the strong wilber 1 bound for binary search trees.
\newblock {\em CoRR}, abs/1912.02900, 2019.
\newblock URL: \url{http://arxiv.org/abs/1912.02900}, \href
  {http://arxiv.org/abs/1912.02900} {\path{arXiv:1912.02900}}.

\bibitem{CGKMS15}
Parinya Chalermsook, Mayank Goswami, L{\'{a}}szl{\'{o}} Kozma, Kurt Mehlhorn,
  and Thatchaphol Saranurak.
\newblock Pattern-avoiding access in binary search trees.
\newblock In Venkatesan Guruswami, editor, {\em {IEEE} 56th Annual Symposium on
  Foundations of Computer Science, {FOCS} 2015, Berkeley, CA, USA, 17-20
  October, 2015}, pages 410--423. {IEEE} Computer Society, 2015.
\newblock URL: \url{https://doi.org/10.1109/FOCS.2015.32}, \href
  {http://dx.doi.org/10.1109/FOCS.2015.32} {\path{doi:10.1109/FOCS.2015.32}}.

\bibitem{DHIKP09}
Erik~D. Demaine, Dion Harmon, John Iacono, Daniel~M. Kane, and Mihai Patrascu.
\newblock The geometry of binary search trees.
\newblock In {\em Proceedings of the Twentieth Annual {ACM-SIAM} Symposium on
  Discrete Algorithms, {SODA} 2009, New York, NY, USA, January 4-6, 2009},
  pages 496--505, 2009.
\newblock URL: \url{http://dl.acm.org/citation.cfm?id=1496770.1496825}.

\bibitem{DHIP07}
Erik~D. Demaine, Dion Harmon, John Iacono, and Mihai Patrascu.
\newblock Dynamic optimality - almost.
\newblock {\em {SIAM} J. Comput.}, 37(1):240--251, 2007.
\newblock URL: \url{https://doi.org/10.1137/S0097539705447347}, \href
  {http://dx.doi.org/10.1137/S0097539705447347}
  {\path{doi:10.1137/S0097539705447347}}.

\bibitem{Iacono05}
John Iacono.
\newblock Key-independent optimality.
\newblock {\em Algorithmica}, 42(1):3--10, 2005.
\newblock URL: \url{https://doi.org/10.1007/s00453-004-1136-8}, \href
  {http://dx.doi.org/10.1007/s00453-004-1136-8}
  {\path{doi:10.1007/s00453-004-1136-8}}.

\bibitem{IaconoSurvey}
John Iacono.
\newblock In pursuit of the dynamic optimality conjecture.
\newblock In {\em Space-Efficient Data Structures, Streams, and Algorithms -
  Papers in Honor of J. Ian Munro on the Occasion of His 66th Birthday}, pages
  236--250, 2013.
\newblock URL: \url{https://doi.org/10.1007/978-3-642-40273-9\_16}, \href
  {http://dx.doi.org/10.1007/978-3-642-40273-9\_16}
  {\path{doi:10.1007/978-3-642-40273-9\_16}}.

\bibitem{KS18}
L{\'{a}}szl{\'{o}} Kozma and Thatchaphol Saranurak.
\newblock Smooth heaps and a dual view of self-adjusting data structures.
\newblock In {\em Proceedings of the 50th Annual {ACM} {SIGACT} Symposium on
  Theory of Computing, {STOC} 2018, Los Angeles, CA, USA, June 25-29, 2018},
  pages 801--814, 2018.
\newblock URL: \url{https://doi.org/10.1145/3188745.3188864}, \href
  {http://dx.doi.org/10.1145/3188745.3188864}
  {\path{doi:10.1145/3188745.3188864}}.

\bibitem{LT19}
Caleb~C. Levy and Robert~E. Tarjan.
\newblock A new path from splay to dynamic optimality.
\newblock In Timothy~M. Chan, editor, {\em Proceedings of the Thirtieth Annual
  {ACM-SIAM} Symposium on Discrete Algorithms, {SODA} 2019, San Diego,
  California, USA, January 6-9, 2019}, pages 1311--1330. {SIAM}, 2019.
\newblock URL: \url{https://doi.org/10.1137/1.9781611975482.80}, \href
  {http://dx.doi.org/10.1137/1.9781611975482.80}
  {\path{doi:10.1137/1.9781611975482.80}}.

\bibitem{Luc88}
Joan~Marie Lucas.
\newblock {\em Canonical forms for competitive binary search tree algorithms}.
\newblock Rutgers University, Department of Computer Science, Laboratory for
  Computer Science Research, 1988.

\bibitem{Munro00}
J.~Ian Munro.
\newblock On the competitiveness of linear search.
\newblock In Mike Paterson, editor, {\em Algorithms - {ESA} 2000, 8th Annual
  European Symposium, Saarbr{\"{u}}cken, Germany, September 5-8, 2000,
  Proceedings}, volume 1879 of {\em Lecture Notes in Computer Science}, pages
  338--345. Springer, 2000.
\newblock URL: \url{https://doi.org/10.1007/3-540-45253-2\_31}, \href
  {http://dx.doi.org/10.1007/3-540-45253-2\_31}
  {\path{doi:10.1007/3-540-45253-2\_31}}.

\bibitem{ST85}
Daniel~Dominic Sleator and Robert~Endre Tarjan.
\newblock Self-adjusting binary search trees.
\newblock {\em J. ACM}, 32(3):652--686, July 1985.
\newblock URL: \url{http://doi.acm.org/10.1145/3828.3835}, \href
  {http://dx.doi.org/10.1145/3828.3835} {\path{doi:10.1145/3828.3835}}.

\bibitem{Multisplay}
Chengwen~Chris Wang, Jonathan Derryberry, and Daniel~Dominic Sleator.
\newblock O(log log n)-competitive dynamic binary search trees.
\newblock In {\em Proceedings of the Seventeenth Annual ACM-SIAM Symposium on
  Discrete Algorithm}, SODA '06, pages 374--383, Philadelphia, PA, USA, 2006.
  Society for Industrial and Applied Mathematics.
\newblock URL: \url{http://dl.acm.org/citation.cfm?id=1109557.1109600}.

\bibitem{Wil89}
R.~Wilber.
\newblock Lower bounds for accessing binary search trees with rotations.
\newblock {\em SIAM J. Comput.}, 18(1):56--67, February 1989.
\newblock URL: \url{http://dx.doi.org/10.1137/0218004}, \href
  {http://dx.doi.org/10.1137/0218004} {\path{doi:10.1137/0218004}}.

\end{thebibliography}

\end{document}